\newif\ifieee
\def\showauthornotes{1}
\newtheorem{theorem}{Theorem}[section]
\newtheorem{corollary}[theorem]{Corollary}
\newtheorem{lemma}[theorem]{Lemma}
\newtheorem{proposition}[theorem]{Proposition}
\newtheorem{definition}[theorem]{Definition}
\newtheorem{claim}{Claim} 
\newtheorem{fact}[theorem]{Fact}
\newcommand{\qedmanual}{\hfill\ensuremath{\square}}
\DeclareMathOperator{\supp}{supp}
\DeclareMathOperator{\poly}{poly}
\newcommand{\cT}{\mathcal{T}}
\newcommand{\cS}{\mathcal{S}}
\newcommand{\cL}{\mathcal{L}}
\newcommand{\cW}{\mathcal{W}}
\newcommand{\cWf}{\cW}
\newcommand{\bR}{\mathbb{R}}
\newcommand{\one}{\mathbbm{1}}
\newcommand{\ab}[1]{\left<{#1}\right>} 
\newcommand{\rb}[1]{\left( #1 \right)} 
\newcommand{\cb}[1]{\left\{ #1 \right\} } 
\newcommand{\abs}[1]{\left| #1 \right|} 
\newcommand{\Authornote}[2]{{\sffamily\small\color{red}{[#1: #2]}}}
\newcommand{\Authornotecolored}[3]{{\sffamily\small\color{#1}{[#2: #3]}}}
\newcommand{\Authorcomment}[2]{{\sffamily\small\color{gray}{[#1: #2]}}}
\newcommand{\Authorstartcomment}[1]{\sffamily\small\color{gray}[#1: }
\newcommand{\Authorfnote}[2]{\footnote{\color{red}{#1: #2}}}
\newcommand{\Authorfixme}[1]{\Authornote{#1}{\textbf{??}}}
\newcommand{\Authormarginmark}[1]{\marginpar{\textcolor{red}{\fbox{\Large #1:!}}}}
\newcommand{\Authornote}[2]{}
\newcommand{\Authornotecolored}[3]{}
\newcommand{\Authorcomment}[2]{}
\newcommand{\Authorstartcomment}[1]{}
\newcommand{\Authorfnote}[2]{}
\newcommand{\Authorfixme}[1]{}
\newcommand{\Authormarginmark}[1]{}
\DeclareMathOperator{\PM}{PM}
\DeclareMathOperator{\argmin}{argmin}
\DeclareMathOperator{\Span}{span}
\DeclareMathOperator{\NW}{NW}
\DeclareMathOperator{\incoming}{in}
\DeclareMathOperator{\outgoing}{out}
\DeclareMathOperator{\lcm}{lcm}
\DeclareMathOperator{\mymod}{mod}
\newcommand{\suppo}{E}
\newcommand{\tight}{\mathcal{S}}
\newcommand{\Fmid}{F_{\textsf{\tiny mid}}}
\newcommand{\wmid}{w_{\textsf{\tiny mid}}}
\newcommand{\Fin}{\ensuremath{F_{\textsf{\tiny in}}}}
\newcommand{\Lin}{\ensuremath{\cL_{\textsf{\tiny in}}}}
\newcommand{\Fout}{\ensuremath{F_{\textsf{\tiny out}}}}
\newcommand{\Lout}{\ensuremath{\cL_{\textsf{\tiny out}}}}
\newcommand{\wout}{\ensuremath{w_{\textsf{\tiny out}}}}
\newcommand{\cM}{\mathcal{M}}
\newcommand{\bZ}{\mathbb{Z}}
\newcommand{\cY}{\mathcal{Y}}
\newcommand{\cZ}{\mathcal{Z}}
\newcommand{\pmind}[1]{(\pm \one)_{#1}}
\newcommand{\facemin}[2]{\ensuremath{#1 [ #2 ]}}
\newcommand{\quasinc}{\text{\sf quasi-}{\sf NC}}
\newcommand{\rnc}{{\sf RNC}}
\newcommand{\nc}{{\sf NC}}
\begin{document}

\ifieee
	\title{The Matching Problem in General Graphs is in Quasi-NC}
	\author{\IEEEauthorblockN{Ola Svensson}
	\IEEEauthorblockA{
	EPFL \\
	Lausanne, Switzerland \\
	Email: ola.svensson@epfl.ch}
	\and
	\IEEEauthorblockN{Jakub Tarnawski}
	\IEEEauthorblockA{
	EPFL \\
	Lausanne, Switzerland \\
	Email: jakub.tarnawski@epfl.ch}
	}
	
	\maketitle
\else
	\title{The Matching Problem in General Graphs is in Quasi-NC}
	\author{
	Ola Svensson\thanks{School of Computer and Communication Sciences, EPFL. \newline
	Email: \texttt{\{ola.svensson,jakub.tarnawski\}@epfl.ch}. \newline
	Supported by ERC Starting Grant 335288-OptApprox.}
	\and
	Jakub Tarnawski\footnotemark[1]
	}
	
	\date{\today}
	\clearpage\maketitle
	\thispagestyle{empty}
\fi

\begin{abstract}
  We show that the perfect matching problem in general graphs is in
  $\quasinc$.  That is, we give a deterministic parallel algorithm which runs
  in $O(\log^3 n)$ time on $n^{O(\log^2 n)}$ processors.  The result is
  obtained by a derandomization of the Isolation Lemma for perfect matchings,
  which was introduced in the classic paper by  Mulmuley, Vazirani and
  Vazirani [1987] to obtain a Randomized $\nc$ algorithm.
  
  Our proof extends the framework of Fenner,
  Gurjar and Thierauf [2016], who proved the analogous result in the special
  case of bipartite graphs. Compared to that setting, several new ingredients are
  needed due to the significantly more complex structure of perfect
  matchings in general graphs. In particular, our proof heavily relies on the
  laminar structure of the faces of the perfect matching polytope. 
\end{abstract}

\ifieee
  This is an extended abstract.
  The full version of the paper, which includes all proofs, may be found at
  \url{https://arxiv.org/abs/1704.01929}.
\fi

\ifieee
\else
	\clearpage
\fi

\section{Introduction} \label{sec:intro}

The perfect matching problem is a fundamental question in graph theory.
Work on this problem has contributed to the development of many core concepts of modern computer science,
including linear-algebraic, probabilistic and parallel algorithms.
Edmonds~\cite{edm65matching} was the first to give a polynomial-time algorithm for it.
However, half a century later, we still do not have full understanding of the deterministic parallel complexity of the perfect matching problem.
In this paper we make progress in this direction.

We consider a problem to be efficiently solvable in parallel if it has an algorithm which uses polylogarithmic time and polynomially many processors.
More formally, a problem is in the class $\nc$ if it has uniform circuits of polynomial size and polylogarithmic depth.
The class $\rnc$ is obtained if we also allow randomization.

We study the decision version of the problem: given an undirected simple graph, determine whether it has a perfect matching --
and the search version: find and return a perfect matching if one exists.
The decision version was first shown to be in $\rnc$ by Lovász~\cite{Lovasz79}.
The search version has proved to be more difficult and it was found to be in $\rnc$ several years later by Karp, Upfal and Wigderson~\cite{KarpUW86} and Mulmuley, Vazirani and Vazirani~\cite{MulmuleyVV87}.
All these algorithms are randomized, and it remains a major open problem to determine whether randomness is required, i.e., whether either version is in $\nc$.
\ifieee\else
Intuitively, one difficulty is that
a graph may contain super-polynomially many perfect matchings
and it is necessary to somehow coordinate the processors so that they locate the same one. 
Our vocabulary for this is that we want to \emph{isolate} one matching.
\fi

A successful approach to the perfect matching problem has been the linear-algebraic one.
It involves the {Tutte matrix} associated with a graph $G = (V,E)$, which is a $|V| \times |V|$
matrix defined as follows (see \cref{fig:tutte} for an example):

\begin{figure}
	\centering
	\begin{tabular}{cc}
	\begin{minipage}{\ifieee 0.12\textwidth \else 0.2\textwidth \fi}
		\begin{tikzpicture}
		  [scale=1.2,every node/.style={circle,fill=gray!40}]
		  \node (n3) at (0,0) {3};
		  \node (n1) at (0,1) {1};
		  \node (n4) at (1,0) {4};
		  \node (n2) at (1,1) {2};
		  \foreach \from/\to in {n1/n2,n2/n4,n3/n4,n4/n1,n1/n3}
	    	\draw (\from) -- (\to);
		\end{tikzpicture}
	\end{minipage}
		&
		$T(G) = \begin{pmatrix}
			0 & X_{12} & X_{13} & X_{14}\\
			-X_{12} & 0 & 0 & X_{24}\\
			-X_{13} & 0 & 0 & X_{34}\\
			-X_{14} & -X_{24} & -X_{34} & 0
		\end{pmatrix}$
	\end{tabular}
	\caption{Example of a Tutte matrix of an undirected graph.}
	\label{fig:tutte}
\end{figure}
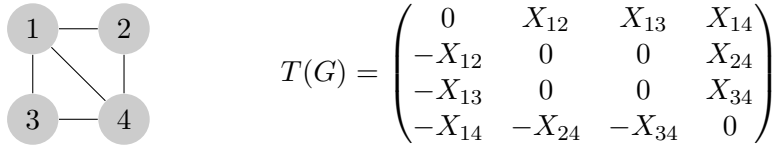

\[ T(G)_{u,v} = \begin{cases} X_{(u,v)} & \text{if $(u,v) \in E$ and $u < v$}, \\ -X_{(v,u)} & \text{if $(u,v) \in E$ and $u > v$}, \\ 0 & \text{if $(u,v) \not \in E$}, \end{cases} \]
where $X_{(u,v)}$ for $(u,v) \in E$ are variables.
Tutte's theorem~\cite{Tutte} says that $\det T(G) \ne 0$ if and only if $G$ has a perfect matching.
This is great news for parallelization, as computing determinants is in $\nc$~\cite{Csanky76,Berkowitz1984}.
However, the matrix is defined over a ring of indeterminates, so randomness is normally used in order to test if the determinant is nonzero.
One approach is to replace each indeterminate by a random value from a large field.
This leads, among others, to the fastest known (single-processor) running times for dense graphs~\cite{MuchaS04,Harvey09}.

\newcommand{\specialm}{M} 
A second approach, adopted by Mulmuley, Vazirani and Vazirani~\cite{MulmuleyVV87} for the search version, is to replace the indeterminates by randomly chosen powers of two.
Namely, for each edge $(u,v)$, a random weight $w(u,v) \in \{1, 2, ..., 2|E|\}$ is selected, and we substitute $X_{(u,v)} := 2^{w(u,v)}$.
Now, let us make the crucial assumption that one perfect matching $\specialm$ is \textbf{isolated},
in the sense that it is the \emph{unique minimum-weight perfect matching} (minimizing $w(M)$).
Then $\det T(G)$ remains nonzero after the substitution:
one can show that
$\specialm$ contributes a term $\pm 2^{2w(\specialm)}$ to $\det T(G)$, whereas
all other terms are
multiples of $2^{2w(\specialm) + 1}$
and thus they cannot cancel $2^{2w(\specialm)}$ out.
The determinant can still be computed in $\nc$ as all entries $2^{w(u,v)}$ of the matrix are of polynomial bit-length,
and so we have a parallel algorithm for the decision version.
An algorithm for the search version also follows:
for every edge in parallel, test whether removing it causes this least-significant digit $2^{2w(M)}$ in the determinant to disappear;
output those edges for which it does.

The fundamental claim in~\cite{MulmuleyVV87} is that assigning random weights to edges does indeed isolate one matching with high probability.
This is known as the Isolation Lemma and turns out to be true in the much more general setting of arbitrary set families:

\begin{lemma}[Isolation Lemma] \label{isolation_lemma}
Let $\cM \subseteq 2^E$ be any nonempty family of subsets of a universe $E = \{1, 2, ..., |E|\}$.
Suppose we define a weight function $w : E \to \{1, 2, ..., 2|E|\}$
by selecting each $w(e)$ for $e \in E$ independently and uniformly at random.
Then with probability at least $1/2$, there is a \emph{unique} set $M \in \cM$ which minimizes the weight $w(M) = \sum_{e \in M} w(e)$.
\end{lemma}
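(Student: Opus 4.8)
The standard approach is the "union bound over bad events" argument, where the key notion is that of an element being *ambiguous*. The plan is to fix all the weights except that of a single element $e \in E$, and ask for how many values of $w(e)$ the element $e$ can be "on the fence" between being in some minimum-weight set and not being in any. Concretely, I would argue as follows. Condition on the weights $w(e')$ of all elements $e' \ne e$. Among all sets in $\cM$ that do \emph{not} contain $e$, let $\alpha$ be the minimum weight (set $\alpha = +\infty$ if no such set exists); among all sets in $\cM$ that \emph{do} contain $e$, let $\beta$ be the minimum weight \emph{of the set with $e$ removed}, i.e., $\beta = \min_{M \in \cM, e \in M} \big( w(M) - w(e) \big)$ (again $+\infty$ if none). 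Both $\alpha$ and $\beta$ are determined by the conditioned weights alone and do not depend on $w(e)$. Now observe: if $w(e) < \alpha - \beta$ then every minimum-weight set of $\cM$ contains $e$; if $w(e) > \alpha - \beta$ then no minimum-weight set contains $e$. Hence $e$ can be "ambiguous" — meaning some minimum-weight set contains it and some other does not — only when $w(e) = \alpha - \beta$, a single value. Since $w(e)$ is uniform over a range of $2|E|$ values and independent of everything we conditioned on, the probability that $e$ is ambiguous is at most $1/(2|E|)$.

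The second step is to deduce uniqueness from non-ambiguity. I claim that if \emph{no} element of $E$ is ambiguous, then $\cM$ has a unique minimum-weight set. Indeed, suppose $M_1 \ne M_2$ are both minimum-weight sets; pick $e$ in the symmetric difference, say $e \in M_1 \setminus M_2$. Then $M_1$ is a minimum-weight set containing $e$ and $M_2$ is a minimum-weight set not containing $e$, so $e$ is ambiguous — contradiction. Therefore
\[
\Pr[\text{minimum-weight set is not unique}] \;\le\; \Pr[\exists e \in E \text{ ambiguous}] \;\le\; \sum_{e \in E} \frac{1}{2|E|} \;=\; \frac{1}{2},
\]
by the union bound, which gives the claimed probability of at least $1/2$ for uniqueness.

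The only delicate point — and the part I'd be most careful about — is the independence used in the first step: when bounding $\Pr[w(e) = \alpha - \beta]$, we need that $w(e)$ is independent of the pair $(\alpha, \beta)$. This holds because $\alpha$ and $\beta$ are functions of $\{w(e') : e' \ne e\}$ only, and the weights are drawn independently, so conditioning on those other weights fixes $\alpha - \beta$ to a constant while $w(e)$ remains uniform on $\{1, \dots, 2|E|\}$; the probability of hitting that one constant is at most $1/(2|E|)$ regardless of its value (and is $0$ if the constant lies outside the range). Everything else — the two monotonicity observations about $w(e)$ versus $\alpha - \beta$, and the symmetric-difference argument — is elementary. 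I would present the monotonicity observations as a short self-contained lemma or inline claim, since they are the conceptual heart of the argument, and then assemble the union bound as above.
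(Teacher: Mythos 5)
Your proof is correct and is precisely the classical argument of Mulmuley, Vazirani and Vazirani: bound the probability that any single element is ``ambiguous'' via the threshold $\alpha-\beta$ (which is independent of $w(e)$), then union-bound over elements and note that non-ambiguity of every element forces uniqueness. The paper does not reprove this lemma --- it cites \cite{MulmuleyVV87} --- so there is nothing to compare beyond noting that your write-up, including the careful handling of the independence of $w(e)$ from $(\alpha,\beta)$ and the degenerate $+\infty$ cases, is complete and matches the standard proof.
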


We call such a weight function $w$ \emph{isolating}.
We take $\cM$ in \cref{isolation_lemma} to be the set of all perfect matchings.

Since \cref{isolation_lemma} is the only randomized ingredient of the $\rnc$ algorithm,
a natural approach to showing that the perfect matching problem is in $\nc$
is the derandomization of the Isolation Lemma.
That is, we would like a set of polynomially many weight functions
(with polynomially bounded values)
which would be guaranteed to contain an isolating one.
To get an $\nc$ algorithm, we should be able to generate this set efficiently in parallel; then we can try all weight functions simultaneously.

However, derandomizing the Isolation Lemma turns out to be a challenging open question.
It has been done for certain classes of graphs:
strongly chordal~\cite{DahlhausK98},
planar bipartite~\cite{DattaKR10,TewariV12},
or
graphs with a small number of perfect matchings~\cite{GrigorievK87,AgrawalHT07}. 
	More generally, there has been much interest in obtaining $\nc$ algorithms for the perfect matching problem on
	restricted graph classes
	(not necessarily using the Isolation Lemma),
	e.g.:
	regular bipartite \cite{LevPV81},
	$P_4$-tidy \cite{Parfenoff98},
	dense \cite{DahlhausHK93},
	convex bipartite \cite{DekelS84},
	claw-free \cite{ChrobakNN89},
	incomparability graphs \cite{KozenVV85}.
The general set-family setting of the Isolation Lemma
is also related to circuit lower bounds
and polynomial identity testing~\cite{ArvindM08}.

Recently, in a major development, Fenner, Gurjar and Thierauf~\cite{FennerGT16}
have almost derandomized the Isolation Lemma for bipartite graphs.
Namely, they define a family of weight functions which can be computed obliviously
(only using the number of vertices $n$)
and prove that for any bipartite graph, one of these functions is isolating.
Because their family has quasi-polynomial size and the weights are quasi-polynomially large,
this has placed the perfect bipartite matching problem in the class $\quasinc$.

Nevertheless,
the
general-graph
setting of the derandomization question
(either using the Isolation Lemma or not)
remained open.
Even in the planar case,
with $\nc$ algorithms for bipartite planar and small-genus graphs having been known for a long time \cite{MillerN89,MahajanV00}, 
we knew no $\quasinc$ algorithm for non-bipartite graphs. \ifieee\else
	Curiously, an $\nc$ algorithm to count the number of perfect matchings in a planar graph is known \cite{Kasteleyn67,Vazirani89},
	which implies an algorithm for the decision version.
	However, it is open to give an $\nc$ algorithm for the search version.
\fi
In general,
the best known upper bound 
on the size of uniform circuits with polylogarithmic depth
was exponential.

We are able to
nearly bridge this gap in understanding.
The main result of our paper is the following:

\begin{theorem} \label{mainer}
For any number $n$, we can in $\quasinc$ construct $n^{O(\log^2 n)}$ weight functions on $\{1, 2, ..., {n \choose 2}\}$ with weights bounded by $n^{O(\log^2 n)}$
such that for any graph on $n$ vertices,
one of these weight functions isolates a perfect matching (if one exists).
\end{theorem}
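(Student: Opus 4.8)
Our plan is to follow the round‑based derandomization strategy of Fenner--Gurjar--Thierauf, but organized around the geometry of the perfect matching polytope $\PM(G)$ rather than around subgraphs. For a weight function $w$, the minimum‑weight perfect matchings of $G$ form a face $F_w$ of $\PM(G)$, and $w$ isolates a perfect matching precisely when $\dim F_w = 0$. We build the final weight function as $w = \sum_{i=0}^{k} N^{i}\, w_i$ for a large scaling factor $N = \poly(n)$ and weight functions $w_i$ drawn from obliviously constructed families $\cW_i$; ranging over all choices of the $w_i$ produces the collection in the theorem, of size $\prod_i |\cW_i|$. The invariant is that after incorporating $w_0,\dots,w_i$ the face $F^{(i)}$ of minimum‑weight perfect matchings has small complexity $\mu(F^{(i)})$, where $\mu$ is a polynomially bounded potential with $\mu(F)=0$ forcing $\dim F = 0$. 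Given $F^{(i)}$ we exhibit a family $\cW_{i+1}$ such that for at least one $w_{i+1}\in\cW_{i+1}$ the refined face $F^{(i+1)} = \{x\in F^{(i)} : w_{i+1}(x) \text{ minimal}\}$ satisfies $\mu(F^{(i+1)}) \le \mu(F^{(i)})/2$; a potential‑function argument in the style of FGT then shows $O(\log n)$ rounds suffice to reach $\mu=0$. Each $\cW_i$ turns out to be of quasi‑polynomial size $n^{O(\log n)}$ with quasi‑polynomially bounded weights, so the overall collection has size and weights $n^{O(\log^2 n)}$, and constructing it reduces to enumerating small primes and exponents, which is in $\nc$.

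The structural input replacing the ``subgraph of usable edges'' of the bipartite case is the laminar description of a face: every face $F$ of $\PM(G)$ has the form $F = \{x\in\PM(G) : x_e = 0\ \forall e\in D,\ x(\delta(S)) = 1\ \forall S\in\cL\}$ for a set $D$ of forbidden edges and a \emph{laminar} family $\cL$ of odd vertex sets (this is the polytope version of the Edmonds--Lovász--Pulleyblank tight‑cut decomposition of the matching‑covered graph spanned by edges used by minimum‑weight perfect matchings; it follows by uncrossing tight odd‑set inequalities). Two perfect matchings of $F$ differ by a disjoint union of alternating cycles, and --- this is the point --- each such alternating cycle, after contracting the odd sets of $\cL$, projects to a cycle avoiding $D$ in one of the \emph{contracted graphs} $\bar G_S$: for $S\in\cL$, contract the maximal members of $\cL$ strictly inside $S$, and also include the top‑level graph $\bar G_V$ obtained by contracting the maximal members of $\cL$. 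Hence $\dim F > 0$ iff some $\bar G_S$ contains an alternating cycle of zero $w$‑circulation, so it suffices to force every sufficiently short alternating cycle, simultaneously across the whole hierarchy $\{\bar G_S\}_{S\in\cL}$, to have nonzero circulation. We do this by the usual hashing: with weights $w(e_j) = a^{j}\bmod p$ (with $a$ ranging over a quasi‑polynomial interval, $p$ over small primes, plus a couple of auxiliary parameters), the circulation of a fixed short cycle is a low‑degree nonzero polynomial in $a$, hence has few roots mod $p$; a union bound over the short alternating cycles of the current hierarchy shows that one member of $\cW_{i+1}$ works, and an FGT‑style exchange argument shows that killing these cycles shrinks the contracted graphs and hence decreases $\mu$.

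The main obstacle --- and the place where the general case goes substantially beyond the bipartite argument of FGT --- is that passing from $F$ to a refined face $F'$ can \emph{reorganize} the laminar structure: the family $\cL'$ of $F'$ need neither refine $\cL$ nor be refined by it, and entirely new tight odd sets cutting across the old hierarchy can appear, so one cannot simply induct ``inside each brick.'' The remedy is to choose the potential $\mu$ robustly enough --- essentially the total number of non‑forbidden edges counted across all contracted graphs of the hierarchy, a quantity dominating $\dim F$ --- and to prove that killing all short alternating cycles at every level makes $\mu$ drop no matter how $\cL$ is rearranged. Establishing this monotonicity (the potential‑halving lemma), via the uncrossing and exchange properties of tight cuts together with the brick--brace decomposition, is the technical heart of the argument. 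Granting it, the $O(\log n)$‑round iteration, the parallel construction of the hashing families, and the determinant‑based extraction of a matching (exactly as in the randomized algorithm sketched above) are routine, placing everything in $\quasinc$.
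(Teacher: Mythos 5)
Your high-level architecture (rounds of oblivious hashing weight functions, faces of the perfect matching polytope, laminar families of tight odd cuts, contracted graphs) matches the paper's, but the step you defer as the ``potential-halving lemma'' is not a technicality -- it is the entire content of the theorem, and the specific potential you propose does not work. You take $\mu$ to be essentially the number of non-forbidden edges summed over the contracted graphs and claim that forcing short alternating cycles to have nonzero circulation makes $\mu$ drop. This is exactly what fails in the non-bipartite case: when a cycle is given nonzero circulation and we pass to the minimizing subface, it can happen that \emph{every} edge of the cycle survives in the support of the new face; the only thing that changes is that a new tight odd-set constraint appears which the cycle crosses ``unevenly'' (the paper's left half of \cref{fig:difficulty} is precisely this example). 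So your edge-counting potential need not decrease at all, and no union bound over cycles can rescue it. The correct progress measure must be built out of the new tight odd-sets themselves. The paper's resolution is the notion of a tight set being \emph{contractible} (fixing the unique boundary edge of a matching determines the matching inside the set) and the measure of $\lambda$-goodness: all laminar tight sets of size at most $\lambda$ are contractible \emph{and} the contraction has no alternating circuit of node-weight at most $\lambda$. Doubling $\lambda$ requires a separate divide-and-conquer induction over chains of tight sets of size between $\lambda$ and $2\lambda$ (which is where the inner $\log n$ factor, and hence the exponent $\log^2 n$, actually comes from); your proposal has no mechanism that produces contractibility of large tight sets.

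Two further points. First, you treat the possible ``reorganization'' of the laminar family as an obstacle to be absorbed by robustness of $\mu$, but the key structural fact goes the other way and must be exploited, not worked around: tight constraints of a face remain tight for every subface, so a maximal laminar family for the old face extends to one for the new face ($\cL' \supseteq \cL$), and by \cref{face_structure} the extension spans all new tight constraints. This monotonicity is what lets one argue that a circuit which fails to respect the new face is genuinely absent from the new contraction (either an edge left the support, or some \emph{added} laminar set cuts it unevenly, or a large set's boundary is erased). Second, your union bound over ``short alternating cycles of the current hierarchy'' presupposes that there are only quasi-polynomially many of them; this is false without the inductive hypothesis that strictly shorter circuits have already been eliminated, and even then the counting must be done for alternating \emph{circuits} (closed walks with repeated edges, identified by their alternating indicator vectors) in a node-weighted multigraph, because the symmetric difference of two even cycles in a non-bipartite graph need not contain an even simple cycle. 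These are not routine details; without them the round-by-round union bound has no basis.
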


The results of~\cite{MulmuleyVV87} and~\cref{mainer} together imply that the
perfect matching problem (both the decision and the search variant) in general graphs is in $\quasinc$. 
See \cref{sec:complexity} for more details on this.
We remark that the implied
algorithm is very simple.  The complexity lies in the analysis, i.e., proving
that one of the weight functions is isolating (see \cref{exists_isolating}).

In what
follows, we first give an overview of the framework in~\cite{FennerGT16} for bipartite
graphs.  We then explain how we extend the framework to general graphs. Due to the more complex structure of perfect matchings
in general graphs, we need several new ideas. In particular, we exploit structural properties of the perfect matching polytope.

\subsection{Isolation in bipartite graphs} \label{sec:bipartite}

In this section we shortly discuss the elegant framework introduced by Fenner, Gurjar and Thierauf~\cite{FennerGT16}, which we extend to obtain our result.

If a weight function $w$ is \emph{not} isolating,
then there exist two minimum-weight perfect matchings,
and their symmetric difference consists of alternating cycles.
In each such cycle, the total weight of edges from the first matching
must be equal to the total weight of edges from the second matching
(as otherwise we could obtain another matching of lower weight).
The difference between these two total weights is called the \emph{circulation} of the cycle.
By the above, if all cycles have nonzero circulation,
then $w$ is isolating.
It is known how to obtain weight functions which satisfy a polynomial number of such non-equalities (see \cref{lem23}). However, 
a graph may have an exponential number of cycles.

\ifieee
\ifieee
\begin{figure*}
\begin{center}
\else
\begin{figure}[t!]
\fi
  \begin{tikzpicture}
    \tikzstyle{vertex}=[circle, fill=black, minimum size=2,inner sep=1pt]

     \draw[rounded corners=5pt] (-0.5,-0.5) rectangle (8.7,1.5);
     \node at (4.2, -1) {Main difficulty};
    \node[vertex](u1) at (0,0) {};
    \node[vertex](u2) at (0.7,0.5) {};
    \node[vertex](u3) at (0,1) {};

    \node[vertex](v1) at (3,0) {};
    \node[vertex](v2) at (2.3,0.5) {};
    \node[vertex](v3) at (3,1) {};
    \draw (u1) edge[decorate,decoration={snake,amplitude=.3mm,segment length=4pt,post length=0mm}, thick] (v1);
    \draw (u2) edge[decorate,decoration={snake,amplitude=.3mm,segment length=4pt,post length=0mm}, thick] (v2);
    \draw (u3) edge (v3);
    \draw (u1) edge[thick] (u2) edge (u3);
    \draw (u2) edge (u3);
    \draw (v1) edge[thick] (v2) edge (v3);
    \draw (v2) edge (v3);

    \node at (3.9,0.75) {\tiny select $w$};
    \node at (3.9,0.5) {$\Longrightarrow$};
    \begin{scope}[xshift=5cm]
      \draw[fill = gray!30!white] (0.3, 0.5) ellipse (0.6cm and 0.85cm);
      \node[vertex](u1) at (0,0) {};
      \node[vertex](u2) at (0.7,0.5) {};
      \node[vertex](u3) at (0,1) {};

      \draw[fill = gray!30!white] (2.7, 0.5) ellipse (0.6cm and 0.85cm);
      \node[vertex](v1) at (3,0) {};
      \node[vertex](v2) at (2.3,0.5) {};
      \node[vertex](v3) at (3,1) {};
      \draw (u1) edge[decorate,decoration={snake,amplitude=.3mm,segment length=4pt,post length=0mm}, thick] node[fill=white,inner sep=1pt] {\scriptsize $1$} (v1);
      \draw (u2) edge[decorate,decoration={snake,amplitude=.3mm,segment length=4pt,post length=0mm}, thick] node[fill=white,inner sep=1pt] {\scriptsize $1$}(v2);
      \draw (u3) edge node[fill=white,inner sep=1pt] {\scriptsize $1$}(v3);
      \draw (u1) edge[thick] node[fill=gray!30!white,inner sep=1pt] {\scriptsize $0$}(u2) edge node[fill=gray!30!white, inner sep =1pt] {\scriptsize $0$}(u3);
      \draw (u2) edge node[fill=gray!30!white,inner sep=1pt] {\scriptsize $0$}(u3);
      \draw (v1) edge[thick] node[fill=gray!30!white,inner sep=1pt] {\scriptsize $0$}(v2) edge node[fill=gray!30!white, inner sep =1pt] {\scriptsize $0$}(v3);
      \draw (v2) edge node[fill=gray!30!white,inner sep=1pt] {\scriptsize $0$}(v3);
    \end{scope}
     \draw[rounded corners=5pt] (9.1,-0.5) rectangle (15.4,1.5);
     \node at (12.2, -1) {Minor difficulty};
    \begin{scope}[xshift=10cm]
      \node[vertex] (c1) at (0,1) {};
      \node[vertex] (c2) at (1,1) {};
      \node[vertex] (e1) at (-0.5,0.5) {};
      \node[vertex] (c3) at (1,0) {};
      \node[vertex] (c4) at (0,0) {};
      \node[vertex] (e2) at (1.5,0.5) {};
      \draw (c2) edge (c3);
      \draw (c1) edge[bend right=20] (c2);
      \draw (c3) edge[bend right=20] (c4);
      \draw (c1) edge[bend left=20,decorate,decoration={snake,amplitude=.3mm,segment length=3pt,post length=0mm}] (c2);
      \draw (c3) edge[bend left=20,decorate,decoration={snake,amplitude=.3mm,segment length=3pt,post length=0mm}] (c4);
      \draw (c4) edge (c1);
      \draw (e1) edge[decorate,decoration={snake,amplitude=.3mm,segment length=3pt,post length=0mm}] (c4) edge[decorate,decoration={snake,amplitude=.3mm,segment length=4pt,post length=0mm}] (c1);
      \draw (e2) edge[decorate,decoration={snake,amplitude=.3mm,segment length=3pt,post length=0mm}] (c3) edge[decorate,decoration={snake,amplitude=.3mm,segment length=4pt,post length=0mm}] (c2);
    \end{scope}
    \node at (12.3,0.75) {\tiny $C_1 \Delta C_2$};
    \node at (12.3,0.5) {$\Longrightarrow$};
    \begin{scope}[xshift=13.5cm]
      \node[vertex] (c1) at (0,1) {};
      \node[vertex] (c2) at (1,1) {};
      \node[vertex] (e1) at (-0.5,0.5) {};
      \node[vertex] (c3) at (1,0) {};
      \node[vertex] (c4) at (0,0) {};
      \node[vertex] (e2) at (1.5,0.5) {};
      \draw (c2) edge (c3);
      \draw (c4) edge (c1);
      \draw (e1) edge[decorate,decoration={snake,amplitude=.3mm,segment length=3pt,post length=0mm}] (c4) edge[decorate,decoration={snake,amplitude=.3mm,segment length=4pt,post length=0mm}] (c1);
      \draw (e2) edge[decorate,decoration={snake,amplitude=.3mm,segment length=3pt,post length=0mm}] (c3) edge[decorate,decoration={snake,amplitude=.3mm,segment length=4pt,post length=0mm}] (c2);
    \end{scope}
  \end{tikzpicture}
  \caption{An illustration of the difficulties of derandomizing the Isolation Lemma for general graphs as compared to bipartite graphs.
  \\
  On the left: in trying to remove the bold cycle, we select a weight function $w$ such that the circulation of the cycle is $1-0+1-0 \ne 0$.
  By minimizing over $w$ we obtain a new, smaller subface -- the convex hull of perfect matchings of weight $1$ -- but every edge of the cycle is still present in one of these matchings.
  The cycle has only been eliminated in the following sense:
  it can no longer be obtained in the symmetric difference of two matchings in the new face
  (since none of them select both swirly edges).
  The vertex sets drawn in gray
  represent the new tight odd-set constraints that describe the new face
  (indeed: for a matching to have weight $1$, it must take only one edge from the boundary of a gray set).
  We will say that the cycle does not \emph{respect} the gray vertex sets (see \cref{sec:alternating}).
  \\
  On the right: two even cycles whose symmetric difference contains no even cycle.
  }
  \label{fig:difficulty}
\ifieee
\end{center}
\end{figure*}
\else
\end{figure}
\fi
\else
\fi
A key idea of~\cite{FennerGT16} is to build the weight function in $\log n$ rounds.
In the first round, we find a weight function with the property
that each cycle \emph{of length 4} has nonzero circulation.
This is possible since there are at most $n^4$ such cycles.
We apply this function and from now on consider only those edges
which belong to a minimum-weight perfect matching.
Crucially, it turns out that in the subgraph obtained this way,
all cycles of length $4$ have disappeared
-- this follows from the simple structure of the bipartite perfect matching polytope (a face is simply the bipartite matching polytope of a subgraph)
and fails to hold for general graphs.
In the second round, we start from this subgraph
and apply another weight function which ensures that
all even cycles of length up to $8$ have nonzero circulation
(one proves that there are again $\leq n^4$ many since the graph contains no $4$-cycles).
Again, these cycles disappear from the next subgraph, and so on.
After $\log n$ rounds, the current subgraph has no cycles, i.e., it is a perfect matching.
The final weight function is obtained by combining the $\log n$ polynomial-sized weight functions.
To get a parallel algorithm,
we need to simultaneously try each such possible combination,
of which there are quasi-polynomially many.

This result has later been generalized
by Gurjar and Thierauf~\cite{GurjarT16}
to the linear matroid intersection problem
-- a natural extension of bipartite matching.
From the work of Narayanan, Saran and Vazirani~\cite{NarayananSV94},
who gave an $\rnc$ algorithm for that problem
(also based on computing a determinant),
it again follows that
derandomizing the Isolation Lemma implies a $\quasinc$ algorithm.

\subsection{Challenges of non-bipartite graphs} \label{sec:challenges}

We find it useful to look at the method explained in the previous section
from a polyhedral perspective (also used by \cite{GurjarT16}).
We begin from the set of all perfect matchings, of which we take the convex hull: the perfect matching polytope.
After applying the first weight function, we want to consider only those perfect matchings which minimize the weight;
this is exactly the definition of a face of the polytope.
In the bipartite case, any face was characterized by just taking a subset of edges
(i.e., making certain constraints $x_e \ge 0$ tight),
so we could simply think about recursing on a smaller subgraph.
This was used to show that
any cycle whose circulation has been made nonzero
will not retain all of its edges in the next subgraph.
The progress we made
in the bipartite case 
could be measured by the girth (the minimum length of a cycle) of the current subgraph,
which doubled as we moved from face to subface.
\ifieee
\else
\ifieee
\begin{figure*}
\begin{center}
\else
\begin{figure}[t!]
\fi
  \begin{tikzpicture}
    \tikzstyle{vertex}=[circle, fill=black, minimum size=2,inner sep=1pt]

     \draw[rounded corners=5pt] (-0.5,-0.5) rectangle (8.7,1.5);
     \node at (4.2, -1) {Main difficulty};
    \node[vertex](u1) at (0,0) {};
    \node[vertex](u2) at (0.7,0.5) {};
    \node[vertex](u3) at (0,1) {};

    \node[vertex](v1) at (3,0) {};
    \node[vertex](v2) at (2.3,0.5) {};
    \node[vertex](v3) at (3,1) {};
    \draw (u1) edge[decorate,decoration={snake,amplitude=.3mm,segment length=4pt,post length=0mm}, thick] (v1);
    \draw (u2) edge[decorate,decoration={snake,amplitude=.3mm,segment length=4pt,post length=0mm}, thick] (v2);
    \draw (u3) edge (v3);
    \draw (u1) edge[thick] (u2) edge (u3);
    \draw (u2) edge (u3);
    \draw (v1) edge[thick] (v2) edge (v3);
    \draw (v2) edge (v3);

    \node at (3.9,0.75) {\tiny select $w$};
    \node at (3.9,0.5) {$\Longrightarrow$};
    \begin{scope}[xshift=5cm]
      \draw[fill = gray!30!white] (0.3, 0.5) ellipse (0.6cm and 0.85cm);
      \node[vertex](u1) at (0,0) {};
      \node[vertex](u2) at (0.7,0.5) {};
      \node[vertex](u3) at (0,1) {};

      \draw[fill = gray!30!white] (2.7, 0.5) ellipse (0.6cm and 0.85cm);
      \node[vertex](v1) at (3,0) {};
      \node[vertex](v2) at (2.3,0.5) {};
      \node[vertex](v3) at (3,1) {};
      \draw (u1) edge[decorate,decoration={snake,amplitude=.3mm,segment length=4pt,post length=0mm}, thick] node[fill=white,inner sep=1pt] {\scriptsize $1$} (v1);
      \draw (u2) edge[decorate,decoration={snake,amplitude=.3mm,segment length=4pt,post length=0mm}, thick] node[fill=white,inner sep=1pt] {\scriptsize $1$}(v2);
      \draw (u3) edge node[fill=white,inner sep=1pt] {\scriptsize $1$}(v3);
      \draw (u1) edge[thick] node[fill=gray!30!white,inner sep=1pt] {\scriptsize $0$}(u2) edge node[fill=gray!30!white, inner sep =1pt] {\scriptsize $0$}(u3);
      \draw (u2) edge node[fill=gray!30!white,inner sep=1pt] {\scriptsize $0$}(u3);
      \draw (v1) edge[thick] node[fill=gray!30!white,inner sep=1pt] {\scriptsize $0$}(v2) edge node[fill=gray!30!white, inner sep =1pt] {\scriptsize $0$}(v3);
      \draw (v2) edge node[fill=gray!30!white,inner sep=1pt] {\scriptsize $0$}(v3);
    \end{scope}
     \draw[rounded corners=5pt] (9.1,-0.5) rectangle (15.4,1.5);
     \node at (12.2, -1) {Minor difficulty};
    \begin{scope}[xshift=10cm]
      \node[vertex] (c1) at (0,1) {};
      \node[vertex] (c2) at (1,1) {};
      \node[vertex] (e1) at (-0.5,0.5) {};
      \node[vertex] (c3) at (1,0) {};
      \node[vertex] (c4) at (0,0) {};
      \node[vertex] (e2) at (1.5,0.5) {};
      \draw (c2) edge (c3);
      \draw (c1) edge[bend right=20] (c2);
      \draw (c3) edge[bend right=20] (c4);
      \draw (c1) edge[bend left=20,decorate,decoration={snake,amplitude=.3mm,segment length=3pt,post length=0mm}] (c2);
      \draw (c3) edge[bend left=20,decorate,decoration={snake,amplitude=.3mm,segment length=3pt,post length=0mm}] (c4);
      \draw (c4) edge (c1);
      \draw (e1) edge[decorate,decoration={snake,amplitude=.3mm,segment length=3pt,post length=0mm}] (c4) edge[decorate,decoration={snake,amplitude=.3mm,segment length=4pt,post length=0mm}] (c1);
      \draw (e2) edge[decorate,decoration={snake,amplitude=.3mm,segment length=3pt,post length=0mm}] (c3) edge[decorate,decoration={snake,amplitude=.3mm,segment length=4pt,post length=0mm}] (c2);
    \end{scope}
    \node at (12.3,0.75) {\tiny $C_1 \Delta C_2$};
    \node at (12.3,0.5) {$\Longrightarrow$};
    \begin{scope}[xshift=13.5cm]
      \node[vertex] (c1) at (0,1) {};
      \node[vertex] (c2) at (1,1) {};
      \node[vertex] (e1) at (-0.5,0.5) {};
      \node[vertex] (c3) at (1,0) {};
      \node[vertex] (c4) at (0,0) {};
      \node[vertex] (e2) at (1.5,0.5) {};
      \draw (c2) edge (c3);
      \draw (c4) edge (c1);
      \draw (e1) edge[decorate,decoration={snake,amplitude=.3mm,segment length=3pt,post length=0mm}] (c4) edge[decorate,decoration={snake,amplitude=.3mm,segment length=4pt,post length=0mm}] (c1);
      \draw (e2) edge[decorate,decoration={snake,amplitude=.3mm,segment length=3pt,post length=0mm}] (c3) edge[decorate,decoration={snake,amplitude=.3mm,segment length=4pt,post length=0mm}] (c2);
    \end{scope}
  \end{tikzpicture}
  \caption{An illustration of the difficulties of derandomizing the Isolation Lemma for general graphs as compared to bipartite graphs.
  \\
  On the left: in trying to remove the bold cycle, we select a weight function $w$ such that the circulation of the cycle is $1-0+1-0 \ne 0$.
  By minimizing over $w$ we obtain a new, smaller subface -- the convex hull of perfect matchings of weight $1$ -- but every edge of the cycle is still present in one of these matchings.
  The cycle has only been eliminated in the following sense:
  it can no longer be obtained in the symmetric difference of two matchings in the new face
  (since none of them select both swirly edges).
  The vertex sets drawn in gray
  represent the new tight odd-set constraints that describe the new face
  (indeed: for a matching to have weight $1$, it must take only one edge from the boundary of a gray set).
  We will say that the cycle does not \emph{respect} the gray vertex sets (see \cref{sec:alternating}).
  \\
  On the right: two even cycles whose symmetric difference contains no even cycle.
  }
  \label{fig:difficulty}
\ifieee
\end{center}
\end{figure*}
\else
\end{figure}
\fi
\fi
Unfortunately,
in the non-bipartite case,
the description of the perfect matching polytope is more involved
(see \cref{sec:pmpolytope}).
Namely, moving to a new subface
may also cause new tight \emph{odd-set constraints}
to appear.
These, also referred to as odd cut constraints, require that, for an odd set $S \subseteq V$ of vertices,
exactly one edge of a matching should cross the cut defined by $S$.
This complicates our task, as depicted in  the left part of \cref{fig:difficulty} 
(the same example was  given by~\cite{FennerGT16} to demonstrate the difficulty of the general-graph case).
Now a face is described by not only a subset of edges,
but also a family of tight odd-set constraints.
Thus we can no longer guarantee that
any cycle whose circulation has been made nonzero
will disappear from the support of the new face, i.e., the set of edges that appear in at least one perfect matching in this face.  
Our idea of what it means to remove a cycle thus needs to be refined
(see \cref{sec:alternating}),
as well as the measure of progress we use to prove that
a single matching is isolated after $\log n$ rounds
(see \cref{sec:lambda-goodness}).
We need several new ideas,
which we outline in \cref{sec:ourapproach}.

Another difficulty, of a more technical nature, concerns the counting argument
used to prove that a graph with no cycles of length at most $\lambda$
contains only polynomially many cycles of length at most $2 \lambda$.
In the bipartite case, the symmetric difference of two even cycles
contains a simple cycle, which is also even. In addition, one can show that if the two cycles share many vertices, then the symmetric difference must contain one such even cycle that is short (of length at most $\lambda$) and thus should not exist. 
This enables a simple checkpointing argument  to bound the number of cycles of length at most $2\lambda$, assuming that no cycle of length at most $\lambda$ exists.
Now, in the general case we are still only interested in removing \emph{even} cycles,
but the symmetric difference of two even cycles may not contain an even simple cycle (see the right part of \cref{fig:difficulty}).
This forces us to remove not only even simple cycles,
but all even walks, which may contain repeated edges
(we call these \emph{alternating circuits} -- see \cref{def:alternating_circuit}),
and to rework the counting scheme, obtaining a bound of $n^{17}$ rather than $n^4$%
\ifieee%
.
\else
\space (see \cref{counting}).
\fi
Moreover, instead of simple graphs, we work on node-weighted multigraphs,
which arise by contracting certain tight odd-sets.

\subsection{Our approach} \label{sec:ourapproach}

This section is a high-level, idealized explanation of how to deal with the main difficulty (see the left part of \cref{fig:difficulty});
we ignore the more technical one in this description.

\paragraph{Removing cycles which do not cross a tight odd-set}
As discussed in~\cref{sec:challenges}, when moving from face to subface
we cannot guarantee that,
for each even cycle whose circulation we make nonzero,
one of its edges will be absent from the support of the new face.
However, this will at least be true for cycles that do not cross any odd-set tight for the new face.
	This is because if there are no tight odd-set constraints,
	then our faces behave as in the bipartite case.
	So, intuitively,
	if we only consider those cycles which do not cross any tight set,
	then we can remove them using the same arguments as in that case.
This implies, by the same argument as in~\cref{sec:bipartite}, that
if we apply $\log n$ weight functions in succession,
then the resulting face will not contain
in its support
any even cycle that crosses no tight odd-set.
This is less than we need, but a good first step.
If, at this point, there were no tight sets,
then we would be done, as we would have removed all cycles.
However, in general there will still be cycles crossing tight sets,
which make our task more difficult.

\paragraph{Decomposition into two subinstances}
To deal with the tight odd-sets,
we will make use of two crucial properties.
The first property is easy to see:
once we fix the single edge $e$ in the matching which crosses a tight set $S$,
the instance breaks up into two \emph{independent} subinstances.
	That is,
	every perfect matching
	which contains $e$
	is the union of:
	the edge $e$,
	a perfect matching on the vertex set $S$
	(ignoring the $S$-endpoint of $e$),
	and a perfect matching on the vertex set $V \setminus S$
	(ignoring the other endpoint of $e$).

This will allow us to employ a divide-and-conquer strategy:
to isolate a matching in the entire graph,
we will take care of both subinstances
and of the cut separating them.
We formulate the task of dealing with such a subinstance
(a subgraph induced on an odd-cardinality vertex set)
as follows:
we want that,
once the (only) edge of a matching which lies on the boundary of the tight odd-set is fixed,
the entire matching inside the set is uniquely determined.
We will then call this set \emph{contractible} (see \cref{def:contractable}).
This can be seen as a generalization of our isolation objective to subgraphs with an odd number of vertices.
If we can get that for the tight set and for its complement,
then each edge from the cut separating them
induces a unique perfect matching in the graph.
Therefore there are at most $n^2$ perfect matchings left in the current face.
Now, in order to isolate the entire graph,
we only need a weight function $w$
which assigns different weights to all these matchings.
This demand can be written as a system of $n^4$ linear non-equalities on $w$,
and we can generate a weight function $w$ satisfying all of them (see \cref{lem23}).

While it is not clear how to continue this scheme beyond the first level
or why we could hope to have a low depth of recursion,
we will soon explain how we utilize this basic strategy. 

\ifieee
\ifieee
\begin{figure*}
\begin{center}
\else
\begin{figure}[t]
\fi
  \begin{center}
    \ifieee \begin{tikzpicture}[scale=0.6]
      \else 
      \begin{tikzpicture}[scale=0.7]
    \fi

        \tikzstyle{vertex}=[circle, fill=black, minimum size=2,inner sep=1pt]
        \tikzset{
          ncbar angle/.initial=-90,
          ncbar/.style={
              to path=(\tikztostart)
              -- ($(\tikztostart)!#1!\pgfkeysvalueof{/tikz/ncbar angle}:(\tikztotarget)$)
              -- ($(\tikztotarget)!($(\tikztostart)!#1!\pgfkeysvalueof{/tikz/ncbar angle}:(\tikztotarget)$)!\pgfkeysvalueof{/tikz/ncbar angle}:(\tikztostart)$)
              -- (\tikztotarget)
          },
         ncbar/.default=0.5cm,
        }

      \tikzset{square left brace/.style={ncbar=1.2cm}}
      \tikzset{square right brace/.style={ncbar=-0.5cm}}

      \begin{scope}
        \begin{scope}
          \clip(3,-2.3) rectangle (22.5,2.3);
          \draw[fill=gray!30!white, draw=gray!80!black] (14,0) ellipse (6cm and 5.5cm);
          \draw[fill=gray!27!white, draw=gray!80!black] (12,0) ellipse (6cm and 5.5cm);
          \draw[fill=gray!24!white, draw=gray!80!black] (10,0) ellipse (6cm and 5.5cm);
          \draw[fill=gray!21!white, draw=gray!80!black] (8,0) ellipse (6cm and 5.5cm);
          \draw[fill=gray!19!white, draw=gray!80!black] (6,0) ellipse (6cm and 5.5cm);
          \draw[fill=gray!16!white, draw=gray!80!black] (4,0) ellipse (6cm and 5.5cm);
          \draw[fill=gray!13!white, draw=gray!80!black] (2,0) ellipse (6cm and 5.5cm);
          \draw[fill=gray!10!white, draw=gray!80!black] (0,0) ellipse (6cm and 5.5cm);
        \end{scope}
        \foreach \val in {1,2,3,4,5,6,7,8} {
          \node at (\val*2 -2 + 6.15, 1.6) {\small $S_\val$};  
          \ifthenelse{\val = 1}{
            \draw [thick] (\val*2  +1,-2.30) to [square left brace ]  (\val*2  + 3.45,-2.30);
            \node[fill=white] at (\val*2 + 2.30, -3.5) {\small $U_\val$};
          }
          {
            \draw [thick] (\val*2  +1.45,-2.30) to [square left brace ]  (\val*2  + 3.45,-2.30);
            \node[fill=white] at (\val*2 + 2.45, -3.5) {\small $U_\val$};
          }
        }
        \begin{scope}[yshift=0.75cm]
        \foreach \val/\end in {1/2,3/4,5/6,7/8} {
          \ifthenelse{\val = 1}{
            \draw [thick] (\val*2  +1,-4.25) to [square left brace ]  (\val*2  + 5.45,-4.25);
            \node[fill=white] at (\val*2 + 3.35, -5.5) {\small $U_{\val, \end}$};
          }
          {
            \draw [thick] (\val*2  +1.45,-4.25) to [square left brace ]  (\val*2  + 5.45,-4.25);
            \node[fill=white] at (\val*2 + 3.45, -5.5) {\small $U_{\val, \end}$};
          }
        }
        \foreach \val/\end in {1/4,5/8} {
          \ifthenelse{\val = 1}{
            \draw [thick] (\val*2  +1,-5.4) to [square left brace ]  (\val*2  + 9.45,-5.4);
            \node[fill=white] at (\val*2 + 5.35, -6.6) {\small $U_{\val, \end}$};
          }
          {
            \draw [thick] (\val*2  +1.45,-5.4) to [square left brace ]  (\val*2  + 9.45,-5.4);
            \node[fill=white] at (\val*2 + 5.45, -6.6) {\small $U_{\val, \end}$};
          }
        }
        \draw [thick] (2  +1,-6.60) to [square left brace ]  (2  + 17.45,-6.60);
        \node[fill=white] at (2 + 9.3, -7.8) {\small $U_{1, 8}$};
        \ifieee 
          \node at (23.0, -3.7) {\begin{minipage}{4cm}\scriptsize  Phase 1\end{minipage}};
          \node at (23.0, -4.9) {\begin{minipage}{4cm}\scriptsize  Phase 2\end{minipage}};
          \node at (23.0, -6.1) {\begin{minipage}{4cm}\scriptsize  Phase 3\end{minipage}};
          \node at (23.0, -7.3) {\begin{minipage}{4cm}\scriptsize  Phase 4\end{minipage}};
        \else
          \node at (22.5, -3.7) {\begin{minipage}{4cm}\scriptsize  Phase 1\end{minipage}};
          \node at (22.5, -4.9) {\begin{minipage}{4cm}\scriptsize  Phase 2\end{minipage}};
          \node at (22.5, -6.1) {\begin{minipage}{4cm}\scriptsize  Phase 3\end{minipage}};
          \node at (22.5, -7.3) {\begin{minipage}{4cm}\scriptsize  Phase 4\end{minipage}};
        \fi 
      \end{scope}

      \node[vertex] (e4) at (11.50,0.55) { };  
      \node[vertex] (e4o) at (12.4,0.2) { };  
      \draw (e4) edge node[above right = -0.1cm and 0cm] {\small $e_{4}$} (e4o);
      \begin{scope}[xshift=4.0cm]
        \node[vertex] (e4) at (11.35,-0.3) { };  
        \node[vertex] (e4o) at (12.65,-0.3) { };  
        \draw (e4) edge node[above right = -0.0cm and 0cm] {\small $e_{6}$} (e4o);
      \end{scope}
      \begin{scope}[xshift=8.0cm,yshift=-0.6cm]
        \node[vertex] (e4) at (11.4,0.6) { };  
        \node[vertex] (e4o) at (12.6,0.8) { };  
        \draw (e4) edge node[above right = 0.05cm and 0cm] {\small $e_{8}$} (e4o);
      \end{scope}
      \end{scope}
      \node at (1, 0) {};
    \end{tikzpicture}
  \end{center}
  \caption{Example of a chain consisting of 8 tight sets, and our divide-and-conquer argument.}
  \label{fig:chain}
\ifieee
\end{center}
\end{figure*}
\else
\end{figure}
\fi
\else
\fi

\paragraph{Laminarity}
The second crucial property
that we utilize
is that
the family of odd-set constraints tight for a face
exhibits good structural properties.
Namely,
it is known that a \emph{laminar} family of odd sets is enough
to describe any face (see \cref{sec:pmpolytope}). Recall that a family of sets is  laminar if any two sets in the family are either disjoint or one is a subset of the other (see~\cref{fig:generalex} for an example).
This enables a scheme
where we use this family
to make progress in a bottom-up fashion.
This is still challenging
as the family
does not stay fixed
as we move from face to face.
The good news is that it can only increase:
whenever a new tight odd-set constraint appears
which is not spanned by the previous ones,
we can always add an odd-set to our laminar family.

\paragraph{Chain case}
To get started,
let us
first discuss the special case where
the laminar family of tight constraints is a chain,
i.e., an increasing sequence of odd-sets $S_1 \subsetneq S_2 \subsetneq ... \subsetneq S_{\ell}$.
\ifieee\else
We remark that this will be an informal and simplified description of the proof of \cref{making-2l-contractible}.
\fi
For this introduction, assume $\ell = 8$ as depicted in~\cref{fig:chain}.
Denote by $U_1, ..., U_{8}$ the \emph{layers} of this chain,
i.e., $U_1 = S_1$ and $U_p = S_p \setminus S_{p-1}$ for $p = 2, 3, ..., 8$.
Suppose this chain describes the face that was obtained by applying the $\log n$ weight functions as above that remove all even cycles that do not cross a tight set. Then there is no cycle that lies inside a single layer $U_p$.

\ifieee
\else
\ifieee
\begin{figure*}
\begin{center}
\else
\begin{figure}[t]
\fi
  \begin{center}
    \ifieee \begin{tikzpicture}[scale=0.6]
      \else 
      \begin{tikzpicture}[scale=0.7]
    \fi

        \tikzstyle{vertex}=[circle, fill=black, minimum size=2,inner sep=1pt]
        \tikzset{
          ncbar angle/.initial=-90,
          ncbar/.style={
              to path=(\tikztostart)
              -- ($(\tikztostart)!#1!\pgfkeysvalueof{/tikz/ncbar angle}:(\tikztotarget)$)
              -- ($(\tikztotarget)!($(\tikztostart)!#1!\pgfkeysvalueof{/tikz/ncbar angle}:(\tikztotarget)$)!\pgfkeysvalueof{/tikz/ncbar angle}:(\tikztostart)$)
              -- (\tikztotarget)
          },
         ncbar/.default=0.5cm,
        }

      \tikzset{square left brace/.style={ncbar=1.2cm}}
      \tikzset{square right brace/.style={ncbar=-0.5cm}}

      \begin{scope}
        \begin{scope}
          \clip(3,-2.3) rectangle (22.5,2.3);
          \draw[fill=gray!30!white, draw=gray!80!black] (14,0) ellipse (6cm and 5.5cm);
          \draw[fill=gray!27!white, draw=gray!80!black] (12,0) ellipse (6cm and 5.5cm);
          \draw[fill=gray!24!white, draw=gray!80!black] (10,0) ellipse (6cm and 5.5cm);
          \draw[fill=gray!21!white, draw=gray!80!black] (8,0) ellipse (6cm and 5.5cm);
          \draw[fill=gray!19!white, draw=gray!80!black] (6,0) ellipse (6cm and 5.5cm);
          \draw[fill=gray!16!white, draw=gray!80!black] (4,0) ellipse (6cm and 5.5cm);
          \draw[fill=gray!13!white, draw=gray!80!black] (2,0) ellipse (6cm and 5.5cm);
          \draw[fill=gray!10!white, draw=gray!80!black] (0,0) ellipse (6cm and 5.5cm);
        \end{scope}
        \foreach \val in {1,2,3,4,5,6,7,8} {
          \node at (\val*2 -2 + 6.15, 1.6) {\small $S_\val$};  
          \ifthenelse{\val = 1}{
            \draw [thick] (\val*2  +1,-2.30) to [square left brace ]  (\val*2  + 3.45,-2.30);
            \node[fill=white] at (\val*2 + 2.30, -3.5) {\small $U_\val$};
          }
          {
            \draw [thick] (\val*2  +1.45,-2.30) to [square left brace ]  (\val*2  + 3.45,-2.30);
            \node[fill=white] at (\val*2 + 2.45, -3.5) {\small $U_\val$};
          }
        }
        \begin{scope}[yshift=0.75cm]
        \foreach \val/\end in {1/2,3/4,5/6,7/8} {
          \ifthenelse{\val = 1}{
            \draw [thick] (\val*2  +1,-4.25) to [square left brace ]  (\val*2  + 5.45,-4.25);
            \node[fill=white] at (\val*2 + 3.35, -5.5) {\small $U_{\val, \end}$};
          }
          {
            \draw [thick] (\val*2  +1.45,-4.25) to [square left brace ]  (\val*2  + 5.45,-4.25);
            \node[fill=white] at (\val*2 + 3.45, -5.5) {\small $U_{\val, \end}$};
          }
        }
        \foreach \val/\end in {1/4,5/8} {
          \ifthenelse{\val = 1}{
            \draw [thick] (\val*2  +1,-5.4) to [square left brace ]  (\val*2  + 9.45,-5.4);
            \node[fill=white] at (\val*2 + 5.35, -6.6) {\small $U_{\val, \end}$};
          }
          {
            \draw [thick] (\val*2  +1.45,-5.4) to [square left brace ]  (\val*2  + 9.45,-5.4);
            \node[fill=white] at (\val*2 + 5.45, -6.6) {\small $U_{\val, \end}$};
          }
        }
        \draw [thick] (2  +1,-6.60) to [square left brace ]  (2  + 17.45,-6.60);
        \node[fill=white] at (2 + 9.3, -7.8) {\small $U_{1, 8}$};
        \ifieee 
          \node at (23.0, -3.7) {\begin{minipage}{4cm}\scriptsize  Phase 1\end{minipage}};
          \node at (23.0, -4.9) {\begin{minipage}{4cm}\scriptsize  Phase 2\end{minipage}};
          \node at (23.0, -6.1) {\begin{minipage}{4cm}\scriptsize  Phase 3\end{minipage}};
          \node at (23.0, -7.3) {\begin{minipage}{4cm}\scriptsize  Phase 4\end{minipage}};
        \else
          \node at (22.5, -3.7) {\begin{minipage}{4cm}\scriptsize  Phase 1\end{minipage}};
          \node at (22.5, -4.9) {\begin{minipage}{4cm}\scriptsize  Phase 2\end{minipage}};
          \node at (22.5, -6.1) {\begin{minipage}{4cm}\scriptsize  Phase 3\end{minipage}};
          \node at (22.5, -7.3) {\begin{minipage}{4cm}\scriptsize  Phase 4\end{minipage}};
        \fi 
      \end{scope}

      \node[vertex] (e4) at (11.50,0.55) { };  
      \node[vertex] (e4o) at (12.4,0.2) { };  
      \draw (e4) edge node[above right = -0.1cm and 0cm] {\small $e_{4}$} (e4o);
      \begin{scope}[xshift=4.0cm]
        \node[vertex] (e4) at (11.35,-0.3) { };  
        \node[vertex] (e4o) at (12.65,-0.3) { };  
        \draw (e4) edge node[above right = -0.0cm and 0cm] {\small $e_{6}$} (e4o);
      \end{scope}
      \begin{scope}[xshift=8.0cm,yshift=-0.6cm]
        \node[vertex] (e4) at (11.4,0.6) { };  
        \node[vertex] (e4o) at (12.6,0.8) { };  
        \draw (e4) edge node[above right = 0.05cm and 0cm] {\small $e_{8}$} (e4o);
      \end{scope}
      \end{scope}
      \node at (1, 0) {};
    \end{tikzpicture}
  \end{center}
  \caption{Example of a chain consisting of 8 tight sets, and our divide-and-conquer argument.}
  \label{fig:chain}
\ifieee
\end{center}
\end{figure*}
\else
\end{figure}
\fi
\fi
Notice that every layer $U_p$ is of even size
and it touches two boundaries of tight odd-sets: $S_{p-1}$ and $S_p$
(that is, $\delta(U_p) \subseteq \delta(S_{p-1}) \cup \delta(S_p)$).
Any perfect matching in the current face
will have one edge from $\delta(S_{p-1})$ and one edge from $\delta(S_p)$
(possibly the same edge),
therefore $U_p$ will have two (or zero) boundary edges in the matching.
An exception is $U_1$, which is odd, only touches $S_1$ and will have one boundary edge in the matching.
This motivates us to generalize our isolation objective to layers as follows:
we say that a layer $U_p$ is \emph{contractible}
if choosing an edge from $\delta(S_{p-1})$ and an edge from $\delta(S_p)$
induces a unique matching inside $U_p$.
This definition naturally extends to layers of the form
$S_r \setminus S_{p-1} = U_p \cup U_{p+1} \cup ... \cup U_r$,
which we will denote by $U_{p,r}$.

Recall that we have ensured that there is no cycle that lies inside a single layer $U_p = U_{p,p}$.
It follows that these layers are contractible.
	This is because two different matchings (but with the same boundary edges) in the current face
	would induce an alternating cycle in their symmetric difference.

Let us say that
this was the first phase of our approach (see \cref{fig:chain}).
In the second phase,
we want to ensure contractibility for double layers: $U_{1,2}$, $U_{3,4}$, $U_{5,6}$ and $U_{7,8}$.
In general,
we double our progress in each phase:
in the third one
we deal with the quadruple layers $U_{1,4}$ and $U_{5,8}$,
and in the fourth phase
we deal with the octuple layer $U_{1,8}$.

Let us now describe a single phase.
Take e.g. the layer $U_{5,8}$
and two boundary edges $e_{4} \in \delta(S_{4})$ and $e_8 \in \delta(S_8)$
(see \cref{fig:chain});
we want to have only a unique matching in $U_{5,8}$ including these edges.
Now we 
realize our divide-and-conquer approach.
Note that the layers $U_{5,6}$ and $U_{7,8}$ have already been dealt with
(made contractible)
in the previous phase.
Therefore, for each choice of boundary edge $e_6 \in \delta(S_6)$ for the matching,
there is a unique matching inside both of these layers.
Just like previously,
this implies that there are only $n^2$ matchings
using $e_4$ and $e_8$
in the layer $U_{5,8}$,
and we can select a weight function
that isolates one of them.
%
  We actually select only one function per phase,
	which works simultaneously for all layers $U_{p,r}$
	in this phase (here: $U_{1,4}$ and $U_{5,8}$)
	and all pairs of boundary edges
	$e_{p-1}$ and $e_r$.%

By generalizing this strategy
(from $\ell = 8$ to arbitrary $\ell$)
in the natural way,
we can deal with any chain in $\log \ell \le \log n$ phases,
even if it consists of $\Omega(n)$ tight sets.\ifieee\else \space
	We remark that, in the general proof, we do not quite use a binary tree structure like in the example.
	Instead, in the $t$-th phase,
	we deal with all layers $U_{p,r}$
	having $1 \le p \le r \le \ell$ with $r - p \le 2^{t-1} - 1$.
	This makes our proof simpler if $\ell$ is not a power of two.
 \fi

\paragraph{General case}
\begin{figure}[t]
  \begin{center}
  
    \ifieee \newcommand{\myscaleforfigure}{0.5} \else \newcommand{\myscaleforfigure}{0.7} \fi
	\tikzset{external/export next=false} 
	
    \begin{tikzpicture}[scale=\myscaleforfigure]
    \tikzstyle{vertex}=[circle, fill=black, minimum size=2,inner sep=1pt]

      \draw[fill=white,very thick] (4.0,1.7) ellipse (8.5cm and 4.7cm); 
    \begin{scope}
      \draw[fill=gray!00!white,very thick] (0.0,1.7) ellipse (4.0cm and 3.1cm); 

        \draw[fill=gray!60!white] (4.0,-1.5) ellipse (1.3cm and 0.9cm);

      \begin{scope}[rotate=-45, xshift=-2.0cm, yshift=-1.5cm]
        \draw[fill=gray!60!white] (-1.25,1.5) ellipse (0.5cm and 0.9cm);
      \end{scope}

      \begin{scope}[xshift=1.5cm,yshift=1.5cm]
        \draw[fill=gray!60!white] (0,0.25) ellipse (0.7cm and 0.4cm);
      \end{scope}
      \begin{scope}[xshift=-0.2cm,yshift=0.5cm, rotate=30]
        \draw[fill=gray!60!white] (0,0.25) ellipse (0.7cm and 0.4cm);
      \end{scope}
      \begin{scope}[xshift=-1.8cm,yshift=-0.5cm, rotate=30,scale=0.4]
        \draw[fill=gray!60!white] (0,0.25) ellipse (0.6cm and 0.4cm);
      \end{scope}
      \begin{scope}[xshift=1.9cm,yshift=-0.0cm, rotate=50,scale=0.4]
        \draw[fill=gray!60!white] (0,0.25) ellipse (0.6cm and 0.4cm);
      \end{scope}
      \begin{scope}[xshift=1.3cm,yshift=3.0cm, rotate=50,scale=0.7]
        \draw[fill=gray!60!white] (0,0.25) ellipse (0.4cm and 0.7cm);
      \end{scope}

      \begin{scope}[xshift=-0.9cm,yshift=3.8cm, rotate=70,scale=0.8]
        \draw[fill=gray!60!white] (0,0.25) ellipse (0.6cm and 0.4cm);
      \end{scope}
    \end{scope}

    \begin{scope}[xshift=8.5cm]
      \draw[fill=gray!25!white,dashed] (0.0,1.7) ellipse (3.2cm and 2.2cm); 
      \draw[fill=gray!25!white,dashed] (-0.6,1.7) ellipse (2.6cm and 1.9cm); 
      \draw[fill=gray!25!white,dashed] (-1.3,1.7) ellipse (1.9cm and 1.6cm); 

      \begin{scope}[rotate=-45, xshift=-1.4cm, yshift=-1.5cm]
        \draw[fill=gray!60!white] (-1.25,1.5) ellipse (0.5cm and 0.9cm);
      \end{scope}

      \begin{scope}[xshift=-1.8cm,yshift=0.5cm, rotate=30,scale=0.4]
        \draw[fill=gray!60!white] (0,0.25) ellipse (0.6cm and 0.4cm);
      \end{scope}
      \begin{scope}[xshift=1.2cm,yshift=0.7cm, rotate=40,scale=0.4]
        \draw[fill=gray!60!white] (0,0.25) ellipse (0.6cm and 0.4cm);
      \end{scope}

      \begin{scope}[xshift=2.1cm,yshift=0.3cm, rotate=50,scale=0.4]
        \draw[fill=gray!60!white] (0,0.25) ellipse (0.6cm and 0.4cm);
        \draw[fill=gray!60!white] (4,1.85) ellipse (0.3cm and 0.4cm);
        \draw[fill=gray!60!white] (4.5,3.25) ellipse (0.4cm and 0.3cm);
        \draw[fill=gray!60!white] (3.0,0.75) ellipse (0.2cm and 0.3cm);
      \end{scope}
      \begin{scope}[xshift=1.3cm,yshift=2.0cm, rotate=50,scale=0.7]
        \draw[fill=gray!60!white] (0,0.25) ellipse (0.4cm and 0.7cm);
      \end{scope}

      \begin{scope}[xshift=-0.2cm,yshift=1.8cm, rotate=70,scale=0.8]
        \draw[fill=gray!60!white] (0,0.25) ellipse (0.6cm and 0.4cm);
      \end{scope}
    \end{scope}
    \end{tikzpicture}
  \end{center}
  \caption{
  Example of a general laminar family.\\
  \textbf{Dark-gray sets}
  are of size at most $\lambda$ and thus contractible.\\
  \textbf{Dashed sets}
  are of size more than $\lambda$ but at most $2 \lambda$;
  they must form chains (due to the cardinality constraints).
  We make them contractible in the first step.
  Then we contract them
  (so now all light-gray and dark-gray sets are contracted).\\
  \textbf{Thick sets}
  are of size more than $2 \lambda$.
  For the second step, we erase the edges on their boundaries.
  Then we remove cycles of length up to $2 \lambda$ from the resulting instance (the \emph{contraction}),
  which has no tight odd-sets (and no cycles of length up to $\lambda$).
  }
  \label{fig:generalex}
\end{figure}
Of course, there is no reason to expect that the laminar family
of tight cuts
we obtain after applying the initial $\log n$ weight functions
will be a chain.
It also does not seem easy to directly generalize our inductive scheme
from a chain to an arbitrary family.
Therefore we put forth a different progress measure,
which allows us to make headway
even in the absence of such a favorable odd-set structure.

Since a laminar family can be represented as a tree,
we might think about a bottom-up strategy based on it;
however, we cannot deal with its nodes level-by-level,
since it may have height $\Omega(n)$
and we can only afford $\poly(\log n)$ many phases.
Instead, we will first deal with all tight odd-sets of size up to $4$,
then up to $8$,
then up to $16$
and so on,
by making them contractible.
At the same time,
we also remove all even cycles of length up to $4$,
then up to $8$
and so on.\ifieee \space \else\footnote{
	As discussed in \cref{sec:challenges,fig:difficulty},
	the meaning of the term \emph{remove}
	needs to be refined,
	as we cannot hope to always delete an edge of the cycle
	from the support of the current face.
} \fi
These two components of our progress measure,
which we call $\lambda$-goodness,
are mutually beneficial,
as we will see below.

Making odd-sets contractible enables us 
not only to achieve progress,
but also to simplify our setting.
A contractible tight set
can be, for our purposes,
thought of as a single vertex
-- much like a blossom in Edmonds' algorithm.
This is because such a set has exactly one boundary edge in a perfect matching
(as does a vertex),
and choosing that edge determines the matching in the interior.
As the name suggests, we will contract such sets.

Suppose that our current face is already $\lambda$-good.
Roughly, this means that we have
made odd-sets of size up to $\lambda$ (which we will call small) contractible
and
removed cycles of length up to $\lambda$.
Now we want to obtain a face which is $2 \lambda$-good.

The first step is to make odd-sets of size up to $2 \lambda$ contractible.
Let us zoom in on one such odd-set -- a maximal set of size at most $2 \lambda$
(see the largest dashed set in~\cref{fig:generalex}).
Once we have contracted all the small sets into single vertices,
all interesting sets are now of size more than $\lambda$ but at most $2 \lambda$,
and any laminar family consisting of such sets must be a chain,
since a set of such size cannot have two disjoint subsets of such size
(see \cref{fig:generalex}).
But this is the chain case that we have already solved!

Having made odd-sets of size up to $2 \lambda$ contractible,
we can contract them.
The second step is now to remove cycles of length up to $2 \lambda$.
However, here we do not need to care about those cycles which cross an odd-set $S$ of size larger than $2 \lambda$
-- the reason being,
intuitively,
that in our technical arguments
we define the length of a cycle based on the sizes of sets that it crosses,
and thus such a cycle actually becomes longer than $2 \lambda$.
In other words,
we can think about removing cycles
of length up to $2 \lambda$
from a version of the input graph where
all small odd-sets have been contracted
and
all larger ones have had their boundaries erased (see \cref{fig:generalex}).
We call this version the \emph{contraction} (see \cref{def:contraction}).
Our $\lambda$-goodness progress measure (see \cref{def:face-laminar})
is actually defined in terms of cycles in the contraction.

Now the second step is easy:
we just need to remove all cycles of length up to $2 \lambda$
from the contraction,
which has no tight odd-sets
and no cycles of length up to $\lambda$
--
a simple scenario, already known from the bipartite case.
Applying one weight function is enough to do this.

Finally, what does it mean for us to remove a cycle?
When we make a cycle's circulation nonzero,
it is then eliminated from the new face in the following sense:
either one of its edges disappears from the support of the face
(recall that this is what always happened in the bipartite case),
or a new tight odd-set appears, with the following property:
the cycle crosses the set with fewer (or more) even-indexed edges than odd-indexed edges
(see the example in~\cref{fig:difficulty}).
In short, we say that the cycle does not \emph{respect} the new face (see \cref{sec:alternating}).
This notion of removal makes sense when viewed in tandem with the contraction,
because once a cycle crosses a set in the laminar family,
there are two possibilities in each phase:
either this set is large -- then its boundary is not present in the contraction,
which cancels the cycle,
or it is small -- then
it is contracted and the cycle also disappears
(for somewhat more technical reasons).

To reiterate,
our strategy is to simultaneously remove cycles up to a given length
and make odd-sets up to a given size contractible.
We can do this in $\log n$ phases.
In each such phase we need to apply a sequence of $\log n$ weight functions
in order to deal with a chain of tight odd-sets (as outlined above).
In all, we are able to isolate a perfect matching in the entire graph
using a sequence $O(\log^2 n)$ weight functions
with polynomially bounded weights.

\subsection{Future work}

The most immediate open problem left by our work
is to get down from $\quasinc$ to $\nc$ for the perfect matching problem.
Even for the bipartite case,
this will require new insights or methods,
as it is not clear how we could e.g. reduce the number of weight functions from $\log n$ to only a constant.

Proving that
the search version of the perfect matching problem
in planar graphs
is in $\nc$
is also open.
While the $\quasinc$ result of \cite{FennerGT16}
gives rise to a new $\nc$ algorithm for \emph{bipartite} planar graphs,
which proceeds by verifying at each step whether the chosen weight function has removed the wanted cycles
(it computes the girth of the support of the current face in $\nc$),
our $\lambda$-goodness progress measure seems to be difficult to verify in $\nc$.

A related problem which has resisted derandomization so far is exact matching \cite{PapadimitriouY82}.
Here we are given a graph, whose some edges are colored red, and an integer $k$;
the question is to find a perfect matching containing exactly $k$ red edges.
The problem is in $\rnc$ \cite{MulmuleyVV87},
but not known to even be in ${\sf P}$.

Finally, our polyhedral approach motivates the question of
what other zero-one polytopes
admit such a derandomization of
the Isolation Lemma.
One class that comes to mind
are totally unimodular polyhedra.
We remark that,
for that class,
this question has been resolved~\cite{GurjarTV17}
subsequent to the first version of this paper.

\subsection{Outline}

The rest of the paper is organized as follows.
In \cref{sec:preliminaries} we introduce notation and define basic notions related to the perfect matching polytope and to the weight functions that we use.
In \cref{sec:alternating} we define alternating circuits (our generalization of alternating cycles), discuss what it means for such a circuit to respect a face, and develop our tools for circuit removal.
In \cref{sec:lambda-goodness} we introduce our measure of progress ($\lambda$-goodness), contractible sets and the contraction multigraph.
We also state \cref{exists_isolating}, which implies our main result.
\ifieee
	We defer to the full version of the paper the proof of
\else
	Finally, in~\cref{proof_of_main} we prove
\fi
our key technical theorem: that applying $\log_2 n + 1$ weight functions allows us to make progress from $\lambda$-good to $2 \lambda$-good.

\section{Preliminaries} \label{sec:preliminaries}
Throughout the paper we consider a fixed graph $G=(V,E)$ with $n$ vertices.
We remark that the isolating weight functions whose existence we prove can be generated without knowledge of the graph.
For notational convenience, we assume that $\log_2 n$ evaluates to an integer; otherwise simply replace $\log_2 n$ by $\lceil \log_2 n\rceil$.
We also assume that $n$ is sufficiently large.

We use the following notation.
For a subset $S \subseteq V$ of the vertices, let $\delta(S) = \{e\in E : |e\cap S| = 1\}$ denote the edges crossing the cut $(S, V\setminus S)$ and $E(S) = \{ e \in E : |e \cap S| = 2\}$ denote the edges inside $S$.
We shorten $\delta(\{v\})$ to $\delta(v)$ for $v \in V$. 
For a vector $(x_e)_{e\in E} \in \mathbb{R}^{|E|}$, we  define $x(\delta(S)) = \sum_{e\in \delta(S)} x_e$, as well as $\supp(x) = \{e \in E : x_e > 0 \}$.
For a subset $X \subseteq E$ we define $\one_X$ to be the vector with $1$ on coordinates in $X$ and $0$ elsewhere.
We again shorten $\one_{\{e\}}$ to $\one_e$ for $e \in E$.
Sometimes we identify matchings $M$ with their indicator vectors $\one_M$.

A matching is a set of edges $M \subseteq E$ such that no two edges in $M$ share an endpoint.
A matching $M$ is perfect if $|M| = \frac{n}{2}$.

\subsection{Parallel complexity}
\label{sec:complexity}

The complexity class $\quasinc$ is defined as $\quasinc = \bigcup_{k \ge 0} \quasinc^k$, where $\quasinc^k$ is the class of problems having uniform circuits of quasi-polynomial size $2^{\log^{O(1)} n}$ and polylogarithmic depth $O(\log^k n)$ \cite{Barrington92}.
Here by ``uniform'' we mean that the circuit can be generated in polylogarithmic space.

By the results of~\cite{MulmuleyVV87},
\cref{mainer} implies that the perfect matching problem
(both the decision and the search variant)
in general graphs is in $\quasinc$.
The same can be said about maximum cardinality matching,
as well as minimum-cost perfect matching for small costs (given in unary);
see Section~5 of~\cite{MulmuleyVV87}.

Some care is required to obtain our postulated running time,
i.e., that the perfect matching problem has
uniform circuits of size $n^{O(\log^2 n)}$ and depth $O(\log^3 n)$.
We could get a $\quasinc^4$ algorithm
by applying the results of \cite[Section~6.1]{MahajanV97}
to compute the determinant(s).
To shave off one $\log n$ factor,
we use the following Chinese remaindering method,
pointed out to us by Rohit Gurjar
(it will also appear in the full version of~\cite{FennerGT16}).
We first compute determinants modulo small primes;
since the determinant has $2^{O(\log^3 n)}$ bits,
we need as many primes (each of $O(\log^3 n)$ bits).
For one prime this can be done in $\nc^2$ \cite{Berkowitz1984}.
Then we reconstruct the true value from the remainders.
Doing this
for an $n$-bits result
would be in $\nc^1$ \cite{BeameCH86},
and thus for a result with $2^{O(\log^3 n)}$ bits it is in $\quasinc^3$.

\subsection{Perfect matching polytope} \label{sec:pmpolytope}

Edmonds~\cite{Edm65} showed that the following set of equalities and
inequalities on the variables $(x_e)_{e\in E}$ determines the perfect matching
polytope (i.e.,  the convex hull of indicator vectors of all perfect
matchings): 
\begin{equation*}
\begin{aligned}
\arraycolsep=1.4pt\def\arraystretch{1.2}
\begin{array}{rll}
 x(\delta(v)) = & 1 & \quad \text{for $v \in V$}, \\  
 x(\delta(S)) \geq & 1 & \quad \text{for $S \subseteq V$ with $|S|$ odd,} \\
x_e \geq & 0 & \quad \text{for $e \in E$.}
\end{array}
\end{aligned}
\end{equation*}

Note that the constraints imply that $x_e \le 1$ for any $e \in E$.
We refer to the perfect matching polytope of the graph $G=(V,E)$ by $\PM(V,E)$
or simply by $\PM$.  Our approach
exploits the special structure of faces of the perfect matching polytope.
Recall that a face of a polytope is obtained by setting a subset of the inequalities
to equalities.
We follow the definition of a face from the book of Schrijver \cite{Schrijver03} -- in particular, every face is nonempty.

Throughout the paper, we will only consider the perfect matching polytope and so the term ``face'' will always refer to a face of $\PM$.
We sometimes abuse notation and say that a perfect matching $M$ is in a face $F$ if its indicator vector is in $F$. 
When talking about faces, we also use the following notation:
\begin{definition}
  For a face $F$ we define
  \ifieee
  \[\suppo(F) = \{ e \in E : (\exists x \in F) \ x_e > 0 \}\] and \[\tight(F) = \{S \subseteq V : |S|\mbox{ odd and  $(\forall x \in F)\  x(\delta(S)) = 1$}\} \,. \]
  \else
  \begin{align*}
    \suppo(F) = \{ e \in E : (\exists x \in F) \ x_e > 0 \} \quad \mbox{and} \quad \tight(F) = \{S \subseteq V : |S|\mbox{ odd and  $(\forall x \in F)\  x(\delta(S)) = 1$}\}\,.
  \end{align*}
  \fi
  In other words, $\suppo(F)$ contains the edges that appear in a perfect matching in $F$ and $\tight(F)$ contains the tight cut constraints of $F$.
\end{definition}

Notice that if a set is tight for a face, then it is also tight for any of its subfaces.

Standard uncrossing techniques imply that faces
can be defined using laminar families of tight constraints. This is proved using
\cref{face_structure} below, which is also  useful in our approach.

Two
subsets  $S,T
\subseteq V$  of vertices are said to be crossing if they intersect and none is
contained in the other, i.e., $S\cap T, S\setminus T, T\setminus S \neq
\emptyset$. A family $\cL$ of subsets 
of vertices is \emph{laminar} if no two sets $S, T \in \cL$ are crossing. Furthermore, we say that $\cL$ is a \emph{maximal laminar subset} of a family $\cS$ if no set in $\cS \setminus \cL$ can be added to $\cL$ while maintaining laminarity.

Note that any single-vertex set is tight for any face, and therefore a maximal laminar family contains all these sets. The laminar families in our arguments will always contain all singletons.

The following lemma is known.
\ifieee
\else
For completeness,  its proof is included
in~\cref{proof_of_face_structure}.
\fi

\begin{lemma} \label{face_structure}
  Consider a face $F$. For any maximal laminar subset $\cL$ of $\tight(F)$ we have
    \begin{align*}
      \Span(\cL) = \Span(\tight(F))\,,
    \end{align*}
    where for a subset $\cT \subseteq \tight(F)$, $\Span(\cT)$ denotes the
    linear subspace of $\bR^{E}$ spanned by the boundaries of sets in $\cT$,
    i.e., $\Span(\cT) = \Span \{ \one_{\delta(S)} : S \in \cT \}$.
\end{lemma}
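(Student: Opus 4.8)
The plan is to use a standard uncrossing argument. The inclusion $\Span(\cL) \subseteq \Span(\tight(F))$ is immediate since $\cL \subseteq \tight(F)$, so all the content is in the reverse inclusion, and for that it suffices to show that $\one_{\delta(S)} \in \Span(\cL)$ for every $S \in \tight(F)$. I would prove this by induction on $c(S) := |\{T \in \cL : S \text{ and } T \text{ cross}\}|$. If $c(S) = 0$, then $\cL \cup \{S\}$ is laminar, so maximality of $\cL$ forces $S \in \cL$ (recall all singletons are already in $\cL$), and we are done.

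For the inductive step, fix $S \in \tight(F) \setminus \cL$ and pick any $T \in \cL$ crossing $S$. Since $|S|$ and $|T|$ are odd, $|S \cap T|$ and $|S \cup T|$ have the same parity, and there are two cases. If $|S \cap T|$ is odd, then $S \cap T$ and $S \cup T$ are odd sets, and I would use submodularity of the cut function: for every $x \in F$,
\[ x(\delta(S)) + x(\delta(T)) = x(\delta(S \cap T)) + x(\delta(S \cup T)) + 2\, x(Z), \]
where $Z$ is the set of edges with one endpoint in $S \setminus T$ and the other in $T \setminus S$. The left side equals $2$ because $S, T \in \tight(F)$; on the right, $x(\delta(S \cap T)), x(\delta(S \cup T)) \geq 1$ by the odd-cut inequalities (valid on all of $\PM$) and $x(Z) \geq 0$ since $x \geq 0$, so all of these are forced to equality for every $x \in F$. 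Hence $S \cap T, S \cup T \in \tight(F)$ and, since $x(Z) = 0$ on $F$, we get $Z \cap \suppo(F) = \emptyset$; consequently
\[ \one_{\delta(S)} = \one_{\delta(S \cap T)} + \one_{\delta(S \cup T)} - \one_{\delta(T)} + 2\, \one_Z. \]
If instead $|S \cap T|$ is even, then $S \setminus T$ and $T \setminus S$ are odd, and the symmetric posimodular identity gives $S \setminus T, T \setminus S \in \tight(F)$, an analogous error set $Z'$ of edges (between $S \cap T$ and $V \setminus (S \cup T)$) disjoint from $\suppo(F)$, and $\one_{\delta(S)} = \one_{\delta(S \setminus T)} + \one_{\delta(T \setminus S)} - \one_{\delta(T)} + 2\, \one_{Z'}$.

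To close the induction I would check, using laminarity of $\cL$ together with the fact that $T$ itself crosses $S$, that each of the two new tight sets crosses strictly fewer members of $\cL$ than $S$ does: any $T' \in \cL$ crossing one of $S \cap T, S \cup T, S \setminus T, T \setminus S$ necessarily also crosses $S$ (the bad subcases are ruled out precisely because $T \cap S, T \setminus S, S \setminus T$ are all nonempty), whereas $T$ crosses $S$ but none of these four sets (it contains, is contained in, or is disjoint from each). So by the inductive hypothesis the boundaries of the new sets lie in $\Span(\cL)$, and $\one_{\delta(T)} \in \Span(\cL)$ trivially, giving $\one_{\delta(S)} \in \Span(\cL)$ up to the leftover term $2\,\one_Z$ (resp. $2\,\one_{Z'}$). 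The one genuinely non-routine point is exactly this leftover term: it is supported only on edges taking value $0$ everywhere on $F$, so it disappears once we work in $\bR^{\suppo(F)}$ — equivalently, one deletes all edges outside $\suppo(F)$ at the outset, which changes neither $F$ nor $\tight(F)$, after which the displayed identities become exact and the induction runs verbatim, yielding $\Span(\cL) = \Span(\tight(F))$. I expect the parity case split and this support bookkeeping to be the only aspects requiring care; the submodular/posimodular inequalities and the crossing-count induction are entirely standard.
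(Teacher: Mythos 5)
Your proposal is correct and follows essentially the same route as the paper's proof in \cref{proof_of_face_structure}: restrict to $\suppo(F)$ at the outset, uncross a crossing pair via the submodular/posimodular identities with the same parity case split, observe that the error term vanishes on the support, and induct on the number of sets of $\cL$ crossed (the paper phrases this as a minimal counterexample and cites \cite{LauRS11} for the crossing-count decrease that you sketch directly). No gaps.
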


Intuitively, \cref{face_structure} implies that a maximal laminar family $\cL$ of $\tight(F)$ is enough to describe a face $F$ (together with the edge set $\suppo(F)$).
Furthermore, given a subface $F' \subseteq F$, we can extend $\cL$ to a larger laminar family $\cL' \supseteq \cL$ which describes $F'$.

As the perfect matching polytope $\PM$ is defined as the convex hull of the indicator vectors of all perfect matchings, it is an integral polytope. In particular, it follows that every face of $\PM$ is also integral.

\subsection{Weight functions} \label{weight_functions}

For our derandomization of the Isolation Lemma
we will use families of weight functions which are possible to generate obliviously, i.e., by only using the number of vertices in $G$.
We define them below.

\newcommand{\largenumber}{(4n^2+1)}
\begin{definition} \label{def:W}
Given $t \ge 7$, we define the family of weight functions $\cW(t)$ as follows.
Number the edge set $E = \{ e_1, ..., e_{|E|} \}$ arbitrarily.
Let $w_k : E \to \bZ$ be given by $w_k(e_j) = \largenumber^j \mymod k$ for $j = 1, ..., |E|$ and $k = 2, ..., t$.
We define $\cW(t) = \{ w_k : k = 2, ..., t \}$.
\end{definition}

For brevity, we write $\cW := \cW(n^{20})$.

In our argument we will obtain a decreasing sequence of faces.
Each face arises from the previous by minimizing over a linear objective (given by a weight function).

\begin{definition}
Let $F$ be a face and $w$ a weight function. The subface of $F$ minimizing $w$ will be called \facemin{F}{w}:
\[ \facemin{F}{w} := \argmin \{ \ab{w,x} : x \in F \}. \]
\end{definition}

Instead of minimizing over one weight function and then over another,
we can \emph{concatenate} them in such a way that minimizing over the concatenation yields the same subface.
In particular, we will argue that one just needs to try all possible concatenations of $O(\log^2 n)$ weight functions from $\cW$
in order
to find one which isolates a unique perfect matching in $G$ (i.e., it produces a single extreme point as the minimizing subface).

\begin{definition} \label{def:concatenation}
For two weight functions $w$ and $w'$, where $w : E \to \bZ$ and $w' \in \cW$, we define their \emph{concatenation} $w \circ w' := n^{21} w + w'$, i.e.,
\[ \rb{w \circ w'}(e) := n^{21} \cdot w(e) + w'(e). \]
We also define $\cW^k$ to be the set of all concatenations of $k$ weight functions from $\cW$, i.e.,
\[ \cW^k := \{ w_1 \circ w_2 \circ ... \circ w_k : w_1, w_2, ..., w_k \in \cW \}\ifieee \,. \else \,, \fi \]
\ifieee\else where by $w_1 \circ w_2 \circ ... \circ w_k$ we mean $\rb{\rb{w_1 \circ w_2} \circ ...} \circ w_k$.\fi
\end{definition}

\begin{fact} \label{concatenation_and_subface}
We have $\facemin{\facemin{F}{w}}{w'} = \facemin{F}{w \circ w'}$.
\end{fact}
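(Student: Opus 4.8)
The plan is to unfold both sides of the claimed identity using the definition of $\facemin{\cdot}{\cdot}$ and the definition of concatenation $w \circ w' = n^{21} w + w'$, and then argue that minimizing the combined objective over $F$ is the same two-stage process as minimizing $w$ first and $w'$ second. The key observation is that the weights $w'(e)$ for $w' \in \cW$ are bounded: by \cref{def:W}, $w'(e) = \largenumber^j \bmod k < k \le n^{20}$, so for any matching $M$ we have $w'(M) = \sum_{e \in M} w'(e) < n^{20} \cdot n < n^{21}$ (using $|M| = n/2 < n$). Since every face of $\PM$ is integral (noted at the end of \cref{sec:pmpolytope}), the extreme points of $F$ are indicator vectors of matchings, and on these points $\ab{w \circ w', x} = n^{21} \ab{w, x} + \ab{w', x}$ with the second term strictly smaller than $n^{21}$ in absolute value (at least nonnegative and bounded; $w$ may be negative but that does not matter — what matters is that the $w'$-contribution is smaller than one unit of the $w$-contribution).

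The main steps I would carry out are: (i) Recall that a linear function over a polytope attains its minimum at a face, and over an integral polytope the minimizing face is spanned by the integral extreme points achieving the minimum; so it suffices to compare the two constructions on the vertex set, i.e., on matchings in $F$. (ii) Let $m_1 = \min\{\ab{w, \one_M} : M \text{ a matching in } F\}$ be the first-stage optimum; the matchings in $\facemin{F}{w}$ are exactly those with $\ab{w, \one_M} = m_1$. (iii) For the concatenated objective, argue that $\ab{w \circ w', \one_M} = n^{21} \ab{w, \one_M} + \ab{w', \one_M}$, and since $0 \le \ab{w', \one_M} < n^{21}$ for every matching, the value $n^{21} \ab{w, \one_M}$ dominates: a matching minimizing $w \circ w'$ must first minimize $w$ (otherwise increasing $\ab{w, \one_M}$ by at least one costs at least $n^{21}$, which cannot be compensated by the $w'$-term), and among those it must minimize $w'$. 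Hence the set of matchings minimizing $w \circ w'$ over $F$ equals the set of matchings minimizing $w'$ over $\facemin{F}{w}$. (iv) Conclude that the two faces, being the convex hulls (within $F$) of the same set of integral extreme points — equivalently, the faces exposed by the respective objectives — coincide.

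One technical point that needs a little care is the claim in step (iii) that the minimizing \emph{face} (not just the set of minimizing vertices) agrees: the face $\facemin{F}{w \circ w'}$ is by definition $\argmin\{\ab{w \circ w', x} : x \in F\}$, which is the convex hull of the vertices of $F$ achieving the minimum (a standard fact for minimization of a linear functional over a polytope), and similarly $\facemin{\facemin{F}{w}}{w'}$ is the convex hull of the vertices of $\facemin{F}{w}$ achieving $\min w'$. So once we know the two vertex sets coincide, the two faces coincide. The main obstacle — really the only thing to get right — is the bound $\ab{w', \one_M} < n^{21}$ for all matchings $M$, together with integrality of $F$, ensuring that the lexicographic-style ordering "minimize $w$, then $w'$" is faithfully simulated by the single weight function $n^{21} w + w'$; everything else is bookkeeping. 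I would also note in passing that this is precisely why \cref{def:concatenation} uses the multiplier $n^{21}$, and that the same reasoning extends inductively to give the corresponding statement for $\cW^k$ and chains of $k$ concatenations, which is what the paper will use later.
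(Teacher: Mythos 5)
Your proof is correct and follows essentially the same route as the paper: reduce to the integral points (matchings) of the two faces using integrality of faces of $\PM$, bound $0 \le w'(M) < n^{21}$ for $w' \in \cW$, and observe that $n^{21}w + w'$ therefore induces the lexicographic ordering given by $(w, w')$. The bookkeeping details all check out.
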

\ifieee
The proof can be found in the full version of the paper.
\else
\begin{proof}
Both faces are integral and so we only need to show that $\facemin{\facemin{F}{w}}{w'} \cap \mathbb{Z}^E = \facemin{F}{w \circ w'} \cap \mathbb{Z}^E$.
The first  set consists of matchings in $F$ minimizing $w$ and, among such matchings, minimizing $w'$.
The second set consists of matchings in $F$ minimizing $w \circ w'$.
These two sets are equal because for any $M$:
\begin{itemize}
	\item $(w \circ w')(M) = n^{21} \cdot w(M) + w'(M)$,
	\item $w' \in \cW = \cW(n^{20})$ implies that $0 \le w'(M) < n^{20} \cdot \frac{n}{2} < n^{21}$ for any matching $M$,
	\item $w(M) \in \mathbb{Z}$, so that for any two matchings $M_1$ and $M_2$, $w(M_1) > w(M_2)$ implies $(w \circ w')(M_1) - (w \circ w')(M_2) = n^{21} \rb{w(M_1) - w(M_2)} + \rb{w'(M_1) - w'(M_2)} > 0$.
\end{itemize}
Hence, the ordering given by $w \circ w'$ is the same as the lexicographic ordering given by $(w,w')$.
\end{proof}
\fi

\section{Alternating circuits and respecting a face} \label{sec:alternating}

In this section we introduce two notions which are vital for our approach.
Before giving the formal definitions, we give an informal motivation. 

Our argument is centered around \emph{removing} even cycles.
As discussed in \cref{sec:challenges,fig:difficulty},
the meaning of this term in the non-bipartite case
needs to be more subtle than just ``removing an edge of the cycle''.

In order to deal with a cycle,
we find a weight function $w$ which assigns it a nonzero circulation.
Formally,
given an even cycle $C$ with edges numbered in order,
define a vector $\pmind{C} \in \{-1,0,1\}^E$ as having $1$ on even-numbered edges of $C$, $-1$ on odd-numbered edges of $C$, and $0$ elsewhere.
Then, nonzero circulation means that $\ab{\pmind{C}, w} \ne 0$.
Now, in the bipartite case,
if such a cycle survived in the new face $\facemin{F}{w}$,
that is, $C \subseteq \suppo(\facemin{F}{w})$,
then the vector $\pmind{C}$ could be used to
obtain a point in the face $F$
with lower $w$-weight
than the points in $\facemin{F}{w}$,
a contradiction.
This argument is possible because of the simple structure of the bipartite perfect matching polytope.

In the non-bipartite case, it is not enough that $C \subseteq \suppo(\facemin{F}{w})$ in order to obtain such a point (and a contradiction).
It is also required that, if the cycle $C$ enters a tight odd-set $S$ on an even-numbered edge,
it exits it on an odd-numbered edge (and vice versa).
This makes intuitive sense:
if $C$ were obtained from the symmetric difference of two perfect matchings
which both have exactly one edge crossing $S$,
then $C$ would have this property.
Formally, we require that $\ab{\pmind{C}, \one_{\delta(S)}} = 0$ for each $S \in \tight(\facemin{F}{w})$.
If $C$ satisfies these two conditions, i.e.,
that $C \subseteq \suppo(\facemin{F}{w})$
and that
 $\ab{\pmind{C}, \one_{\delta(S)}} = 0$ for every $S \in \tight(\facemin{F}{w})$,
then we say that $C$ \emph{respects} the face $\facemin{F}{w}$.
The notion of respecting a face  exactly formalizes what is required to obtain a contradictory point as above (see the proof of \cref{disc_no_respect}).

In other words,  if we assign a nonzero circulation to a cycle,
then it will not respect the new face,
and this is what is now meant by removing a cycle.

To deal with the second, more technical difficulty discussed in \cref{sec:challenges},
we need to remove not only simple cycles of even length,
but also walks with repeated edges.
However, we would run into problems
if we allowed all such walks (up to a given length).
Consider for example a walk $C$ of length $2$;
such a walk traverses an edge back and forth.
It is impossible to assign a nonzero circulation to $C$,
because its vector $\pmind{C}$ is zero.
We overcome this technicality by defining alternating circuits to be those even walks whose vector $\pmind{C}$ is nonzero (see Figure~\ref{fig:altcircuit} for an example).
For generality, we also
formulate the definition of respect in terms of the vector $\pmind{C}$.

\begin{figure}[t]
  \begin{center}
    \begin{tikzpicture}
      \tikzstyle{vertex}=[circle, fill=black, minimum size=4,inner sep=1pt]
      \node[vertex](u1) at (0.7,0) {};
      \node[vertex](u2) at (0.7,2) {};
      \node[vertex](u3) at (-0.7,1) {};

      \draw[fill = gray!30!white] (3.1, 1) ellipse (1.2cm and 0.85cm);
      \node[vertex](v2) at (2.3,1) {};
      \node[vertex](v3) at (4,1) {};
      \draw (u1) edge[thick] node[fill=white,inner sep=2pt] {\scriptsize $e_4$} (v2);
      \draw (u2) edge[decorate,decoration={snake,amplitude=.3mm,segment length=4pt,post length=0mm}, thick] node[fill=white,inner sep=2pt] {\scriptsize $e_1$}(v2);
      \draw (u1)  edge[decorate,decoration={snake,amplitude=.3mm,segment length=4pt,post length=0mm}, thick] node[fill=white, inner sep=2pt] {\scriptsize $e_3$}(u3);
      \draw (u2) edge[thick] node[fill=white,inner sep=2pt] {\scriptsize $e_2$}(u3);
      \draw (v2) edge[thick, bend left=20] node[fill=gray!30!white,inner sep=2pt] {\scriptsize $e_0$}(v3);
      \draw (v2) edge[decorate,decoration={snake,amplitude=.3mm,segment length=4pt,post length=0mm}, thick, bend right=20] node[fill=gray!30!white,inner sep=2pt] {\scriptsize $e_5$}(v3);
    \end{tikzpicture}
  \end{center}
  \caption{An example of an alternating circuit $C$ of length $6$ with indicator vector $\pmind{C} = \sum_{i=0}^{5} (-1)^i \one_{e_i} = - \one_{e_1} + \one_{e_2} - \one_{e_3} + \one_{e_4}$ (since $\one_{e_0}$ and $\one_{e_5}$ cancel each other).  Also note that $\ab{\pmind{C}, \one_{\delta(S)}} = 0$ for the tight set $S$ depicted in gray.}
  \label{fig:altcircuit}
\end{figure}

\begin{definition} \label{def:alternating_circuit}
Let $C = (e_0, ..., e_{k-1})$ be a nonempty cyclic walk of even length $k$.
\begin{itemize}
  \item We define the \emph{alternating indicator vector $\pmind{C}$ of $C$} to be $\pmind{C} = \sum_{i=0}^{k-1} (-1)^i \one_{e_i}$, where $\one_e \in \mathbb{R}^E$ is the indicator vector having $1$ on position $e$ and $0$ elsewhere.
  \item We say that $C$ is an \emph{alternating circuit} if its alternating
    indicator vector is nonzero. We also refer to $C$ as an \emph{alternating
    (simple) cycle} if it is an alternating circuit that visits every vertex at most once. 
	\item When talking about a graph with node-weights, the node-weight of an alternating circuit is the sum of all node-weights of visited vertices (with multiplicities if visited multiple times).
\end{itemize}
\end{definition}
\ifieee\else We remark that $\pmind{C}$ does not need to have all entries $-1$, $0$ or $1$ since edges can repeat in $C$. \fi

\begin{definition} \label{def:respect}
We say that a vector $y \in \bZ^E$ \emph{respects} a face $F$ if:
\begin{itemize}
	\item $\supp(y) \subseteq \suppo(F)$, and
	\item for each $S \in \tight(F)$ we have $\ab{y, \one_{\delta(S)}} = 0$.
\end{itemize}
	Furthermore, we say that an alternating circuit $C$ respects a face $F$ if its alternating indicator vector $\pmind{C}$ respects $F$.
\end{definition}

Clearly, if $F' \subseteq F$ are faces and a vector respects $F'$, then it also respects $F$.

Now we argue that
we can remove an alternating circuit
by assigning it a nonzero circulation.
The proof of this lemma (which generalizes Lemma 3.2 of \cite{FennerGT16}) motivates \cref{def:respect}.

\begin{lemma} \label{disc_no_respect}
Let $y \in \bZ^E$ be a vector
and $F$ a face.
If $w : E \to \bR$ is such that $\ab{y, w} \ne 0$, then
$y$ does not respect
the face $F' = \facemin{F}{w}$.
\end{lemma}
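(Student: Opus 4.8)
\textbf{Proof plan for \cref{disc_no_respect}.}
The plan is to argue by contradiction: suppose $y$ does respect $F' = \facemin{F}{w}$, and construct from $y$ a point in $F$ with strictly smaller $w$-weight than any point in $F'$ — contradicting the fact that $F'$ is precisely the set of $w$-minimizers in $F$. The starting observation is that $F'$ is a face of the integral polytope $\PM$, hence it contains an integral point, i.e.\ a perfect matching $M$ with $\one_M \in F'$. We will perturb $\one_M$ in the direction $\pm\eta\, y$ for a suitably small $\eta > 0$ and show the perturbed point stays in $F$ (in fact in the affine hull of $F$ and satisfies all the polytope inequalities) while its $w$-value changes by $\pm\eta\ab{y,w} \ne 0$; choosing the sign that decreases the objective gives the contradiction.

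The key steps, in order. First, since $y$ respects $F'$, every coordinate of $\supp(y)$ lies in $\suppo(F')$, and more concretely one should note that we may assume $\supp(y)\subseteq \suppo(F)$ and that $y$ is ``$M$-alternating'' in the relevant sense; the cleanest route (mirroring Lemma 3.2 of \cite{FennerGT16}) is to first reduce to the case where $y = \pmind{C}$ for an alternating circuit $C$ living in $\suppo(F')$, or simply to work with a general $y$ and check the three families of constraints directly. Second, check the vertex equalities $x(\delta(v)) = 1$: for these to be preserved we need $\ab{y, \one_{\delta(v)}} = 0$ for all $v\in V$; this holds because singletons $\{v\}$ are tight for every face, in particular $\{v\}\in\tight(F')$, so the respect condition gives exactly $\ab{y,\one_{\delta(v)}}=0$. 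Third, check the odd-set inequalities $x(\delta(S))\ge 1$: for $S$ with $x_M(\delta(S)) > 1$ (slack), a small enough $\eta$ keeps the inequality satisfied regardless of $\ab{y,\one_{\delta(S)}}$; for $S$ with $x_M(\delta(S)) = 1$, i.e.\ $S$ tight at $\one_M$, we need $\ab{y,\one_{\delta(S)}} = 0$ — and here is where we must be a little careful, since not every such $S$ need be in $\tight(F')$. This is handled by \cref{face_structure}: the constraints tight at $\one_M\in F'$ are spanned (as functionals $\one_{\delta(S)}$) by those in $\tight(F')$ together with the vertex equalities, and since $y$ is orthogonal to every generator of $\Span(\tight(F'))$ and to every $\one_{\delta(v)}$, it is orthogonal to all of them. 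Fourth, check $x_e \ge 0$: for $e\notin\supp(y)$ the coordinate is unchanged; for $e\in\supp(y)$ we have $e\in\suppo(F')$, and one argues $(\one_M)_e$ can be taken positive (or, more carefully, one picks $M$ and the circuit so that the perturbation only decreases coordinates that are $1$ and increases coordinates that are $0$, which is exactly the structure of an alternating indicator vector of a circuit inside $\suppo(F')$). Finally, with all constraints verified for small $\eta$, both $\one_M + \eta y$ and $\one_M - \eta y$ lie in $F$, and $\ab{w,\one_M\pm\eta y} = \ab{w,\one_M}\pm\eta\ab{y,w}$; picking the sign with $\mp\eta\ab{y,w} < 0$ produces a point of $F$ with $w$-value below $\min_{x\in F}\ab{w,x}$, the contradiction.

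The main obstacle I expect is the $x_e\ge 0$ bookkeeping in the general-$y$ formulation: an arbitrary integer vector $y$ respecting $F'$ need not be a $\pm1$ alternating pattern supported on a matching, so ``$\one_M + \eta y$ has nonnegative coordinates'' is not automatic from $\supp(y)\subseteq\suppo(F')$ alone. The resolution is the standard one: it suffices to prove the statement for $y = \pmind{C}$ with $C$ an alternating circuit — which is the only case we actually invoke later — and for such $y$, decompose $y$ over the edges of $M$ versus $E\setminus M$; since $C$ alternates, at each vertex the contribution of $M$-edges and non-$M$-edges to $\pmind{C}$ balances, so $\one_M + \eta\,\pmind{C}$ stays in $[0,1]^E$ for $\eta\le 1$ and remains in the affine hull by the orthogonality already established. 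Then run the argument above to get the contradiction. (If one genuinely wants the lemma for arbitrary $y\in\bZ^E$, the same reduction works after replacing $M$ by any $\one_M\in F'$ and noting that respecting $F'$ forces $\supp(y)\subseteq\suppo(F')$, so the relevant coordinates are all strictly between the polytope's bounds at some interior-of-face point, and one perturbs that point instead of a vertex.)
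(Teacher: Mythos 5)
Your overall strategy -- assume $y$ respects $F'$, perturb a point of $F$ in the direction $\pm\eta y$, and contradict the minimality defining $F'$ -- is exactly the paper's. But the main line of your execution has a genuine gap, and the correct argument appears only in your final parenthetical. The problem is your choice of base point. You perturb a vertex $\one_M$ of $F'$ and claim that for $y = \pmind{C}$ with $C$ an alternating circuit, ``at each vertex the contribution of $M$-edges and non-$M$-edges to $\pmind{C}$ balances, so $\one_M + \eta\,\pmind{C}$ stays in $[0,1]^E$.'' This is false: an alternating circuit in the sense of \cref{def:alternating_circuit} is just an even closed walk with nonzero alternating indicator vector; nothing forces its odd-indexed edges to lie in $M$. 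Take a $4$-cycle in $\suppo(F')$ none of whose edges belong to $M$: then $\one_M + \eta\,\pmind{C}$ has a coordinate equal to $-\eta < 0$. Moreover, circuits may repeat edges, so $\pmind{C}$ can have entries of magnitude greater than $1$. Your fallback claim that ``$y=\pmind{C}$ is the only case we actually invoke later'' is also not true for this paper: \cref{killing_set_of_cycles} is applied (inside the proof of \cref{circuit-removal}) to the vectors $y = z + \sum_i(\one_{M_i^+} - \one_{M_i^-})$ produced by \cref{killing_cycle}, which are general integer vectors, not alternating indicator vectors of circuits. So the lemma genuinely must be proved for arbitrary $y \in \bZ^E$.

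The fix you sketch in your closing parenthesis is precisely the paper's proof, and it is essential rather than optional: take $x$ to be the average of all extreme points of $F'$, so that the constraints of $\PM$ tight at $x$ are \emph{exactly} those tight for $F'$. Then every edge of $\suppo(F')$ has $x_e > 0$ strictly and every odd set not in $\tight(F')$ has $x(\delta(S)) > 1$ strictly, so for sufficiently small $\eta$ the point $x + \eta y$ satisfies all constraints: the strictly slack ones by smallness of $\eta$, the constraints $x_e = 0$ for $e \notin \suppo(F')$ because $\supp(y) \subseteq \suppo(F')$, and the tight odd-set (and singleton) constraints because $\ab{y,\one_{\delta(S)}} = 0$ for $S \in \tight(F')$. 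No appeal to \cref{face_structure}, to integrality of $F'$, or to any $M$-alternation property is needed. I recommend promoting that parenthetical to the body of the proof and deleting the vertex-perturbation route entirely.
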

\begin{proof}
Suppose towards a contradiction that $y$ respects $F'$.
Assume that $\ab{w,y} < 0$ (otherwise use $-y$ in place of $y$).
We pick $x \in F'$ to be the average of all extreme points of $F'$,
so that the constraints of $\PM$ which are tight for $x$ are exactly those which are tight for $F'$.
Select $\varepsilon > 0$ very small.
Then $\ab{x + \varepsilon y, w} < \ab{x, w}$, which will contradict the definition of $F' = \argmin \{ \ab{w,x} : x \in F \}$ once we show that $x + \varepsilon y \in F$.
We show that $x + \varepsilon y \in F' \subseteq F$ by verifying:
\begin{itemize}
	\item If $e \in \suppo(F')$ (i.e., $e$ is an edge with $x_e > 0$), then $(x + \varepsilon y)_e = x_e + \varepsilon y_e \ge 0$ if $\varepsilon$ is chosen small enough.
	\item If $e \in E \setminus \suppo(F')$ (i.e., $e$ is an edge with $x_e = 0$), then from $y$ respecting $F'$ we get $e \not \in \supp(y)$ and so $(x + \varepsilon y)_e = 0$.
	\item If $S \not \in \tight(F')$ is an odd set not tight for $F'$, i.e., $\ab{x, \one_{\delta(S)}} > 1$, then $\ab{x + \varepsilon y, \one_{\delta(S)}} = \ab{x, \one_{\delta(S)}} + \varepsilon \ab{y, \one_{\delta(S)}} \ge 1$ if $\varepsilon$ is chosen small enough.
	\item If $S \in \tight(F')$ is an odd set tight for $F'$ (this includes all singleton sets), then from $y$ respecting $F'$ we get $\ab{y, \one_{\delta(S)}} = 0$ and thus $\ab{x + \varepsilon y, \one_{\delta(S)}} = \ab{x, \one_{\delta(S)}} = 1$.
\end{itemize}
\end{proof}

The following lemma says that we can assign nonzero circulation to many vectors at once
using an oblivious choice of weight function from $\cW$.
It is a minor generalization of Lemma 2.3 of \cite{FennerGT16} and the proof remains similar. \ifieee \else \space We give it for completeness.\fi

\begin{lemma} \label{lem23}
For any number $s$ and for any set of $s$ vectors $y_1, ..., y_s \in \mathbb{Z}^E \setminus \{0\}$ with the boundedness property $\| y_i \|_1 \le 4n^2$, there exists $w \in \cW(n^3 s)$ with $\ab{y_i, w} \ne 0$ for each $i = 1, ..., s$.
\end{lemma}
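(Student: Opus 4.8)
Each weight function $w_k \in \cW(t)$ is defined by $w_k(e_j) = (4n^2+1)^j \bmod k$. The key observation is that for a fixed vector $y_i \in \mathbb{Z}^E \setminus \{0\}$ with $\|y_i\|_1 \le 4n^2$, the quantity
\[
N_i := \sum_j (y_i)_j \cdot (4n^2+1)^j
\]
is a nonzero integer: indeed, writing $(4n^2+1) =: B$, the expression $\sum_j (y_i)_j B^j$ is the evaluation of a nonzero polynomial with integer coefficients of absolute value bounded by $4n^2 < B$ at the point $B$, hence it is nonzero (this is just uniqueness of base-$B$ representation, using that the "digits" $|(y_i)_j|$ stay below $B$; a carry-free argument or an induction on the highest nonzero coordinate works). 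Moreover $|N_i| \le 4n^2 \cdot B^{|E|} \le B^{|E|+1} \le (4n^2+1)^{n^2}$, so $N_i$ has at most, say, $n^2 \log(4n^2+1) \le n^3$ bits for $n$ large; more precisely $|N_i| < 2^{n^3}$ is a crude but sufficient bound. Then $\ab{y_i, w_k} = \sum_j (y_i)_j (w_k(e_j)) \equiv \sum_j (y_i)_j B^j = N_i \pmod k$, so $\ab{y_i, w_k} \ne 0$ whenever $k \nmid N_i$.

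**The crux is a counting argument over primes.** For each $i$, the number of primes dividing $N_i$ is at most $\log_2 |N_i| \le n^3$ (each prime factor is at least $2$). Hence the total number of primes that divide some $N_i$ is at most $s \cdot n^3$. By the prime number theorem (or an elementary Chebyshev-type bound, which suffices here), the number of primes in $\{2, \ldots, n^3 s\}$ is at least $n^3 s / \ln(n^3 s) > s n^3$ once $n$ (hence $n^3 s$) is sufficiently large — here one uses the standing assumption that $n$ is large and that the number of primes up to $m$ is $\ge m / (2\ln m)$ for $m$ large, together with $n^3 s / (2 \ln(n^3 s)) > s n^3$ failing unless... wait, I need to be careful: I should pick the bound so that $\pi(n^3 s) > s n^3$. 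Let me instead bound the number of prime divisors of each $N_i$ more tightly: $|N_i| \le (4n^2+1)^{|E|+1} \le (4n^2+1)^{n^2}$, so the number of prime factors (with multiplicity, hence certainly without) is at most $\log_2 (4n^2+1)^{n^2} \le 3 n^2 \log_2 n$ for large $n$. Thus at most $3 s n^2 \log_2 n$ primes are "bad." Now $\pi(n^3 s) \ge n^3 s / (2 \ln(n^3 s))$, and for the regime of interest ($s$ at most polynomial in the ambient quantities, but actually $s$ is arbitrary) we need $n^3 s / (2\ln(n^3 s)) > 3 s n^2 \log_2 n$, i.e. $n^3 / (2 \ln(n^3 s)) > 3 n^2 \log_2 n$, i.e. $n / \log_2 n > 6 \ln(n^3 s)$. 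This holds provided $s$ is not astronomically large; since in all our applications $s \le \poly(n)$, we may safely assume $\ln s = O(\log n)$, and then the inequality holds for $n$ large. (If one wants it for truly arbitrary $s$, replace $\cW(n^3 s)$ by $\cW(n^3 s \log s)$ or similar; but the statement as given, read with the implicit convention that $s$ is polynomially bounded, is fine — and this matches how FGT state their Lemma 2.3.)

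**So the proof structure is:** (1) For each $i$, define $N_i = \sum_j (y_i)_j B^j$ with $B = 4n^2+1$; show $N_i \ne 0$ using base-$B$ uniqueness together with $\|y_i\|_1 \le 4n^2 < B$; bound $|N_i| \le B^{|E|+1}$. (2) Observe $\ab{y_i, w_k} \equiv N_i \pmod k$, so $k \nmid N_i \implies \ab{y_i, w_k} \ne 0$. (3) Count: each $N_i$ has $\le \log_2 B^{|E|+1} = O(n^2 \log n)$ prime divisors, so $\le O(s n^2 \log n)$ primes are bad overall. (4) Since there are $\ge \Omega(n^3 s / \log(n^3 s)) = \omega(s n^2 \log n)$ primes up to $n^3 s$ (for $n$ large, $s$ polynomially bounded), some prime $p \le n^3 s$ divides no $N_i$; take $w = w_p \in \cW(n^3 s)$.

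**The main obstacle** — really the only delicate point — is making the prime-counting inequality go through cleanly against the $O(s n^2 \log n)$ bad primes and the $n^3 s$ budget, being honest about the dependence on $s$. Everything else (nonvanishing of $N_i$, the modular reduction) is routine. Since the paper elsewhere only ever invokes this lemma with $s$ polynomial in $n$, the inequality is comfortable; I would state it so, mirroring the boundedness conventions already in use, and note that the crude bound $\pi(m) \ge m/(2\ln m)$ for $m$ large (Chebyshev) is all that is needed, so no deep number theory is invoked.
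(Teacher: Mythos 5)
Your proof follows the paper's steps (1) and (2) exactly -- defining $N_i = \ab{y_i, w'}$ with $w'(e_j) = (4n^2+1)^j$, showing $N_i \neq 0$ by domination of the highest nonzero coefficient, and reducing modulo $k$ -- but diverges at the crucial counting step, and the divergence matters. You restrict to prime moduli, bound the number of ``bad'' primes by $\sum_i \log_2 |N_i| = O(s\, n^2 \log n)$, and invoke a Chebyshev-type density bound to find a good prime below $n^3 s$. As you yourself note, this inequality only closes when $s$ is subexponential in $n$, whereas the lemma is stated for \emph{any} $s$ (and your fallback of enlarging the family changes the statement). The paper instead lets $k$ range over all integers in $\{2, \ldots, t\}$ with $t = n^3 s$ and argues by divisibility: if every such $k$ divided $\prod_i |N_i|$, then $\lcm(2,\ldots,t)$ would divide it, but $\prod_i |N_i| < (4n^2+1)^{(|E|+1)s} < 2^{n^3 s} = 2^t < \lcm(2,\ldots,t)$, using Nair's elementary bound $\lcm(2,\ldots,t) > 2^t$ for $t \geq 7$ (this is why \cref{def:W} requires $t \geq 7$). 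That single trick is exactly what removes the dependence on $s$ that you flag as the delicate point: the product of the $|N_i|$ scales as $2^{O(n^2 \log n)\, s}$ while $\lcm(2,\ldots,t)$ scales as $2^{n^3 s}$, so the comparison holds uniformly in $s$ with no appeal to prime density. Your argument is correct for every application in the paper (where $s \leq n^{17}$), but to prove the lemma as stated you should replace the prime-counting step with the lcm comparison.
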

We usually invoke \cref{lem23} with vectors $y_i$ being the alternating indicator vectors of alternating circuits.
Then
the quantities
$\ab{y_i, w}$ are the circulations of these circuits.
\ifieee
\else
\begin{proof}
Let $w' : E \to \bZ$ be given by $w'(e_j) = \largenumber^j$ for $j = 1, ..., |E|$.
Then we have $\ab{y_i, w'} \ne 0$ for each $i$ because the highest nonzero coefficient dominates the expression.
Formally, let $j'$ be maximum index with $y_i(e_{j'}) \ne 0$ and suppose $y_i(e_{j'}) > 0$ (the other case is analogous).  Then,
because $\| y_i \|_{\infty} \le \| y_i \|_1 \le 4n^2$, we have
\[
\ab{y_i, w'} =
y_i(e_{j'}) \largenumber^{j'} + \sum_{j < j'} y_i(e_j) \largenumber^j >
\largenumber^{j'} + \sum_{j = -\infty}^{j' - 1} (-4n^2) \largenumber^j
= 0 \,. 
\]
Let $t = n^3 s$.
We want to show that there exists $k \in  \{2, ..., t\}$ such that for all $i = 1, ..., s$, $\ab{y_i, w_k} \ne 0$. Recalling the definition of $w_k$ (see \cref{def:W}), $\ab{y_i, w_k} \ne 0$ is equivalent to $\abs{\ab{y_i, w'}} \ne 0 \mod k$. 

This will be implied if there exists $k \in  \{2, ..., t\}$ such that $\prod_i \abs{\ab{y_i, w'}} \ne 0 \mod k$.
So there should be some $k \in  \{2, ..., t\}$ not dividing $\prod_i \abs{\ab{y_i, w'}}$ -- equivalently, $\lcm(2,...,t)$ should not divide  $\prod_i \abs{\ab{y_i, w'}}$.
Knowing that $\prod_i \abs{\ab{y_i, w'}} \ne 0$, this will follow if we have $\prod_i \abs{\ab{y_i, w'}} < \lcm(2,...,t)$.
This is true because
\[
\prod_{i=1}^s \abs{\ab{y_i, w'}} < \rb{\largenumber^{|E| + 1}}^s < \largenumber^{n^2s} = 2^{n^2s \log \largenumber} < 2^{n^3s} = 2^t < \lcm(2,...,t)
\]
where we used that $\lcm(2,...,t) > 2^t$ for $t \ge 7$ \cite{Nair82}.
\end{proof}
\fi

\cref{lem23,disc_no_respect} together imply the following:

\begin{corollary} \label{killing_set_of_cycles}
Let $F$ be a face.
For any finite set of vectors $\cY \subseteq \mathbb{Z}^E \setminus \{0\}$ with the boundedness property $\| y \|_1 \le 4n^2$ for every $y \in \cY$, there exists $w \in \cW(n^3 \cdot |\cY|)$ such that each $y \in \cY$ does not respect 
the face $F' = \facemin{F}{w}$.
\ifieee \else \qedmanual\fi
\end{corollary}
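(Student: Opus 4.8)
The plan is to simply chain together the two preceding results in the obvious way. First I would invoke \cref{lem23} with the index set ranging over $\cY$: enumerate $\cY = \{y_1, \dots, y_s\}$ where $s = |\cY|$, note that each $y_i \in \mathbb{Z}^E \setminus \{0\}$ satisfies the boundedness hypothesis $\|y_i\|_1 \le 4n^2$ by assumption, and conclude that there exists $w \in \cW(n^3 s) = \cW(n^3 \cdot |\cY|)$ such that $\ab{y_i, w} \ne 0$ for every $i$.

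Next, set $F' = \facemin{F}{w}$. For each fixed $y \in \cY$ we have $\ab{y, w} \ne 0$, so \cref{disc_no_respect} (applied with this $y$, this $F$, and this $w$) tells us precisely that $y$ does not respect $F'$. Since this holds for every $y \in \cY$, the corollary follows. There is essentially no obstacle here: the statement is just the composition of \cref{lem23} (which produces a single oblivious $w$ working against all of $\cY$ at once) and \cref{disc_no_respect} (which converts "nonzero circulation" into "does not respect the minimizing subface"). The only thing to keep straight is that the bound $\cW(n^3 \cdot |\cY|)$ comes directly from the parameter $t = n^3 s$ in \cref{lem23}, so no extra slack is needed.

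\begin{proof}
Write $\cY = \{y_1, \dots, y_s\}$ with $s = |\cY|$; each $y_i \in \mathbb{Z}^E \setminus \{0\}$ satisfies $\|y_i\|_1 \le 4n^2$ by hypothesis. By \cref{lem23}, there exists $w \in \cW(n^3 s) = \cW(n^3 \cdot |\cY|)$ such that $\ab{y_i, w} \ne 0$ for every $i = 1, \dots, s$. Let $F' = \facemin{F}{w}$. Fix any $y \in \cY$; since $\ab{y, w} \ne 0$, \cref{disc_no_respect} implies that $y$ does not respect $F'$. As $y \in \cY$ was arbitrary, this proves the claim.
\end{proof}
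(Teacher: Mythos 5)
Your proof is correct and matches the paper exactly: the corollary is stated there as an immediate consequence of \cref{lem23} and \cref{disc_no_respect}, with no further argument given, and your composition of the two lemmas (including the parameter $t = n^3|\cY|$) is precisely what is intended.
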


\section{Contractible sets and $\lambda$-goodness} \label{sec:lambda-goodness}

We will make progress by ensuring that larger and larger parts of the graph are ``isolated'' in our current face $F$.
By ``parts of the graph'' we mean sets $S$ which are tight for $F$.
As discussed in \cref{sec:ourapproach},
for such a set $S$, the following isolation property is desirable: once the (only) edge of a matching which lies on the boundary of $S$ is fixed,
the entire matching inside $S$ is uniquely determined.
This motivates the following definition:
\begin{definition}
  \label{def:contractable}
Let $F$ be a face and let $S\in \tight(F)$ be a tight set for $F$. We say that $S$ is \emph{$F$-contractible} if for every $e \in \delta(S)$ there are no two perfect matchings in $F$ which both contain $e$ and are different inside $S$.
\end{definition}
Note that, in the above definition, there could be no such perfect matching for certain edges $e\in \delta(S)$ (this is the case if and only if $e \not \in \suppo(F)$).
Intuitively, a contractible set can be thought of as a single vertex with respect to the structure of the current face of the perfect matching polytope.
The notion of contractibility enjoys the following two natural monotonicity properties:

\begin{fact}
  \label{sub-face-contractable}
Let $F' \subseteq F$ be two faces. If $S$ is $F$-contractible, then it is also $F'$-contractible.
\ifieee \else \qedmanual\fi
\end{fact}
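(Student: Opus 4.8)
The statement to prove is \cref{sub-face-contractable}: if $F' \subseteq F$ are faces and $S$ is $F$-contractible, then $S$ is also $F'$-contractible.

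The plan is to unwind the definitions directly; this should be a short argument. First I would note that since $S \in \tight(F)$ and $F' \subseteq F$, the set $S$ is also tight for $F'$ (as remarked earlier in the excerpt: if a set is tight for a face, it is tight for any subface), so the property ``$F'$-contractible'' is well-defined to ask about. Next, suppose towards a contradiction that $S$ is \emph{not} $F'$-contractible. Then by \cref{def:contractable} there is an edge $e \in \delta(S)$ and two perfect matchings $M_1, M_2 \in F'$ which both contain $e$ and differ inside $S$, i.e. $M_1 \cap E(S) \neq M_2 \cap E(S)$.

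The key step is then simply that $F' \subseteq F$ forces $M_1, M_2 \in F$ as well (a perfect matching in $F'$, identified with its indicator vector, lies in $F'$ hence in $F$). So we have produced an edge $e \in \delta(S)$ and two perfect matchings in $F$ containing $e$ and differing inside $S$ --- contradicting $F$-contractibility of $S$. Hence $S$ is $F'$-contractible, completing the proof.

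I do not expect any genuine obstacle here: the whole content is that contractibility is a statement of the form ``there do not exist two matchings in $F$ with such-and-such property,'' and shrinking $F$ to $F'$ only shrinks the set of candidate matchings, so a non-existence statement is preserved downward. The only point requiring a word of care is confirming that $S \in \tight(F')$ so that the notion applies, which follows from the monotonicity of tightness under taking subfaces noted just after the definition of $\tight(F)$.
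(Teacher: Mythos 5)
Your proof is correct and is exactly the intended argument: the paper states this as a Fact with no written proof precisely because contractibility is a non-existence statement about matchings in the face, which is preserved when passing to a subface. Your added care about $S\in\tight(F')$ (via monotonicity of tightness under subfaces) is also right and matches the paper's earlier remark.
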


\begin{lemma} \label{contractability_downward_closed}
Let $F$ be a face and $S \subseteq T$ two sets tight for $F$, i.e., $S, T \in \tight(F)$. If $T$ is $F$-contractible, then so is $S$.
\end{lemma}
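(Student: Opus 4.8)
The plan is to argue by contradiction, deriving a violation of the $F$-contractibility of $T$ from a hypothetical violation for $S$. So suppose $S$ is not $F$-contractible: there is an edge $e\in\delta(S)$ and two perfect matchings $M_1,M_2\in F$ with $e\in M_1\cap M_2$ that differ inside $S$, i.e.\ $M_1\cap E(S)\neq M_2\cap E(S)$. The first step is to observe that, since $S\in\tight(F)$, each of $M_1,M_2$ has exactly one edge in $\delta(S)$; as $e$ lies in both, that edge is $e$ in both cases, so $(M_1\triangle M_2)\cap\delta(S)=\emptyset$.

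The second step is to isolate the discrepancy inside a single alternating cycle lying in $E(S)$. Because $M_1,M_2$ are perfect matchings, $M_1\triangle M_2$ is a vertex-disjoint union of even alternating cycles, and since $M_1\cap E(S)\neq M_2\cap E(S)$ some such cycle $C$ contains an edge of $E(S)$. As $C\subseteq M_1\triangle M_2$ avoids $\delta(S)$ and is connected, a one-line argument (no two consecutive edges of $C$ can have endpoints on opposite sides of $S$) shows that every edge of $C$ lies in $E(S)$. Since $S\subseteq T$, we get $E(S)\subseteq E(T)$, so $C\subseteq E(T)$; in particular $C$ uses no edge of $\delta(T)$.

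The key technical ingredient I would establish next is a cycle-flipping fact: if $M_1,M_2\in F$ and $C$ is a cycle of $M_1\triangle M_2$, then the perfect matching $M':=M_1\triangle C$ also lies in $F$. Recall that a point of $\PM$ lies in $F$ iff it satisfies every constraint tight for $F$, i.e.\ $x_e=0$ for $e\notin\suppo(F)$ and $x(\delta(S'))=1$ for $S'\in\tight(F)$. The support condition is immediate since $M'\subseteq M_1\cup M_2\subseteq\suppo(F)$. For the odd-set constraints: for $S'\in\tight(F)$ we have $|M_1\cap\delta(S')|=|M_2\cap\delta(S')|=1$, so among the (evenly many) edges of $C$ that cross $\delta(S')$ at most one is from $M_1$ and at most one from $M_2$; hence $C$ crosses $\delta(S')$ either zero or two times, with equally many $M_1$-edges and $M_2$-edges, and therefore flipping $C$ keeps $|M'\cap\delta(S')|=1$. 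Thus $M'\in F$.

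Finally I would combine these. Let $f$ be the unique edge of $M_1$ in $\delta(T)$ (it exists since $T\in\tight(F)$). Since $C$ uses no edge of $\delta(T)$, we have $M'\cap\delta(T)=M_1\cap\delta(T)=\{f\}$, so $f\in M_1\cap M'$. On the other hand $M'\triangle M_1=C$ is a nonempty subset of $E(T)$, so $M'\cap E(T)\neq M_1\cap E(T)$, i.e.\ $M_1$ and $M'$ differ inside $T$. Hence $M_1,M'\in F$ both contain $f\in\delta(T)$ and differ inside $T$, contradicting the $F$-contractibility of $T$. I expect the main obstacle to be precisely the cycle-flipping fact: one must make sure that flipping $C$ does not break a tight odd-set constraint $S'$ that $C$ crosses, and the point that rescues us is that tightness of $S'$ caps the number of matching edges of each $M_i$ on $\delta(S')$ at one, forcing the crossings of $C$ along $\delta(S')$ to balance.
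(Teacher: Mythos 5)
Your proof is correct and follows essentially the same strategy as the paper's: construct a second perfect matching in $F$ that agrees with $M_1$ on $\delta(T)$ but differs inside $T$, and verify membership in $F$ using the fact that every tight odd set meets each of $M_1,M_2$ in exactly one edge. The only cosmetic difference is that the paper swaps the entire interior of $S$ at once (taking $M_{12}=(M_1\setminus E(S))\cup(M_2\cap E(S))$ and bounding $|\delta(R)\cap M_{12}|\le 2$ by parity), whereas you flip a single alternating cycle of $M_1\triangle M_2$ contained in $E(S)$.
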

\ifieee
The proof can be found in the full version of the paper.
\else
\begin{proof}
Let $e \in \delta(S)$.
Suppose that $M_1$ and $M_2$ are two perfect matchings in $F$ which contain $e$ but are different inside $S$.
We will argue that in that case there also exist two perfect matchings $M_1$ and $M_{12}$ in $F$ which contain $e$, are different inside $S$, and are equal outside of $S$.

Once we have that, we conclude as follows.
Let $f$ be the (only) edge in $\delta(T) \cap M_1$ (perhaps $f = e$); then also $f \in M_{12}$.
Then $M_1$ and $M_{12}$ are two perfect matchings in $F$ which contain $f \in \delta(T)$ but are different inside $T$, contradicting that $T$ is $F$-contractible.

\begin{figure}[t!]
  \begin{center}
    \tikzset{external/export next=false} 
    \begin{tikzpicture}
    \tikzstyle{vertex}=[circle, fill=black, minimum size=2,inner sep=1pt]

        \draw[fill=gray!20!white] (3.1,-0.8) ellipse (2.3cm and 1.4cm) node[above right = 0.50cm and 1.13cm] {\small $S$};
        \node[vertex] (e1) at (1.9,-0.4) { };  
        \node[vertex] (e1o) at (1.5,-0.8) { };  
        \draw (e1) edge[bend left = 20]  (e1o);
        \draw(e1) edge[decorate,decoration={snake,amplitude=.3mm,segment length=4pt,post length=0mm}, ultra thick, bend right=20] (e1o);

        \node[vertex] (a1) at (2.7,-1.4) { };  
        \node[vertex] (a2) at (2.9,-1.0) { };  
        \node[vertex] (a3) at (3.4,-0.8) { };  
        \node[vertex] (a4) at (3.9,-0.8) { };  
        \node[vertex] (a5) at (4.1,-1.2) { };  
        \node[vertex] (a6) at (4.1,-1.6) { };  
        \node[vertex] (a7) at (3.7,-1.7) { };  
        \node[vertex] (a8) at (3.1,-1.7) { };  
        \draw (a1) edge (a2);
        \draw (a2) edge[decorate,decoration={snake,amplitude=.3mm,segment length=4pt,post length=0mm}, ultra thick] (a3);
        \draw (a3) edge (a4);
        \draw (a4) edge[decorate,decoration={snake,amplitude=.3mm,segment length=4pt,post length=0mm}, ultra thick] (a5);
        \draw (a5) edge (a6);
        \draw (a6) edge[decorate,decoration={snake,amplitude=.3mm,segment length=4pt,post length=0mm}, ultra thick] (a7);
        \draw (a7) edge (a8);
        \draw (a8) edge[decorate,decoration={snake,amplitude=.3mm,segment length=4pt,post length=0mm}, ultra thick] (a1);
        \node[vertex] (b1) at (3.3,0.25) { };  
        \node[vertex] (b2) at (3.3,0.90) { };  
        \draw (b1) edge[ultra thick, bend right = 20] node[above right= 0cm and -0.05cm] {\scriptsize $e$}  (b2);
        \draw(b1) edge[decorate,decoration={snake,amplitude=.3mm,segment length=2pt,post length=0mm}, bend left=20] (b2);
        \begin{scope}[xshift=2.5cm, yshift = -3cm]
          \node[vertex] (b1) at (4.0, 2.4) { };  
          \node[vertex] (b2) at (4.5, 2.4) { };  
          \node[vertex] (b3) at (4.5, 1.9) { };  
          \node[vertex] (b4) at (4.0, 1.9) { };  
          \draw (b1) edge[ultra thick] (b2);
          \draw (b2) edge[decorate,decoration={snake,amplitude=.3mm,segment length=2pt,post length=0mm}] (b3);
          \draw (b3) edge[ultra thick] (b4);
          \draw (b4) edge[decorate,decoration={snake,amplitude=.3mm,segment length=2pt,post length=0mm}] (b1);
        \end{scope}
    \end{tikzpicture}
  \end{center}
  \caption{Illustration of the matching $M_{12}$ constructed in the proof of Lemma~\ref{contractability_downward_closed}. Straight and swirly edges denote $M_1$ and $M_2$ respectively. The thick edges denote $M_{12}$, which agrees with $M_1$ outside $S$ and with $M_2$ inside $S$.}
  \label{fig:contractability_downward_closed}
\end{figure}
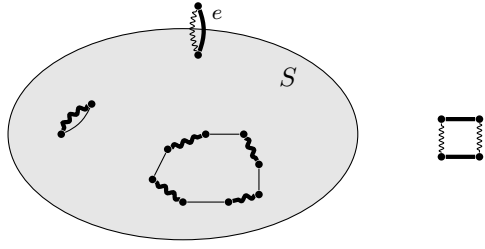

To get the outstanding claim, we define 
\[ M_{12} = (M_1 \setminus E(S)) \cup (M_2 \cap E(S)) \]
 to be the perfect matching that agrees with $M_1$ on
 all edges not in $E(S)$ and agrees with $M_2$ on all edges in $E(S)$ (see Figure~\ref{fig:contractability_downward_closed}).
  To see that $M_{12}$ is a perfect matching, notice that both
  $M_1$ and $M_2$ are in $F$ and contain $e$.
  Furthermore, as $e \in \delta(S)$ for the tight set $S \in \tight(F)$,
  we have that $M_1 \cap E(S)$ and $M_2 \cap E(S)$ are both perfect matchings
  on the vertex set $S$ where we ignore the vertex incident to $e$.
  We can thus ``replace'' $M_1 \cap E(S)$ by  $M_2 \cap E(S)$  to obtain
  the perfect matching $M_{12}$. 

  We now show that $M_{12}$ is in the face
  $F$. Suppose the contrary.  Since $M_1$ and $M_2$ are both in $F$, we
  have  $M_{12} \subseteq \suppo(F)$. Therefore, if $M_{12}$ is not in $F$,  we must have 
  $|\delta(R) \cap M_{12}| > 1$ for some tight set $R \in \tight(F)$. Since $|R|$ is odd,
  also $|\delta(R) \cap M|$ is  odd
  for any perfect matching $M$. In particular, $|\delta(R) \cap M_{12}| \geq 3$, which contradicts
  \begin{align*}
    |\delta(R) \cap M_{12}| \leq |\delta(R) \cap M_1| + |\delta(R) \cap M_2| = 2\,,
  \end{align*}
  where the equality holds because $M_1$ and $M_2$ are perfect matchings in
  $F$ and $R \in \tight(F)$ is a tight set.
\end{proof}
\fi

In our proof, we will be working with faces and laminar families which are compatible in the following sense:

\begin{definition}
Let $F$ be a face and $\cL$ a laminar family. If $\cL \subseteq \tight(F)$, i.e., all sets $S \in \cL$ are tight for $F$, then we say that $(F, \cL)$ is a \emph{face-laminar pair}.
\end{definition}

Given a face-laminar pair $(F,\cL$), we will often work with a multigraph obtained from $G$ by contracting all small sets, i.e., those with size being at most some parameter $\lambda$ (which is a measure of our progress). This multigraph will be called the \emph{contraction} (see Figure~\ref{fig:contractiongraph} for an example).

In the contraction, we will also remove all boundaries of larger sets (i.e., those with size larger than $\lambda$).
This is done to simulate working inside each such large set independently, because the contraction then decomposes into a collection of disconnected components, one per each large set.
Because, in the contraction, each set in $\cL$ has either been contracted or has had its boundary removed, our task is reduced to dealing with instances having no laminar sets.

Moreover, we only include those edges which are still in the support of the current face $F$, i.e., the set $\suppo(F)$.

\ifieee
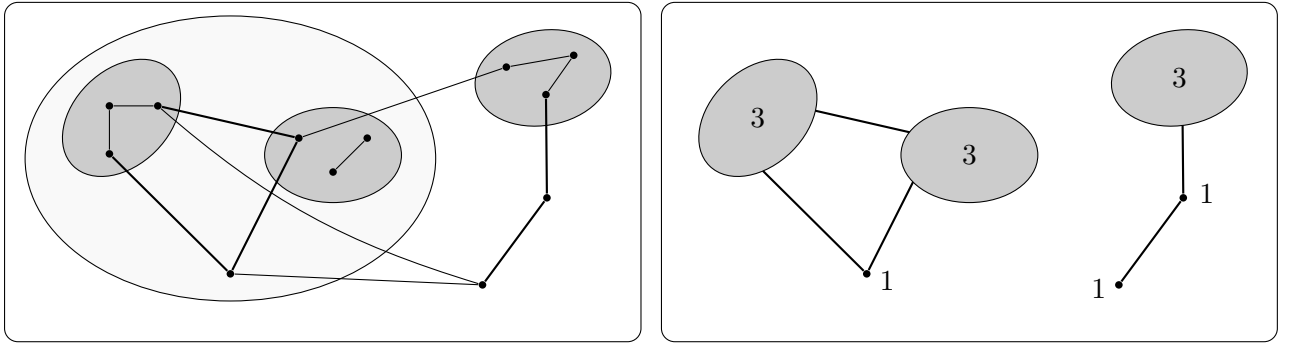
\begin{figure*}
\else
\begin{figure}[t]
\fi
  \begin{center}
  \subfloat[A graph $G$ and a laminar family $\cL$. We only draw the edges in $\suppo(F)$. We also do not draw ellipses for the singleton sets in $\cL$. The dark-gray sets are $F$-contractible.]{
    \ifieee \begin{tikzpicture}[scale=0.7]
    \else
    \begin{tikzpicture}[scale=0.9]
      \fi
    \tikzstyle{vertex}=[circle, fill=black, minimum size=2,inner sep=1pt]

        \draw[fill=gray!05!white] (0.0,1.7) ellipse (3.0cm and 2.1cm); 

        \node[vertex] (a1) at (0,0) { };  
  
        \begin{scope}[rotate=-45, xshift=-1.5cm, yshift=-1cm]
          \draw[fill=gray!40!white] (-1.25,1.5) ellipse (0.7cm and 1.0cm);
          \node[vertex] (b1) at (-1,1) { };  
          \node[vertex] (b2) at (-1.5,1.5) { };  
          \node[vertex] (b3) at (-1,2) { };  
        \end{scope}

        \begin{scope}[xshift=1.5cm,yshift=1.5cm]
          \draw[fill=gray!40!white] (0,0.25) ellipse (1cm and 0.7cm);
          \node[vertex] (c1) at (0,0) { };  
          \node[vertex] (c3) at (0.5,0.5) { };  
          \node[vertex] (c2) at (-0.5,0.5) { };  
        \end{scope}

        \begin{scope}[rotate=10,xshift=5.0cm,yshift=1.8cm]
          \draw[fill=gray!40!white] (0,0.25) ellipse (1cm and 0.7cm);
          \node[vertex] (d1) at (0,0) { };  
          \node[vertex] (d3) at (0.5,0.5) { };  
          \node[vertex] (d2) at (-0.5,0.5) { };  
          \node[vertex] (e) at (-0.25,-1.5) { };  
          \node[vertex] (f) at (-1.4,-2.6) { };  
        \end{scope}

        \draw (a1) edge[thick] (b1) edge[thick] (c2);
        \draw (b2) edge (b1) edge (b3);
        \draw (b3) edge[thick] (c2);
        \draw (c1) edge (c3);

        \draw (e) edge[thick] (f) edge[thick] (d1);
        \draw (d3) edge (d2) edge (d1);
        \draw (f) edge[bend left=12] (b3);
        \draw (f) edge (a1);

        \draw (d2) edge (c2);

        \draw[rounded corners=5pt] (-3.3,-1) rectangle (6,4);
 
    \end{tikzpicture}
  }
  \ifieee \qquad \qquad 
  \else \fi
  \subfloat[The $(F, \cL, 4)$-contraction of $G$. Its vertices are labeled by their node-weights.]{
    \ifieee \begin{tikzpicture}[scale=0.7]
    \else
    \begin{tikzpicture}[scale=0.9]
      \fi
      
    \tikzstyle{vertex}=[circle, fill=black, minimum size=2,inner sep=1pt]

          \node[vertex] (a1) at (0,0) { };  
          \node at (0.3, -0.1) { $1$ };
    
          \begin{scope}[rotate=-45, xshift=-1.5cm, yshift=-1cm]
            \node[vertex] (b1) at (-1,1) { };  
            \node[vertex] (b2) at (-1.5,1.5) { };  
            \node[vertex] (b3) at (-1,2) { };  
            \draw[fill=gray!40!white] (-1.25,1.5) ellipse (0.7cm and 1.0cm);
            \node at (-1.25, 1.5) {$3$};
          \end{scope}

          \begin{scope}[xshift=1.5cm,yshift=1.5cm]
            \node[vertex] (c1) at (0,0) { };  
            \node[vertex] (c3) at (0.5,0.5) { };  
            \node[vertex] (c2) at (-0.5,0.5) { };
            \draw[fill=gray!40!white] (0,0.25) ellipse (1cm and 0.7cm);
            \node at (0, 0.25) {$3$};
          \end{scope}

          \begin{scope}[rotate=10,xshift=5.0cm,yshift=1.8cm]
            \node[vertex] (d1) at (0,0) { };  
            \node[vertex] (d3) at (0.5,0.5) { };  
            \node[vertex] (d2) at (-0.5,0.5) { };  
            \node[vertex] (e) at (-0.25,-1.5) { };  
            \node at (0.1, -1.5) { $1$ };
            \node[vertex] (f) at (-1.4,-2.6) { };  
            \node at (-1.7, -2.6) { $1$ };
            \draw[fill=gray!40!white] (0,0.25) ellipse (1cm and 0.7cm);
            \node at (0, 0.25) {$3$};
          \end{scope}

          \begin{scope}[on background layer]
            \draw (a1) edge[thick] (b1) edge[thick] (c2);
            \draw (b3) edge[thick] (c2);

            \draw (e) edge[thick] (f) edge[thick] (d1);
          \end{scope}

          \draw[rounded corners=5pt] (-3.0,-1) rectangle (6,4);

    \end{tikzpicture}
    }
  \end{center}
  \vspace{-1.2em}
  \caption{
  An example of the $(F, \cL, \lambda)$-contraction of $G$. 
  }
  \label{fig:contractiongraph}
\ifieee
\end{figure*}
\else
\end{figure}
\fi

\begin{definition} \label{def:contraction}
Given a face-laminar pair $(F,\cL$) and a parameter $\lambda$ (with $1 \le \lambda \le 2n$), we define the
\emph{$(F, \cL, \lambda)$-contraction} of $G$ as a node-weighted
multigraph as follows:
\begin{itemize}
	\item the node set is the set of maximal sets of size (cardinality) at most $\lambda$ in $\cL$,
	\item each node has a node-weight equal to the size of the corresponding set,
	\item
	  the edge set is obtained from $\suppo(F) \setminus \bigcup_{T\in \cL: |T| > \lambda} \delta(T)$ by contracting each of these maximal sets.
	  That is, an edge of $G$ maps to an edge of the contraction if it is in $\suppo(F)$, it is not inside any of the contracted sets and it does not cross any cut defined by a set $T\in \cL: |T| > \lambda$. \ifieee \else Sometimes we identify edges of the contraction with their preimages in $G$. \fi
\end{itemize}
\end{definition}

In the $(F, \cL, \lambda)$-contractions arising in our arguments,
we will always only contract sets $S \in \cL$ which are $F$-contractible
(i.e., the vertices of a contraction will always correspond to $F$-contractible sets).
Then, a very useful property is that alternating circuits in the contraction can be lifted to alternating circuits in the entire graph $G$ in a  canonical way.
\ifieee\else
This is done in the proofs of \cref{killing_cycle,adding_new_laminar_sets}.
\fi

Finally, we need the following extension of \cref{def:respect} for vectors defined on the contraction. 

\begin{definition} \label{def:respect_contraction}
Denote the $(F, \cL, \lambda)$-contraction of $G$ as $H$, and let $z \in \bZ^{E(H)}$ be a vector on the edges of $H$.
We say that $z$ respects a subface $F' \subseteq F$ if~\footnote{In the following conditions we abuse notation and think of $z$ as a vector in $\bZ^{E}$ obtained by identifying each edge of $H$ with its preimage in $G$ and letting those edges of $G$ without a preimage in $H$ have value $0$.}
\begin{itemize}
	\item $\supp(z) \subseteq \suppo(F')$, and
	\item for each $S \in \tight(F')$ which is a union of sets corresponding to vertices in $V(H)$,\ifieee \else\footnote{That is, the maximal sets of size at most $\lambda$ in $\cL$.} \fi we have $\ab{z, \one_{\delta(S)}} = 0$.
\end{itemize}
As before, we say that an alternating circuit $C$ in $H$ respects a subface $F'$ if its alternating indicator vector $\pmind{C} \in \bZ^{E(H)}$ respects $F'$.
\end{definition}

Now we are able to define our measure of progress.
On one hand, we want to make larger and larger laminar sets contractible.
On the other hand, there could very well be no laminar sets, so we also proceed as in the bipartite case: remove longer and longer alternating circuits.

\begin{definition}
  \label{def:face-laminar}
Let $(F,\cL)$ be a face-laminar pair and $\lambda$ a parameter (with $1 \le \lambda \le 2n$). We say that $(F,\cL)$ is \emph{$\lambda$-good} if
$\cL$ is a maximal laminar subset of $\tight(F)$ and:
\begin{enumerate}
	\item[(i)] each $S \in \cL$ with $|S| \le \lambda$ is $F$-contractible,
	\item[(ii)] in the $(F, \cL, \lambda)$-contraction of $G$, there is no alternating circuit of node-weight at most $\lambda$.
\end{enumerate}
\end{definition}

We begin with $\lambda = 1$, which is trivial, and then show that by concatenating enough weight functions we can obtain face-laminar families which are $2$-good, $4$-good, $8$-good, and so on.
We are done once we have a $\lambda$-good family with $\lambda \ge n$.
The components of this proof strategy are given in the following three claims.
The first step is clear:

\begin{fact} \label{initialization}
Let $\cL_0$ be a maximal laminar subset of $\tight(PM)$.
Then the face-laminar pair $(\PM, \cL_0)$ is $1$-good.
\ifieee \else \qedmanual\fi
\end{fact}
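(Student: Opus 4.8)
The plan is to verify the three requirements of $\lambda$-goodness directly for $\lambda = 1$, the pair $(\PM, \cL_0)$, with $\cL_0$ a maximal laminar subset of $\tight(\PM)$. Maximality of $\cL_0$ as a laminar subset of $\tight(\PM)$ holds by assumption, so it remains to check conditions (i) and (ii) of \cref{def:face-laminar} with $\lambda = 1$.

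For condition (i), I would argue that every set $S \in \cL_0$ with $|S| \le 1$ is $\PM$-contractible. Since $|S|$ must be odd (all sets in $\tight(F)$ are odd by definition of $\tight$), the only possibility is $|S| = 1$, i.e. $S = \{v\}$ is a singleton. For a singleton, every perfect matching contains exactly one edge of $\delta(\{v\}) = \delta(v)$, and $E(S) = \emptyset$, so there is nothing inside $S$ that could differ between two matchings. Hence the condition in \cref{def:contractable} is satisfied vacuously, and $S$ is $\PM$-contractible.

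For condition (ii), I would show that the $(\PM, \cL_0, 1)$-contraction $H$ of $G$ contains no alternating circuit of node-weight at most $1$. By \cref{def:contraction} with $\lambda = 1$, the node set of $H$ is the set of maximal sets of size at most $1$ in $\cL_0$, i.e. the singletons; each node has node-weight exactly $1$. An alternating circuit, by \cref{def:alternating_circuit}, is a nonempty cyclic walk of even length $k \ge 2$, and therefore visits at least two vertices (with multiplicity), which in this contraction means node-weight at least $2$. (Even a closed walk of length $2$ traverses an edge and back, touching two endpoints; but such a walk has $\pmind{C} = 0$, so it is not an alternating circuit at all — in any case its node-weight exceeds $1$.) Thus there is no alternating circuit of node-weight at most $1$, and condition (ii) holds.

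I do not anticipate a serious obstacle here; the statement is essentially a base case and every clause reduces to an easy observation. The only point requiring a little care is making sure that the definitions of \emph{alternating circuit} and of \emph{node-weight in the contraction} are applied correctly — in particular that a nonempty cyclic walk of even length necessarily has node-weight at least $2$ in a contraction whose every vertex has weight $1$ — but this is immediate from the parity and nonemptiness in \cref{def:alternating_circuit}.
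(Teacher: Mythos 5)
Your verification is correct and is exactly the routine check that the paper leaves implicit (it states \cref{initialization} as clear, with no written proof): singletons are vacuously $\PM$-contractible since $E(\{v\})=\emptyset$, and every vertex of the $(\PM,\cL_0,1)$-contraction has node-weight $1$, so any nonempty cyclic walk of even length has node-weight at least $2>1$. No gaps.
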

\ifieee\else
Note that $\cL_0$ contains all singleton sets, i.e., $\{ \{v\} : v \in V \} \subseteq \cL_0$.
\fi

We then proceed iteratively in $\log_2 n$ rounds using the following theorem.
Its proof, which constitutes the bulk of our argument,
\ifieee
can be found in the full version of the paper.
\else
is given in \cref{proof_of_main}.
\fi

\begin{theorem} \label{main}
Let $(F,\cL)$ be a $\lambda$-good face-laminar pair.
Then there exists a weight function $w \in \cW^{\log_2 n + 1}$
and a laminar family $\cL' \supseteq \cL$ such that
$(\facemin{F}{w}, \cL')$ is a $2 \lambda$-good face-laminar pair.
\end{theorem}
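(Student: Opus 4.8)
\textbf{Proof plan for \cref{main}.}
The plan is to realize the two-step strategy outlined in \cref{sec:ourapproach}: first make all tight odd-sets of size at most $2\lambda$ contractible, then remove all alternating circuits of node-weight at most $2\lambda$ from the appropriate contraction. Each step consumes a controlled number of weight functions from $\cW$, and the total budget is $\log_2 n + 1$. Throughout, we maintain a face-laminar pair, enlarging $\cL$ whenever a new tight constraint not spanned by the current family appears (as permitted by \cref{face_structure}); the monotonicity facts \cref{sub-face-contractable,contractability_downward_closed} ensure that contractibility, once achieved for a set, is never lost when we pass to a subface.

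\emph{Step 1: making sets of size up to $2\lambda$ contractible.} Starting from the $\lambda$-good pair $(F,\cL)$, all sets of size at most $\lambda$ are already $F$-contractible. I would first contract all these small sets conceptually (each behaves like a single vertex). Then any remaining set in $\cL$ of size in $(\lambda, 2\lambda]$ cannot contain two disjoint sets of size $>\lambda$, so the sets of size in $(\lambda,2\lambda]$ lying inside a common maximal such set form a chain $S_1 \subsetneq \dots \subsetneq S_\ell$. This is exactly the chain case: I process the chain in phases, where in phase $t$ we handle all layers $U_{p,r} = S_r \setminus S_{p-1}$ with $r - p \le 2^{t-1}-1$. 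In each phase, because the shorter sublayers were made contractible in the previous phase, fixing the two boundary edges $e_{p-1}\in\delta(S_{p-1})$, $e_r\in\delta(S_r)$ leaves only $O(n^2)$ matchings inside $U_{p,r}$; the finitely many pairwise differences of such matchings are vectors of $\ell_1$-norm $\le 4n^2$, so \cref{killing_set_of_cycles} (applied with the union over all chains, all layers in the phase, and all boundary-edge pairs — at most $\poly(n)$ vectors) gives a single $w\in\cW$ whose minimizing subface makes all these layers contractible. Since a chain has length $\le 2n$, $\log_2(2n) = \log_2 n + 1$ phases suffice, so Step 1 uses $\log_2 n + 1$ weight functions. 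After Step 1, by \cref{contractability_downward_closed} applied to the new chains, every set in $\cL$ of size $\le 2\lambda$ is contractible, giving condition (i) of $2\lambda$-goodness.

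\emph{Step 2: removing alternating circuits of node-weight up to $2\lambda$.} Now form the $(F'',\cL',2\lambda)$-contraction $H$, where $F''$ is the current face: all sets of size $\le 2\lambda$ are contracted, all larger ones have their boundaries erased, so $H$ has no tight odd-sets of the relevant kind and — because $(F,\cL)$ was $\lambda$-good and contraction only shortens circuits — no alternating circuit of node-weight $\le \lambda$. It remains to kill alternating circuits of node-weight in $(\lambda, 2\lambda]$; by the counting argument alluded to in \cref{sec:challenges} (the $n^{17}$ bound), there are only polynomially many such circuits in $H$, with alternating indicator vectors $\pmind{C}$ of $\ell_1$-norm bounded (after collapsing repeated edges) by $4n^2$. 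Apply \cref{killing_set_of_cycles} once more to get a single $w\in\cW$ such that none of these circuits respects $\facemin{F''}{w}$. Lifting circuits from $H$ to $G$ canonically (using that contracted sets are $F''$-contractible), a surviving alternating circuit of node-weight $\le 2\lambda$ in the new contraction would lift to one respecting the new face, contradicting the conclusion of \cref{killing_set_of_cycles}; this yields condition (ii). Finally, set $\cL'$ to be a maximal laminar subset of $\tight$ of the final face extending $\cL$, and concatenate all the weight functions used; by \cref{concatenation_and_subface} the composite lies in $\cW^{(\log_2 n+1)+1}$ — here one must be slightly careful with the bookkeeping, folding Step 2's single function into the budget, so that the total is exactly $\cW^{\log_2 n + 1}$.

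\emph{Main obstacle.} The hard part is Step 1: proving rigorously that after the chain phases \emph{every} set of size $\le 2\lambda$ — not just the ones appearing in the current $\cL$ — becomes contractible, and that the "fixing two boundary edges determines the interior" property genuinely propagates up the chain through the divide-and-conquer recursion. This requires the decomposition-into-independent-subinstances property of tight odd-sets together with careful use of \cref{contractability_downward_closed}, and the argument that a single weight function works simultaneously for all layers and all boundary-edge choices. A secondary technical difficulty is the interplay between Step 1's contractions and the node-weighted multigraph formalism of \cref{def:contraction}, in particular checking that the lifting of alternating circuits from $H$ to $G$ respects the face correctly and that repeated edges do not spoil the $\ell_1$ bound needed for \cref{lem23}.
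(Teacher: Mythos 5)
Your overall strategy is the paper's: a chain-based divide-and-conquer making sets of size up to $2\lambda$ contractible, followed by one application of circuit removal in the $2\lambda$-contraction, followed by passing to a maximal laminar family. But there are two concrete gaps. First, your chain induction has no base case. In phase $1$ you must make the single layers $U_p = S_p \setminus S_{p-1}$ contractible, and there is no ``previous phase'' to appeal to: a single layer (e.g.\ $U_1 = S_1$) can have size up to $2\lambda$, and $\lambda$-goodness only excludes alternating circuits of node-weight up to $\lambda$ in the contraction, so two matchings agreeing on the boundary edges could still differ inside $U_p$ along an alternating cycle of node-weight in $(\lambda,2\lambda]$. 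The paper spends a separate weight function here (its $w_1$), obtained by removing all alternating circuits of node-weight up to $2\lambda$ from the $(F,\cL,\lambda)$-contraction — and this is where the $n^{17}$ signature-counting bound and the lifting of circuit vectors from the contraction back to $G$ (\cref{killing_cycle}-style) are first needed, not only in your Step 2. Second, your budget does not close: $\log_2 n + 1$ phases for Step 1 plus one function for Step 2 is $\log_2 n + 2$, and you explicitly leave this unresolved. The fix is that every set in a chain has size in $(\lambda, 2\lambda]$, so each chain has length less than $n/2$; hence $\log_2 n$ phases (including the base case) suffice for Step 1, and the remaining function is exactly Step 2's.

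Two smaller points. For the inductive phases, the relevant condition on $w$ is not that difference vectors ``do not respect'' the new face (\cref{killing_set_of_cycles}), but that $w(M_1 \cap E(U)) \ne w(M_2 \cap E(U))$ for the at most $n^2$ candidate interior matchings per boundary-edge pair — this is applied directly via \cref{lem23}, using that all matchings in $\facemin{F}{w}$ have equal $w$-weight. And the step you defer to the end — that \emph{every} tight set of size $\le 2\lambda$ in the new maximal family, not just those already in $\cL$, is contractible, and that removed circuits do not reappear in the contraction taken with respect to the \emph{new} family — is itself a substantial argument (the paper's \cref{adding_new_laminar_sets}): property (i) for the new sets is deduced from (ii)$'$ of the intermediate face, and property (ii) requires a second lifting argument from the new contraction to the old one. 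Your proposal flags this as the obstacle but attributes it to Step 1 rather than to the maximality step where it actually lives.
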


We are done once $\lambda$ exceeds $n$:

\begin{lemma} \label{finalization}
Suppose $(F,\cL)$ is $\lambda$-good for some $\lambda \ge n$. Then $|F| = 1$.
\end{lemma}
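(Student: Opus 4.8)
The plan is to assume for contradiction that $F$ is not a single point and to produce a violation of $\lambda$-goodness. Since $F$ is a nonempty face of the integral polytope $\PM$, if $|F| \ge 2$ then $F$ has at least two extreme points, which are indicator vectors of perfect matchings; fix distinct $M_1 \ne M_2$ with $\one_{M_1}, \one_{M_2} \in F$. First I would exploit the hypothesis $\lambda \ge n$: every $S \in \cL$ satisfies $|S| \le n \le \lambda$, so (a) no set of $\cL$ has its boundary erased in the $(F,\cL,\lambda)$-contraction $H$, and (b) the nodes of $H$ are exactly the (inclusion-)maximal sets of $\cL$, which — as $\cL$ contains all singletons — form a partition $\{P_1,\dots,P_m\}$ of $V$; thus $H$ is $(V,\suppo(F))$ with each $P_j$ contracted. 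Moreover each $P_j$ is $F$-contractible by clause (i) of $\lambda$-goodness, and each $P_j \in \tight(F)$, so every perfect matching in $F$ meets $\delta(P_j)$ in exactly one edge.

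The main step is to show that every cycle $C$ in the alternating decomposition of $M_1 \triangle M_2$ (a union of cycles, since $M_1,M_2$ are perfect) lies inside a single part $E(P_j)$. Assuming otherwise, $C$ crosses $\delta(P_j)$ for some $j$; since $C \subseteq M_1 \triangle M_2$ and each $M_i$ has exactly one edge in $\delta(P_j)$, the cut $C \cap \delta(P_j)$ has at most two edges, hence exactly two (it is the edge-cut of a closed walk), one from $M_1$ and one from $M_2$. So $C$ enters and leaves each part it meets exactly once, and contracting the partition turns $C$ into a simple cycle $\bar C$ in $H$ whose edge set is precisely the set of crossing edges of $C$ (these lie in $M_1 \cup M_2 \subseteq \suppo(F)$ and inside no part, so they are indeed edges of $H$). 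Using that $C$ alternates between $M_1$ and $M_2$ and that the two crossing edges at each met part lie one in $M_1$ and one in $M_2$, the portion of $C$ inside each met part is an alternating path of even length (possibly empty); hence $C$ has an even number of non-crossing edges, so $\bar C$ has even length. As $\bar C$ is a nonempty simple cycle, $\pmind{\bar C} \ne 0$, so $\bar C$ is an alternating circuit in $H$ of node-weight at most $\sum_j |P_j| = n \le \lambda$, contradicting clause (ii) of $\lambda$-goodness.

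To finish, I would derive a contradiction with clause (i). Since $M_1 \triangle M_2 \ne \emptyset$ and every cycle of its alternating decomposition lies inside some part, there is a part $P$ with $M_1 \cap E(P) \ne M_2 \cap E(P)$; on the other hand no such cycle meets $\delta(P)$, so $M_1 \cap \delta(P) = M_2 \cap \delta(P) = \{e\}$ for a single edge $e \in \delta(P)$. Thus $\one_{M_1}, \one_{M_2}$ are two perfect matchings in $F$ that both contain $e \in \delta(P)$ but differ inside $P$, contradicting the $F$-contractibility of $P$. Hence $|F| = 1$.

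I expect the only genuinely delicate point to be the parity bookkeeping in the main step — showing that $C$ crosses each part exactly twice (once via an $M_1$-edge, once via an $M_2$-edge) and that the segments of $C$ inside a part have even length — which is exactly what lets us interpret the contracted cycle as an alternating circuit and invoke clause (ii). Everything else is routine once one observes that the regime $\lambda \ge n$ collapses $H$ to a plain contraction of the partition into maximal laminar sets, so that the two clauses of $\lambda$-goodness become precisely ``all parts are contractible'' and ``$H$ has no alternating circuit at all''.
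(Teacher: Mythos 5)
Your proposal is correct and follows essentially the same route as the paper's proof: under $\lambda \ge n$ the contraction is just the quotient by the partition into maximal sets of $\cL$, clause (ii) rules out any alternating simple cycle arising from the symmetric difference of the two matchings across parts, and clause (i) (contractibility of each part) then forces the matchings to agree inside each part. The paper phrases this by comparing the matchings \emph{induced} on $H$ rather than projecting each cycle of $M_1 \triangle M_2$ individually, but the parity bookkeeping you carry out is exactly what justifies that step.
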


\ifieee
The proof can be found in the full version of the paper.
\else
We think that the proof of this lemma is instructive. It serves to understand and motivate \cref{def:face-laminar},
and more involved versions of this argument appear in the sequel.

\begin{proof}
Let $H$ be the $(F, \cL, \lambda)$-contraction of $G$.
Also let $S_1, S_2, ..., S_k$ be all maximal sets in $\cL$.
As $\cL$ contains all singletons, their disjoint union is $V$ and we have $V(H) = \{ S_1, ..., S_k \}$ and $E(H) = \bigcup_{i=1}^k \delta(S_i)$.
Since $(F, \cL)$ is $\lambda$-good with $\lambda \geq n$, each set $S \in \cL$ is $F$-contractible, and $H$ contains no alternating circuit of node-weight at most $\lambda$ --
in particular, $H$ contains no alternating simple cycle. Indeed, an upper bound on the node-weight of any alternating simple cycle is $|S_1| + ... + |S_k| = n \le \lambda$.

Now we show that there is only one perfect matching in $F$.
One direction is easy: since $F$ is a face, it is nonempty by definition.
For the other direction, let $M_1$ and $M_2$ be two perfect matchings in $F$. We show that $M_1 = M_2$.

Because the sets $S_1, ..., S_k$ are tight for $F$ and no edge can possibly cross a tight odd-set of cardinality at least $\lambda \geq n$, any perfect matching (in the face $F$) in $G$ induces a perfect matching in $H$.
If the matchings induced by $M_1$ and $M_2$ were different, then their symmetric difference would contain an alternating simple cycle in $H$, which is impossible.
So the induced matchings must be equal, i.e., $M_1 \cap \bigcup_i \delta(S_i) = M_2 \cap \bigcup_i \delta(S_i)$.
Moreover, the sets $S_1, ..., S_k$ are $F$-contractible, which means that, given the boundary edges, there is a unique perfect matching in $F$ inside each $S_i$.
This yields $M_1 = M_2$.
\end{proof}
\fi

\ifieee
Let
\else
Before we proceed to the proof of \cref{main}, let
\fi
us see how \cref{initialization,main,finalization} together give our desired  result:

\begin{theorem} \label{exists_isolating}
There exists an isolating weight function $w \in \cW^{\rb{\log_2 n + 1} \log_2 n}$, i.e., one with $|\facemin{\PM}{w}| = 1$.
\end{theorem}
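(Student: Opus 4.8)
The plan is to combine \cref{initialization,main,finalization} in a straightforward induction, tracking the weight functions we concatenate. First I would set $\lambda_0 = 1$ and, by \cref{initialization}, fix a maximal laminar subset $\cL_0$ of $\tight(\PM)$ so that $(\PM, \cL_0)$ is $1$-good. Then I would iterate \cref{main} exactly $\log_2 n$ times: having a $\lambda$-good face-laminar pair $(F, \cL)$, the theorem gives a weight function $w \in \cW^{\log_2 n + 1}$ and an enlarged laminar family $\cL' \supseteq \cL$ such that $(\facemin{F}{w}, \cL')$ is $2\lambda$-good. Starting from $\lambda_0 = 1$ and doubling each round, after $\log_2 n$ rounds we reach $\lambda = 2^{\log_2 n} = n$, so the final pair is $n$-good.

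Next I would collect the weight functions. Round $i$ (for $i = 1, \dots, \log_2 n$) contributes a function $w^{(i)} \in \cW^{\log_2 n + 1}$, and the face after round $i$ is $\facemin{F^{(i-1)}}{w^{(i)}}$ where $F^{(0)} = \PM$. Iterating \cref{concatenation_and_subface}, the face after all rounds equals $\facemin{\PM}{w}$ where $w = w^{(1)} \circ w^{(2)} \circ \dots \circ w^{(\log_2 n)}$. Since each $w^{(i)}$ is a concatenation of $\log_2 n + 1$ functions from $\cW$, the composite $w$ is a concatenation of $(\log_2 n + 1)\log_2 n$ functions from $\cW$, i.e. $w \in \cW^{(\log_2 n + 1)\log_2 n}$, as required by the statement. (Here I would note that concatenation is associative in the relevant sense — the nested bracketing in \cref{def:concatenation} composes correctly under \cref{concatenation_and_subface} — so it does not matter how the $w^{(i)}$ themselves were bracketed.)

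Finally, I would invoke \cref{finalization}: the last pair $(\facemin{\PM}{w}, \cL^{(\log_2 n)})$ is $n$-good, hence $\lambda$-good for $\lambda = n \ge n$, so $|\facemin{\PM}{w}| = 1$. This says $\facemin{\PM}{w}$ is a single extreme point of the perfect matching polytope, i.e. a single perfect matching, which is precisely the statement that $w$ is isolating.

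This proof is essentially bookkeeping; there is no real obstacle here, since all the content is in \cref{main} (whose proof is deferred) and in \cref{finalization}. The only point requiring a little care is the accounting of the concatenation depth — making sure that composing $\log_2 n$ functions, each itself living in $\cW^{\log_2 n + 1}$, lands us in $\cW^{(\log_2 n+1)\log_2 n}$ and that \cref{concatenation_and_subface} can indeed be applied repeatedly to conclude that minimizing over the single composite function $w$ reproduces the entire sequence of subfaces.
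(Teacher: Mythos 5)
Your proposal is correct and follows essentially the same route as the paper: initialize with \cref{initialization}, apply \cref{main} for $\log_2 n$ rounds, conclude with \cref{finalization}, and account for the concatenation depth. The only cosmetic difference is in the bookkeeping of the composite weight function: since \cref{def:concatenation} only defines $w \circ w'$ for $w' \in \cW$, the paper introduces a second concatenation $\bullet$ with a larger padding term $n^{21(\log_2 n + 1)}$ to combine the round functions $w^{(i)} \in \cW^{\log_2 n + 1}$, whereas you flatten each $w^{(i)}$ into its constituent $\cW$-functions and iterate \cref{concatenation_and_subface} one factor at a time -- both yield the same element of $\cW^{(\log_2 n + 1)\log_2 n}$.
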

\begin{proof}
Let $\ell = \log_2 n$.
We iteratively construct a sequence of face-laminar pairs $(F_i, \cL_i)$ for $i = 0, 1, ..., \ell$ such that $(F_i, \cL_i)$ is $2^i$-good and $F_i = \facemin{F_{i-1}}{w_i}$ for some weight function $w_i \in \cW^{\ell + 1}$. We begin by setting $F_0 = \PM$ and $\cL_0$ to be a maximal laminar subset of $\tight(PM)$.  By \cref{initialization}, $(F_0,\cL_0)$ is $1$-good. Then for $i = 1, ..., \ell$ we use \cref{main} to obtain the wanted weight function $w_i$ along with a laminar family $\cL_i \supseteq \cL_{i-1}$. Finally, we have that $(F_{\ell}, \cL_{\ell})$ is $2^{\ell}$-good, so that by \cref{finalization}, $|F_{\ell}| = 1$.

It remains to argue that $F_{\ell} = \facemin{\PM}{w}$ for some $w \in \cW^{\rb{\ell + 1} \ell}$.
To do this, we proceed as in \cref{weight_functions}: define the concatenation $w' \bullet w'' := n^{21 \rb{\ell + 1}} w' + w''$ for two weight functions $w'$ and $w''$, where $w'' \in \cW^{\ell + 1}$.
\ifieee\else (We need to use a padding term $n^{21 \rb{\ell + 1}}$ which is larger than the $n^{21}$ of \cref{def:concatenation} because the right-hand weight functions are now from $\cW^{\ell + 1}$ rather than from $\cW$.) \fi
By the same reasoning as for \cref{concatenation_and_subface} we get that $F_{\ell} = \facemin{\facemin{\facemin{\PM}{w_1}}{w_2} ...}{w_{\ell}} = \facemin{\PM}{w_1 \bullet w_2 \bullet ... \bullet w_{\ell}}$.
We put $w = w_1 \bullet w_2 \bullet ... \bullet w_{\ell} \in \cW^{(\ell + 1) \ell}$.
\end{proof}

\cref{exists_isolating} implies \cref{mainer}
because we have $|\cW^{\rb{\log_2 n + 1} \log_2 n}| = |\cW|^{\rb{\log_2 n + 1} \log_2 n} \le n^{20 \rb{\log_2 n + 1} \log_2 n}$,
the values of any $w \in \cW^{\rb{\log_2 n + 1} \log_2 n}$ are bounded by $n^{21 \rb{\log_2 n + 1} \log_2 n}$,
and the functions $w \in \cW$ can be generated obliviously using only the number of vertices $n$.

\ifieee\else
\section{Proof of the key \cref{main}: from $\lambda$-good to $2\lambda$-good} \label{proof_of_main}

In this section we show how to make progress (measured by the $\lambda$ parameter of $\lambda$-goodness) by applying a new weight function to the current face.
Our objective is to make larger sets contractible (by doubling the size threshold from $\lambda$ to $2 \lambda$) and to ensure that in the new contracted graph, alternating circuits of an increased node-weight are not present.
We do this by moving from the current face-laminar pair, which we call $(\Fin,\Lin)$, to a new face-laminar pair $(\Fout,\Lout)$.
Both pairs have the property that the laminar family is a maximal laminar family of sets tight for the face. The new family extends the previous, i.e., $\Lout \supseteq \Lin$.

Our main technical tools are~\cref{circuit-removal} and~\cref{making-2l-contractible}.~\cref{circuit-removal} is used to ensure that certain alternating circuits are not present in the new contraction. 
It says that if our current contraction has no alternating circuits of at most some node-weight, then a single weight function $w \in \cW$ is enough to guarantee that all alternating circuits of at most twice that node-weight do not respect the new face obtained by applying $w$.
We call this \emph{removing} these circuits. \cref{making-2l-contractible} is used to make sure that sets in our laminar family that are of size at most $2 \lambda$ become contractible. 
Later, new sets will be added to the laminar family in \cref{adding_new_laminar_sets}, in such a way that these properties are maintained
and that the removed alternating circuits indeed do not survive in the new contraction.

The formal structure of the proof is as follows.
We begin from a $\lambda$-good face-laminar pair $(\Fin, \Lin)$. Then, using our technical tools~\cref{circuit-removal,making-2l-contractible}, 
we show in \cref{good-weight-fn}  the existence of a weight function $\wout \in \cW^{\log_2(n) + 1}$ such that the face $\Fout = \facemin{\Fin}{\wout}$ satisfies two conditions which make progress on conditions (i) and (ii) of $\lambda$-goodness:
  \begin{itemize}
    \item[(i)'] For each $S \in \Lin$ with $|S| \le 2 \lambda$, $S$ is $\Fout$-contractible.
    \item[(ii)'] In the $(\Fout, \Lin, 2 \lambda)$-contraction of $G$, there is no $\Fout$-respecting alternating circuit of node-weight at most $2 \lambda$.
  \end{itemize}
This gives us the wanted face $\Fout$ and weight function $\wout$.
Finally, in \cref{adding_new_laminar_sets} we show that extending the laminar family $\Lin$ to a maximal laminar family $\Lout$ (of sets tight for the new face) yields a $2 \lambda$-good pair $(\Fout, \Lout)$.
This finishes the proof of \cref{main}.

\subsection{Removing alternating circuits}

This section is devoted to the proof of \cref{circuit-removal},
which is a technical tool we use to remove alternating circuits of size between $\lambda$ and $2 \lambda$
from the contraction.

\begin{theorem}
  Consider a face-laminar pair $(F, \cL)$ such that each $S\in \cL$ with $|S| \le \beta$ is
  $F$-contractible (for a parameter $\beta$). Denote by $H$ the  $(F, \cL, \beta)$-contraction of $G$.
  If $H$ has no
  alternating circuit of node-weight at most $\lambda$,
  then
  there exists $w\in \cW$ such that $H$ has no $\facemin{F}{w}$-respecting alternating
  circuit of node-weight at most $2 \lambda$.
  \label{circuit-removal}
\end{theorem}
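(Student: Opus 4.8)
The plan is to reduce the statement to an application of \cref{killing_set_of_cycles} (equivalently the combination of \cref{lem23} and \cref{disc_no_respect}), so the main work is to identify a \emph{polynomially large} set $\cY$ of alternating indicator vectors, with the boundedness property $\|y\|_1 \le 4n^2$, whose being made non-respecting suffices. The obstacle is precisely that $H$ may contain super-polynomially many alternating circuits of node-weight at most $2\lambda$, so we cannot just throw all of them into $\cY$. The key idea, mirroring the counting argument sketched in \cref{sec:challenges}, is that since $H$ has \emph{no} alternating circuit of node-weight at most $\lambda$, the number of alternating circuits of node-weight at most $2\lambda$ that we actually need to kill is only polynomial — this is exactly the content of the promised counting lemma (the $n^{17}$ bound, \cref{counting} in the full version). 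So first I would invoke that counting result to assert that the set $\cY_0$ of alternating indicator vectors $\pmind{C}$, over all alternating circuits $C$ in $H$ of node-weight at most $2\lambda$, has size $\le \poly(n)$ (after deduplication of vectors; note distinct circuits may share the same $\pmind{C}$, which only helps).

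Next I would check the boundedness property. An alternating circuit $C$ of node-weight at most $2\lambda \le 4n$ visits vertices with multiplicity, but its \emph{length} (number of edge-slots) is controlled: each visited vertex has node-weight at least $1$, so the node-weight bounds the number of visited vertex-slots, hence the length $k$ of $C$ satisfies $k \le 2 \cdot (\text{node-weight}) \le 8n$, say — in any case $k = O(n)$. Therefore $\|\pmind{C}\|_1 \le k = O(n) \le 4n^2$ for $n$ large, which is the hypothesis needed for \cref{lem23}. (If one is careful, node-weight at most $2\lambda$ and $\lambda \le 2n$ gives node-weight at most $4n$, hence length at most $4n < 4n^2$; I would state this cleanly.) Also each $\pmind{C}$ is nonzero by the very definition of alternating circuit (\cref{def:alternating_circuit}), so $\cY_0 \subseteq \bZ^{E(H)} \setminus \{0\}$, and via the identification in \cref{def:respect_contraction} we may regard $\cY_0 \subseteq \bZ^E \setminus \{0\}$.

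Then I would apply \cref{killing_set_of_cycles} (or directly \cref{lem23}+\cref{disc_no_respect}) with $\cY = \cY_0$: this yields $w \in \cW(n^3 \cdot |\cY_0|)$ such that $\ab{\pmind{C}, w} \ne 0$ for every $C$ in our list, and hence, by \cref{disc_no_respect}, $\pmind{C}$ does not respect $\facemin{F}{w}$ — i.e., $C$ does not respect $\facemin{F}{w}$ in the sense of \cref{def:respect_contraction}. Since $|\cY_0| \le \poly(n)$, we have $n^3 \cdot |\cY_0| \le n^{O(1)} \le n^{20}$ for $n$ large, so $\cW(n^3 |\cY_0|) \subseteq \cW = \cW(n^{20})$, giving $w \in \cW$ as required. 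It remains to observe that this $w$ handles \emph{every} alternating circuit of node-weight at most $2\lambda$ in $H$, not merely those in $\cY_0$: an arbitrary such circuit $C'$ either has node-weight at most $\lambda$ — impossible by hypothesis — or has node-weight in $(\lambda, 2\lambda]$ and thus $\pmind{C'}$ equals $\pmind{C}$ for some $C$ whose vector lies in $\cY_0$ (this is the deduplication point: $\cY_0$ is the set of \emph{vectors}, so every circuit's vector is literally in $\cY_0$), and respecting a face depends only on $\pmind{C'}$. Hence no alternating circuit of node-weight at most $2\lambda$ in $H$ respects $\facemin{F}{w}$, completing the proof.

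The step I expect to be the main obstacle is the counting bound that $|\cY_0| = \poly(n)$: this is where the non-bipartite difficulty bites (symmetric differences of even cycles need not contain an even simple cycle, forcing the use of walks/circuits rather than cycles, and a more elaborate checkpointing scheme than in \cite{FennerGT16}). In this proof sketch I would simply cite the counting lemma; its proof is the genuinely new combinatorial content and lives separately (\cref{counting}). Everything else — boundedness, nonzeroness, the passage from "non-respecting for a finite list" to "non-respecting for all circuits of bounded node-weight", and the bookkeeping that $n^3|\cY_0| \le n^{20}$ — is routine.
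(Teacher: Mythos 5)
Your counting step, norm bounds, deduplication remark, and the final bookkeeping all match the paper's proof. But there is a genuine gap in the middle: you apply \cref{disc_no_respect} directly to $\pmind{C}$ (viewed as a vector in $\bZ^E$ via preimages of edges of $H$) and conclude that ``$C$ does not respect $\facemin{F}{w}$ in the sense of \cref{def:respect_contraction}.'' That conclusion does not follow. \cref{disc_no_respect} shows that $\pmind{C}$ fails the \emph{full} respect condition of \cref{def:respect}, i.e., either its support leaves $\suppo(\facemin{F}{w})$ or $\ab{\pmind{C},\one_{\delta(S)}}\ne 0$ for \emph{some} tight set $S \in \tight(\facemin{F}{w})$. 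But \cref{def:respect_contraction} only inspects tight sets that are unions of contraction vertices, so it is a strictly weaker requirement, and its negation is correspondingly stronger: the violating set $S$ produced by \cref{disc_no_respect} may cross, or sit strictly inside, a contracted set, in which case $C$ can still respect $\facemin{F}{w}$ in the contraction sense --- and the contraction-sense respect is exactly what the theorem must rule out, since that is the property enjoyed by the witness circuits later extracted from symmetric differences of matchings in the contraction. A telltale sign of the gap is that your argument never uses the hypothesis that each $S \in \cL$ with $|S| \le \beta$ is $F$-contractible.

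The paper bridges this gap with \cref{killing_cycle}: each $z = \pmind{C} \in \bZ^{E(H)}$ is canonically extended to $y = z + \sum_i\rb{\one_{M_i^+} - \one_{M_i^-}} \in \bZ^E$, where the $M_i^{\pm}$ are the \emph{unique} (by $F$-contractibility) near-perfect matchings inside the contracted sets, obtained by pairing up positively- and negatively-signed boundary edges of $z$ at each contracted vertex. The lift is built precisely so that if $z$ respects a subface in the contraction sense, then $y$ respects it in the full sense; one then arranges $\ab{y,w} \ne 0$ for the \emph{lifted} vectors (not $\ab{z,w}\ne 0$), at the cost of the norm growing to $\| y \|_1 \le n \| z \|_1 \le 4n^2$, which is why the bound in \cref{lem23} is stated as $4n^2$ rather than $4n$. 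With that lemma inserted between your counting step and the application of \cref{killing_set_of_cycles}, the rest of your argument goes through as written.
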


We begin with a simple technical fact: to verify that a vector respects a face, it is enough to check this for a maximal laminar family of tight constraints.

\begin{lemma} \label{respect_L_respect_F}
Consider a face $F$
and a vector $y \in \bZ^E$.
Let $\cL$ be a maximal laminar subset of $\tight(F)$.
If for each $S \in \cL$ we have $\ab{y, \one_{\delta(S)}} = 0$,
then the same holds for all $S \in \tight(F)$.
\end{lemma}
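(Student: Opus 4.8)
The plan is to use \cref{face_structure}, which says that for a maximal laminar subset $\cL$ of $\tight(F)$ we have $\Span(\cL) = \Span(\tight(F))$, where these are the subspaces of $\bR^E$ spanned by the boundary vectors $\one_{\delta(S)}$. The hypothesis is precisely that $y$ is orthogonal to $\one_{\delta(S)}$ for every $S \in \cL$, hence orthogonal to every vector in $\Span(\cL)$. Since $\Span(\cL) = \Span(\tight(F))$, the vector $y$ is orthogonal to $\one_{\delta(S)}$ for every $S \in \tight(F)$ as well, which is exactly the claim.

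In more detail, first I would note that $\ab{y, \one_{\delta(S)}} = 0$ for all $S \in \cL$ implies, by bilinearity of the inner product, that $\ab{y, v} = 0$ for every $v$ in the linear span $\Span \{ \one_{\delta(S)} : S \in \cL \} = \Span(\cL)$. Then I invoke \cref{face_structure} to replace $\Span(\cL)$ by $\Span(\tight(F)) = \Span \{ \one_{\delta(S)} : S \in \tight(F) \}$. Finally, for an arbitrary $S \in \tight(F)$, the vector $\one_{\delta(S)}$ lies in $\Span(\tight(F))$, so $\ab{y, \one_{\delta(S)}} = 0$, as desired.

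There is essentially no obstacle here: the lemma is a routine consequence of \cref{face_structure}, and the only thing to be careful about is that \cref{face_structure} is stated for a \emph{maximal} laminar subset (which is exactly the hypothesis), and that it gives equality of spans rather than merely an inclusion — both directions of the span equality are in fact available, but only the inclusion $\Span(\tight(F)) \subseteq \Span(\cL)$ is needed for this argument. One should also recall that $\cL$ being maximal ensures it contains all singletons, though this is not strictly needed for the proof. Overall this is a short linear-algebra step whose purpose is to let later arguments (notably the proof of \cref{circuit-removal}) check the respecting condition only against the finitely many sets of a laminar family rather than against all tight odd-sets.
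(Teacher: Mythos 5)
Your proposal is correct and matches the paper's own proof: both invoke \cref{face_structure} to get $\Span(\cL) = \Span(\tight(F))$ and then conclude by writing $\one_{\delta(S)}$ for $S \in \tight(F)$ as a linear combination $\sum_{L \in \cL} \mu_L \one_{\delta(L)}$ and using linearity of the inner product. Nothing is missing.
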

\begin{proof}
As $\cL$ is
a maximal laminar subset of $\tight(F)$, Lemma~\ref{face_structure} says
that $\Span(\cL) = \Span(\tight(F))$. In other words, for any $S\in
\tight(F)$ we can write $\one_{\delta(S)}$ as a linear combination
$\sum_{L \in \cL} \mu_L \one_{\delta(L)}$ for some coefficients
$(\mu_L)_{L\in \cL}$. Hence
\begin{align*}
	\ab{y, \one_{\delta(S)}} = \ab{y, \sum_{L\in \cL} \mu_L \one_{\delta(L)}} = \sum_{L \in \cL} \mu_L \ab{y, \one_{\delta(L)}} = 0.
\end{align*}
\end{proof}

Now we prove a lemma which reduces the task of removing an alternating circuit in $H$ to that of removing a vector defined on the edges of $G$, which we can do using \cref{killing_set_of_cycles}.
Throughout this section, $F$, $\cL$ and $H$ are as in the statement of \cref{circuit-removal}. 
Recall that the vertices of $H$ are elements of $\cL$, i.e., sets of vertices, and so by $S\in V(H)$ we mean the set $S\in \cL$ that corresponds to a vertex in $H$. 
\begin{lemma} \label{killing_cycle}
Let $z \in \bZ^{E(H)}$ be a nonzero vector on the edges of $H$ satisfying $\ab{z, \one_{\delta(S)}} = 0$ for each $S \in V(H)$.
Then there exists a nonzero vector $y \in \bZ^E$ such that for any face $F' \subseteq F$ we have: if $z$ respects $F'$, then $y$ respects $F'$.
We also have $\| y \|_1 \le n \| z \|_1$.
\end{lemma}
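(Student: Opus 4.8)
The plan is to build $y$ explicitly from $z$ by \emph{routing} the boundary contributions of $z$ through the interiors of the $H$-vertices, using the matchings that contractibility makes canonical. Recall that each vertex $S$ of $H$ is an $F$-contractible set, so for every edge $e \in \delta(S) \cap \suppo(F)$ there is a unique near-perfect matching $\mu(e) \subseteq E(S)$ covering $S$ minus the endpoint of $e$ in $S$ (it equals $M \cap E(S)$ for every perfect matching $M \in F$ with $e \in M$); write $v_e$ for that endpoint. I would set
\[ y \;:=\; z \;+\; \sum_{S \in V(H)}\ \sum_{e \in \delta(S) \cap \supp(z)} z_e \cdot \one_{\mu(e)}\,, \]
where $z$ is identified with the vector in $\bZ^E$ supported on the preimages of $E(H)$. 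This is well-defined knowing only $(F,\cL)$ and $z$, and $y \ne 0$, since the correction terms are supported inside the $H$-vertices while $\supp(z)$ consists of edges not inside any $H$-vertex, so $\supp(y) \supseteq \supp(z) \ne \emptyset$. The norm bound is immediate: each $|\mu(e)| \le (n-1)/2$, and $\sum_S \sum_{e \in \delta(S)\cap\supp(z)} |z_e| = 2\|z\|_1$ because every edge of $H$ lies on the boundary of exactly its two endpoints; hence $\|y\|_1 \le \|z\|_1 + \tfrac{n-1}{2}\cdot 2\|z\|_1 = n\|z\|_1$.

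Now fix $F' \subseteq F$ with $z$ respecting $F'$; I must show $y$ respects $F'$. For the support condition, $\supp(z) \subseteq \suppo(F')$ is assumed, and if $e \in \supp(z) \cap \delta(S)$ then $e \in \suppo(F')$ and $S$ is $F'$-contractible by \cref{sub-face-contractable}, so taking a perfect matching $M \in F'$ through $e$ gives $\mu(e) = M \cap E(S) \subseteq \suppo(F')$; thus $\supp(y) \subseteq \suppo(F')$. For the orthogonality condition, by \cref{respect_L_respect_F} it suffices to check $\ab{y, \one_{\delta(R)}} = 0$ for $R$ in a maximal laminar subfamily $\cL^\ast$ of $\tight(F')$, and since $\cL \subseteq \tight(F) \subseteq \tight(F')$ is laminar we may choose $\cL^\ast \supseteq \cL$, so every such $R$ is laminar with every vertex set of $H$. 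The key identity is that for $e \in \delta(S) \cap \suppo(F')$ and any $R \subseteq S$ tight for $F'$,
\[ \ab{\one_{\mu(e)}, \one_{\delta(R)}} = 1 - [\,v_e \in R\,]\,, \]
which holds because a perfect matching $M \in F'$ through $e$ satisfies $|M \cap \delta(R)| = 1$ while $M \cap \delta(R)$ consists of $e$ (if $v_e \in R$) together with $\mu(e) \cap \delta(R)$.

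It remains to do the case split for $R \in \cL^\ast$. If $R$ is a union of vertex sets of $H$, then $\ab{z, \one_{\delta(R)}} = 0$ by the definition of $z$ respecting $F'$, and each $\one_{\mu(e)}$ is supported inside a single $H$-vertex that is either contained in or disjoint from $R$, so contributes $0$; hence $\ab{y, \one_{\delta(R)}} = 0$. Otherwise, by laminarity with the partition $V(H)$, $R \subsetneq S_0$ for a unique vertex set $S_0$ of $H$. Then $z$ is supported off $E(S_0)$, so $\ab{z, \one_{\delta(R)}} = \sum_{e \in \delta(S_0)\cap\supp(z):\, v_e \in R} z_e$; correction terms from $H$-vertices $S \ne S_0$ contribute $0$ (their support is disjoint from $\delta(R)$), and those from $S_0$ contribute $\sum_{e \in \delta(S_0)\cap\supp(z)} z_e(1 - [v_e\in R]) = \big(\sum_e z_e\big) - \sum_{v_e\in R} z_e = -\sum_{e:\,v_e\in R} z_e$, using $\sum_{e \in \delta(S_0)\cap\supp(z)} z_e = \ab{z,\one_{\delta(S_0)}} = 0$ from the hypothesis of the lemma applied to $S_0 \in V(H)$. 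The two contributions cancel. This gives $\ab{y,\one_{\delta(R)}} = 0$ for all $R \in \cL^\ast$, so $y$ respects $F'$ by \cref{respect_L_respect_F}.

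The main obstacle is precisely this ``internal'' case: a tight set $R$ of $F'$ strictly inside a contracted set $S_0$ is invisible to $z$, so the correction terms must reproduce exactly the right boundary behaviour. This is where the contractibility identity above is essential, as is the reduction to a laminar family via \cref{respect_L_respect_F}, which guarantees that the tight sets we must check never cross the vertex sets of $H$; the remainder is bookkeeping and the $\ell_1$ estimate.
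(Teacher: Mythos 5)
Your proof is correct and follows essentially the same route as the paper: your vector $y = z + \sum_{S}\sum_{e\in\delta(S)\cap\supp(z)} z_e\one_{\mu(e)}$ is exactly the paper's $z + \sum_i(\one_{M_i^+}-\one_{M_i^-})$ (the paper merely organizes the weighted sum via a pairing of positive and negative boundary edges), and the verification is the same reduction to a maximal laminar family extending $\cL$ followed by the case split $R\subsetneq S_0$ versus $R$ a union of $H$-vertices. The only cosmetic difference is that you cancel the internal contributions in aggregate using $\ab{z,\one_{\delta(S_0)}}=0$, where the paper cancels them pair by pair.
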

We remark that $y$ does not depend on $F'$.
\begin{proof}
We consider $z$ as a vector $z \in \bZ^{E}$ by identifying each edge of $H$ with its preimage in $G$ and letting those edges of $G$ without a preimage in $H$ have value $0$. We may assume $\supp(z) \subseteq \suppo(F)$; otherwise $z$ cannot respect $F'$ (see~\cref{def:respect_contraction}) and thus we are done by outputting any $y$.

The proof idea is to extend $z$ to a vector $y \in \bZ^E$ which resembles an alternating indicator vector. We do this in a canonical way so that if this extension does not respect $F'$, then it must be because $z$ itself does not respect $F'$.

To this end, we do the following for each $S \in V(H)$: pair up the boundary edges $e \in \delta(S)$ which have $z_e > 0$ with boundary edges $e$ which have $z_e < 0$, respecting their multiplicities as given by $z$. For example, if we had $\delta(S) = \{e_1, e_2, e_3\}$ with $z(e_1) = 3$, $z(e_2) = -2$ and $z(e_3) = -1$, we would get the pairs $\{(e_1, e_2), (e_1, e_2), (e_1, e_3)\}$. Such a pairing is always possible because $\ab{z, \one_{\delta(S)}} = 0$. Let $\cb{ (e^+_i, e^-_i) }_i$ be the multiset of pairs of edges obtained in this way across all $S \in V(H)$, and let $S_i \in V(H)$ be the set for which the pair $(e^+_i, e^-_i)$ has been introduced. Also denote by $v^+_i, v^-_i$ the $S_i$-endpoints of edges $e^+_i, e^-_i$.

Now, for each $i$ we have $e^+_i, e^-_i \in \supp(z) \subseteq \suppo(F)$, and $S$ is $F$-contractible, so there is a unique perfect matching $M_i^+$ on the vertex-induced subgraph $(S_i \setminus \{v^+_i\}, E(S_i \setminus \{v^+_i\}))$ in $F$ (more precisely, $M_i^+$ is the unique perfect matching on that subgraph that extends to a matching in $F$), as well as a unique perfect matching $M_i^-$ on $(S_i \setminus \{v^-_i\}, E(S_i \setminus \{v^-_i\}))$ in $F$. We let
\[ y := z + \sum_i \rb{\one_{M_i^+} - \one_{M_i^-}}. \]
What remains now is to prove the following claim:
\begin{claim} \label{claim:y_respects}
Let $F' \subseteq F$ be such that $z$ respects $F'$. Then $y$ respects $F'$.
\end{claim}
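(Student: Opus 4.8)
\emph{Proof plan.} The plan is to verify the two requirements of \cref{def:respect} for $y$ and $F'$ directly, reducing the tight-cut condition to a laminar family via \cref{respect_L_respect_F}. First extend $\cL$ to a maximal laminar subset $\cL'$ of $\tight(F')$; this is possible since $\cL$ is laminar and, as tightness passes to subfaces, $\cL \subseteq \tight(F) \subseteq \tight(F')$. By \cref{respect_L_respect_F} it then suffices to prove $\supp(y) \subseteq \suppo(F')$ and $\ab{y, \one_{\delta(T)}} = 0$ for every $T \in \cL'$. For the support condition, fix a pair $(e^+_i, e^-_i)$ introduced at a set $S_i \in V(H)$. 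Both edges have nonzero $z$-coordinate, hence lie in $\supp(z) \subseteq \suppo(F')$ since $z$ respects $F'$, so some perfect matching $N$ in $F'$ contains $e^+_i$. As $S_i$ is tight for $F'$ and $e^+_i \in N \cap \delta(S_i)$, the restriction $N \cap E(S_i)$ is a perfect matching on $S_i \setminus \{v^+_i\}$ that extends to $N \in F' \subseteq F$, so it must equal $M^+_i$ by $F$-contractibility of $S_i$; thus $M^+_i \subseteq N \subseteq \suppo(F')$, and symmetrically $M^-_i \subseteq \suppo(F')$. Hence $\supp(y) \subseteq \supp(z) \cup \bigcup_i \rb{M^+_i \cup M^-_i} \subseteq \suppo(F')$.

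For the cut condition, fix $T \in \cL'$. The sets $S_1, \dots, S_k$ corresponding to $V(H)$ are the maximal sets of $\cL$ of size at most $\beta$, and (since $\cL$ contains all singletons) they partition $V$; as $\cL'$ is laminar and contains each $S_j$, either (A) $T$ is a union of some of the $S_j$, or (B) laminarity forces $T \subsetneq S_j$ for a unique $j$. In case (A), every edge of $\delta(T)$ runs between two distinct $S_j$ and so lies outside all $E(S_i)$, where $y$ and $z$ agree; hence $\ab{y, \one_{\delta(T)}} = \ab{z, \one_{\delta(T)}}$, which vanishes because $T$ is a union of vertices of $H$ and $T \in \tight(F')$, so the second bullet of \cref{def:respect_contraction} (satisfied by $z$) applies verbatim.

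The substance is in case (B), which I would handle as follows. Write $\delta(T) = (\delta(T) \cap E(S_j)) \cup (\delta(T) \cap \delta(S_j))$; on the boundary part $y$ again equals $z$, and on $E(S_j)$ only the matchings $M^\pm_i$ with $S_i = S_j$ contribute, so $\ab{y, \one_{\delta(T)}} = \ab{z, \one_{\delta(S_j) \cap \delta(T)}} + \sum_{i : S_i = S_j} \ab{\one_{M^+_i} - \one_{M^-_i}, \one_{\delta(T)}}$. For each such pair, the perfect matching $N^+_i$ in $F'$ produced above satisfies $N^+_i \cap E(S_j) = M^+_i$ and $N^+_i \cap \delta(S_j) = \{e^+_i\}$; combining $|N^+_i \cap \delta(T)| = 1$ (valid since $T \in \tight(F')$) with the fact that $e^+_i \in \delta(T)$ iff $v^+_i \in T$ gives $|M^+_i \cap \delta(T)| = \one[v^+_i \notin T]$, and likewise $|M^-_i \cap \delta(T)| = \one[v^-_i \notin T]$, whence $\ab{\one_{M^+_i} - \one_{M^-_i}, \one_{\delta(T)}} = \one[v^-_i \in T] - \one[v^+_i \in T]$. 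Summing over all pairs at $S_j$ and using that each boundary edge $e \in \delta(S_j)$ occurs in the pairing exactly $|z_e|$ times — as a ``$+$''-edge if $z_e > 0$ and as a ``$-$''-edge if $z_e < 0$ — this sum collapses to $-\sum_{e \in \delta(S_j) \cap \delta(T)} z_e = -\ab{z, \one_{\delta(S_j) \cap \delta(T)}}$, which cancels the boundary term and yields $\ab{y, \one_{\delta(T)}} = 0$.

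I expect the pairing bookkeeping in case (B) to be the main obstacle: one must keep signs and multiplicities straight when matching $\sum_i \one[v^\pm_i \in T]$ against $\sum_{e \in \delta(S_j)} z_e \one[\text{$S_j$-endpoint of } e \in T]$, and make sure the crucial step $|N^+_i \cap \delta(T)| = 1$ draws only on the tightness of $T$ for $F'$ and the $F$-contractibility of $S_j$ (both of which are available), rather than on any property of $M^+_i$ beyond the uniqueness it inherits from contractibility.
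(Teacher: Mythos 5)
Your proposal is correct and follows essentially the same route as the paper's proof: the support condition via $F$-contractibility of the $S_i$, the reduction to a maximal laminar family $\cL'$ via \cref{respect_L_respect_F}, and the same case split on whether $T$ is a union of sets in $V(H)$ or properly contained in one. Your case-(B) bookkeeping, which cancels $\sum_i\bigl(\one[v^-_i\in T]-\one[v^+_i\in T]\bigr)$ against the boundary term $\ab{z,\one_{\delta(S_j)\cap\delta(T)}}$, is just a reorganization of the paper's observation that $|\delta(T)\cap(M_i^{+}\cup\{e_i^{+}\})|=1=|\delta(T)\cap(M_i^{-}\cup\{e_i^{-}\})|$, so each pair contributes zero.
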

\begin{proof}
Since $z$ respects $F'$ (see~\cref{def:respect_contraction}), we have $\supp(z) \subseteq \suppo(F')$. This implies that for each $i$, $M_i^+ \subseteq \suppo(F')$. Indeed, since $e_i^+ \in \supp(z) \subseteq \suppo(F')$, there is a perfect matching on $(S_i \setminus \{v^+_i\}, E(S_i \setminus \{v^+_i\}))$ in $F'$. However, $S_i$ is $F$-contractible and thus $M_i^+$ is the \emph{only} such matching in $F$ (thus also in $F'$). Therefore $M_i^+ \subseteq \suppo(F')$ and analogously $M_i^- \subseteq \suppo(F')$.

Now we check the conditions for $y$ to respect $F'$ (see~\cref{def:respect}):
\begin{itemize}
	\item We have $\supp(y) = \supp(z) \cup \bigcup_i \rb{M_i^+ \cup M_i^-} \subseteq \suppo(F')$.
	\item Let $T \in \tight(F')$. We need to verify that $\ab{y, \one_{\delta(T)}} = 0$. Let $\cL'$ be a maximal laminar subfamily of $\tight(F')$ extending $\cL$, i.e., $\cL \subseteq \cL' \subseteq \tight(F')$. By \cref{respect_L_respect_F}, it is enough to verify that $\ab{y, \one_{\delta(T)}} = 0$ for $T \in \cL'$.
	For a set $T$ belonging to a laminar family which extends $\cL$, it is not hard to see that there are two possibilities: either $T \subsetneq S$ for some $S \in V(H)$, or $T$ is a union of sets in $V(H)$ (for recall that $V(H)$ is a partitioning of $V$ that consists of sets in $\cL$). In the latter case, $\ab{y, \one_{\delta(T)}} = \ab{z, \one_{\delta(T)}} = 0$ because
$y$ equals $z$ on edges crossing sets in $V(H)$
and because $z$ respects $F'$. In the former case, we have
	\begin{align*}
	\ab{y, \one_{\delta(T)}} &= \ab{\sum_{i : S_i = S} \rb{ \one_{e_i^+} - \one_{e_i^-} + \one_{M_i^+} - \one_{M_i^-} }, \one_{\delta(T)}}\\ &= \sum_{i : S_i = S} \ab{\one_{e_i^+} - \one_{e_i^-} + \one_{M_i^+} - \one_{M_i^-}, \one_{\delta(T)}}
	\end{align*}
	because these are the only edges in $y$'s support that have an endpoint in $S$ (other edges cannot possibly cross $T \subseteq S$). Now it is enough to show that each summand is $0$.
	
	For this, we know that $M_i^+ \cup \{e_i^+\}$ and $M_i^- \cup \{e_i^-\}$ are (partial) matchings in $F'$ and that $T$ is tight for $F'$.
	Therefore we have $|\delta(T) \cap \rb{M_i^+ \cup \{e_i^+\}}| = 1$,\footnote{Formally, consider a perfect matching $M^+$ (on $G$) in $F'$ which is a superset of $M_i^+ \cup \{e_i^+\}$. Then we have $|\delta(T) \cap M^+| = 1$. But $\delta(T) \cap \rb{M_i^+ \cup \{e_i^+\}} = \delta(T) \cap M^+$ because $T \subseteq S$.} and the same holds for $M_i^- \cup \{e_i^-\}$.
	Therefore $\ab{\one_{M_i^+ \cup \{e_i^+\}}, \one_{\delta(T)}} = 1 = \ab{\one_{M_i^- \cup \{e_i^-\}}, \one_{\delta(T)}}$.
\end{itemize}
\end{proof}
Regarding the norm: every edge (with multiplicity) in $z$ causes less than $n/2$ new edges (a~partial matching) to appear in $y$. Therefore $\| y \|_1 \le (n/2 + 1) \| z \|_1 \le n \|z\|_1$.
\end{proof}

Our second lemma gives a bound on the number of alternating circuits we need to remove.
%
Its proof resembles that of Lemma 3.4 in~\cite{FennerGT16}, but it is  somewhat more complex, as we are dealing
with a node-weighted multigraph, as well as
with alternating circuits instead of simple cycles (see \cref{sec:alternating}).
We have made no attempt to minimize the exponent $17$.
\begin{lemma} \label{counting}
There are polynomially many alternating circuits of node-weight at most $2 \lambda$ in $H$, up to identifying circuits with equal alternating indicator vectors. More precisely, the cardinality of the set
\[ \{ \pmind{C} : C \text{ is an alternating circuit in $H$ of node-weight at most $2 \lambda$} \} \]
is at most $n^{17}$.
\end{lemma}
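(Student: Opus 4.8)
The plan is to lift the ``checkpoint'' counting argument of Fenner, Gurjar and Thierauf to the node-weighted multigraph $H$, replacing edge-length by node-weight throughout, and to replace their use of the fact ``in a graph of large girth there is at most one short path between two given vertices'' by an analogous \emph{uniqueness statement for short walks}, which is the one genuinely new ingredient. Concretely, I would first prove: since $H$ has no alternating circuit of node-weight at most $\lambda$, for every ordered pair $(u,v)$ of vertices of $H$ and every parity $\pi\in\{0,1\}$, all $u$--$v$ walks in $H$ of node-weight at most $\lambda/2$ whose length has parity $\pi$ have the same alternating indicator vector. Given two such walks $W$ and $W'$, consider the closed walk $D$ obtained by traversing $W$ and then $W'$ in reverse; since $|W|\equiv|W'|\pmod 2$, $D$ has even length. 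A short computation of the alternating signs (the reversed traversal flips the $+,-,+,\dots$ pattern, and the even total length makes the two pieces line up) gives $\pmind{D}=\pmind{W}-\pmind{W'}$, where for a walk $(f_0,f_1,\dots)$ we write $\pmind{W}=\sum_i(-1)^i\one_{f_i}$ with the analogous convention. The node-weight of $D$ equals (node-weight of $W$) $+$ (node-weight of $W'$) minus the node-weights of the two shared endpoints, hence is at most $\lambda$ (the double-counted endpoints also absorb the integrality of $\lambda/2$), so if $\pmind{W}-\pmind{W'}\neq 0$ then $D$ is an alternating circuit of node-weight at most $\lambda$, contradicting the hypothesis; therefore $\pmind{W}=\pmind{W'}$.

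Next I would bound the number of distinct $\pmind{C}$ by a checkpoint decomposition. Every node of $H$ has node-weight at least $1$, so an alternating circuit $C$ of node-weight at most $2\lambda$ has length at most $2\lambda$. Walking along $C$ and accumulating node-weight, I cut $C$ at a bounded set of positions: place a checkpoint whenever the running node-weight of the current piece would exceed $\lambda/2$, and also make a checkpoint of every position whose vertex has node-weight $>\lambda/2$ (there are only $O(1)$ of these, and each corresponding edge becomes its own ``heavy'' piece). A charging argument — each cut is paid for either by a piece of node-weight $>\lambda/4$ or by a vertex of node-weight $>\lambda/4$, and $C$ has total piece-weight and total vertex-weight at most $2\lambda$ — shows that $C$ decomposes as a concatenation $C=P_1P_2\cdots P_r$ with $r=O(1)$, each $P_i$ being either a single edge of $H$ or a walk of node-weight at most $\lambda/2$.

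Finally, $\pmind{C}=\sum_{i=1}^r(-1)^{|P_1|+\cdots+|P_{i-1}|}\,\pmind{P_i}$, so $\pmind{C}$ is determined by a bounded amount of data: the cyclic sequence $x_0,\dots,x_{r-1}$ of checkpoint vertices ($n^{O(1)}$ possibilities, as $r=O(1)$); the parities $|P_1|\bmod 2,\dots,|P_r|\bmod 2$ together with which pieces are single edges ($2^{O(1)}$ possibilities); and the identity of each single-edge piece ($n^{O(1)}$ possibilities for the $O(1)$ many of them, since their endpoints already appear among the $x_i$ and only a parallel-edge index is new). Indeed, by the uniqueness lemma each light $P_i$ has $\pmind{P_i}$ determined by $(x_{i-1},x_i,|P_i|\bmod 2)$, while each single-edge $P_i$ contributes $\pm\one_{e_i}$ and is recorded directly. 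Multiplying these bounds and choosing the numerical constants (the number of checkpoints, the fraction $\tfrac12$, the thresholds) appropriately — and using that $n$ is large to swallow the $2^{O(1)}$ and polynomial-constant factors — yields at most $n^{17}$ possibilities for $\pmind{C}$; following the paper I would not try to optimize this exponent.

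The part I expect to be most delicate is precisely what distinguishes this from the bipartite/simple-path setting: arcs of $C$ need not be simple, which forces the uniqueness-of-light-walks lemma together with careful tracking of parities, and heavy vertices whose node-weight is comparable to $\lambda$ cannot be folded into a light arc but must be carried either as forced checkpoints or as standalone heavy edges. Making the charging argument in the second step produce a genuinely $O(1)$-sized decomposition despite such heavy vertices — while keeping all the recorded data polynomially bounded — is the crux, and is what inflates the exponent from the bipartite $n^4$ to $n^{17}$.
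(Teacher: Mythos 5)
Your proposal is correct and follows essentially the same route as the paper: a constant-size checkpoint decomposition of each circuit into pieces of node-weight at most $\lambda/2$, combined with the observation that two pieces with matching endpoint data but different alternating indicator vectors glue (one traversed in reverse) into a forbidden alternating circuit of node-weight at most $\lambda$. The paper's bookkeeping differs only slightly — it uses at most four checkpoints, measures only the \emph{internal} node-weight of each piece (so heavy checkpoint vertices need no special treatment), and records the incoming and outgoing edges at each checkpoint so that the glued circuit avoids the checkpoints themselves — which is what lets it land cleanly under $n^{17}$; your endpoint-inclusive weights and dedicated heavy-edge pieces may yield more pieces and hence a larger, though still polynomial and for all downstream purposes equally usable, exponent.
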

\begin{proof}
We will associate a small \emph{signature} with each alternating circuit in $H$ of node-weight at most $2 \lambda$, with the property that alternating circuits with different alternating indicator vectors are assigned different signatures. This will prove that the considered cardinality is at most the number of possible signatures, which is polynomially bounded.

\begin{figure}[t]
  \begin{center}
    \begin{tikzpicture}
      \begin{scope}[xscale=1, yscale=0.8]
        \tikzset{VertexStyle/.append style={minimum size=2pt, inner sep=1.5pt, fill=black!10!white}}
        \SetGraphUnit{2}
        \SetVertexNoLabel
        \Vertices{circle}{A,B,C,D,E, F, G, H, I, J}
        \SetVertexLabel
        \Vertex[Node,L={\small $1$}]{A}
        \Vertex[Node,L={\small$9$}]{B}
        \Vertex[Node,L={\small$1$}]{C}
        \Vertex[Node,L={\small$1$}]{D}
        \Vertex[Node,L={\small$1$}]{E}
        \Vertex[Node,L={\small$4$}]{F}
        \Vertex[Node,L={\small$5$}]{G}
        \Vertex[Node,L={\small$3$}]{H}
        \Vertex[Node,L={\small$1$}]{I}
        \Vertex[Node,L={\small$4$}]{J}
        \tikzset{EdgeStyle/.style={<-, above,  midway,font=\scriptsize}}
        \foreach \x/\y/\n in {A/B/3, B/C/2, C/D/1, D/E/10, E/F/9, F/G/8, G/H/7, H/I/6, I/J/5, J/A/4}
        {
          \Edge(\x)(\y)
        }
        \begin{scope}[rotate=20]
          \SetGraphUnit{2.12}
          \tikzset{VertexStyle/.append style={fill=none, draw=none,font=\scriptsize}}
          \Vertices{circle}{$e_2$,$e_1$,$e_0$,$e_{9}$,$e_8$, $e_7$, $e_6$, $e_5$, $e_4$, $e_3$}
        \end{scope}
        \SetGraphUnit{0.5}
        \tikzset{VertexStyle/.append style={fill=none, draw=none,font=\scriptsize}}
        \NO (D) {$i_0$}
        \SetGraphUnit{0.35}
        \NOEA (B) {$i_1$}
        \SetGraphUnit{0.5}
        \SO (H) {$i_2$}
        \SetGraphUnit{0.4}
        \WE (F) {$i_3$}
      \end{scope}

      \begin{scope}[xshift=7cm,xscale=1, yscale =0.8]
        \tikzset{VertexStyle/.append style={minimum size=2pt, inner sep=1.5pt, fill=black!10!white}}
        \SetGraphUnit{2}
        \SetVertexNoLabel
        \Vertices{circle}{A,B,C,D,E, F, G, H, I, J}
        \SetVertexLabel
        \Vertex[Node,L={\small $1$}]{A}
        \Vertex[Node,L={\small$9$}]{B}
        \Vertex[Node,L={\small$1$}]{C}
        \Vertex[Node,L={\small$1$}]{D}
        \Vertex[Node,L={\small$1$}]{E}
        \Vertex[Node,L={\small$4$}]{F}
        \Vertex[Node,L={\small$5$}]{G}
        \Vertex[Node,L={\small$3$}]{H}
        \Vertex[Node,L={\small$1$}]{I}
        \Vertex[Node,L={\small$4$}]{J}

        \SetVertexNoLabel
        \SetGraphUnit{0.7}
        \Vertex[x=0.8, y=0] {v1}
        \Vertex[x=-0.2, y=-0.5] {v2}
        \SetVertexLabel
        \Vertex[Node,L={\small $1$}]{v1}
        \Vertex[Node,L={\small $1$}]{v2}
        \tikzset{EdgeStyle/.style={<-, above,  midway,font=\scriptsize}}
        \foreach \x/\y/\n in {A/B/3, B/C/2, C/D/1, D/E/10, E/F/9, F/G/8, G/H/7, H/I/6}
        {
          \Edge(\x)(\y)
        }
        \tikzset{EdgeStyle/.style={<-,ultra thick,bend right,  above,  midway,font=\scriptsize}}
        \foreach \x/\y/\n in {I/J/5, J/A/4}
        {
          \Edge(\x)(\y)
        }
         \tikzset{EdgeStyle/.style={<-,ultra thick,dotted,bend left,  above,  midway,font=\scriptsize}}
        \Edge(J)(A)
        \tikzset{VertexStyle/.append style={fill=none, draw=none,font=\scriptsize}}
          \Vertex[x=1.5, y=-0.1] {$g_4$}
         \tikzset{EdgeStyle/.style={<-,ultra thick,dotted,  above,  midway,font=\scriptsize}}
         \Edge[label={$g_3$}, labelstyle={fill=none, below , pos=0.2}](v1)(J)
         \Edge[label={$g_2$}, labelstyle={fill=none, above, midway}](v2)(v1)
         \Edge[label={$g_1$}, labelstyle={fill=none, left}](I)(v2)

        \begin{scope}[rotate=20]
          \SetGraphUnit{2.37}
          \tikzset{VertexStyle/.append style={fill=none, draw=none,font=\scriptsize}}
          \Vertices{circle}{ , , , , ,  ,  ,  , $f_2$, $f_1$}
        \end{scope}
        \SetGraphUnit{0.5}
        \tikzset{VertexStyle/.append style={fill=none, draw=none,font=\scriptsize}}
        \NO (D) {$i_0$}
        \SetGraphUnit{0.35}
        \NOEA (B) {$i_1, a$}
        \EA (A) {$b$}
        \SetGraphUnit{0.5}
        \SO (I) {$c$}
        \SO (H) {$i_2, d$}
        \SetGraphUnit{0.4}
        \WE (F) {$i_3$}
      \end{scope}
    \end{tikzpicture}
  \end{center}
  \caption{Intuition of the signature vector definition and the proof of~\cref{counting}.
  \\
  On the left, each vertex is labeled by its node-weight, and the corresponding selection of $i_0, i_1, i_2, i_3$ is shown for $\lambda = 16$.  Notice that the selected vertices partition the alternating circuit into paths; the total node-weight of internal vertices on each path is at most $\lambda/2$.
  \\
  On the right we see two different alternating circuits with the same signature. They differ in that one uses $f_2$ and the other uses $g_3, g_2, g_1$. The thick edges illustrate the alternating circuit $B = (f_1, f_2, g_1, g_2, g_3, g_4)$ of node-weight at most $\lambda$ which leads  to the contradiction. We walk the dashed path ($P_D$) in reverse.}
  \label{fig:signature}
\end{figure}

Let $C = (e_0, e_1, ..., e_{k-1})$ be an alternating circuit in $H$ of node-weight at most $2 \lambda$. We want to define its signature $\sigma(C)$.
To streamline notation, we let $v_i$  be the tail of $e_i$ for $i=0, \dots,
k-1$.
Thus $C$  is of the form
\begin{align*}
  v_0 \xrightarrow{e_0} v_1 \xrightarrow{e_1} \dots \xrightarrow{e_{k-2}} v_{k-1}\xrightarrow{e_{k-1}} v_{0}\,.
\end{align*}
(See also~\cref{fig:signature} for an example.) 
We 
also let $\NW(v_i)$ denote
the node-weight of vertex $v_i$, and $\incoming(v_i) = e_{(i-1) \mymod k}$ and $\outgoing(v_i) = e_{i}$ be the incoming and outgoing edges of $v_i$ in $C$.\footnote{
The functions $\incoming(v)$ and $\outgoing(v)$ are not formally well-defined since they depend on the considered alternating circuit $C$ and on which occurrence of $v$ in the circuit we are considering,
but their values will be clear from the context.}
 We now define the signature $\sigma(C)$ as the output of the following procedure:
\begin{itemize}\itemsep0mm
  \item Let $i_0=0$ be the index of the first vertex in $C$. 
  \item  For $j=1,2,3$, select $i_j\leq k$ to be the largest index satisfying
    $\sum_{i=i_{j-1}+1}^{i_j-1} \NW(v_i) \leq \lambda/2$.
  \item Let $t = \max \{ j: i_j < k\}$  and output the signature  $\sigma(C) = ((-1)^{i_j}, \incoming(v_{i_j}), \outgoing(v_{i_j}))_{j=0,1,\dots, t}$.
  (Note that $t \le 3$.)
\end{itemize}

 The intuition of the signature is as follows (see also the left part 
 of~\cref{fig:signature}). The procedure starts at the first vertex $v_{i_0}
 = v_0$. It then selects the farthest (according to $C$) vertex $v_{i_1}$
 while guaranteeing that the total node-weight of the vertices visited
 in-between $v_{i_0}$ and $v_{i_1}$ is at most $\lambda/2$.   Similarly,
 $v_{i_2}$ is selected to be the farthest vertex such that the total
 node-weight of the vertices $v_{i_1+1}, \dots , v_{i_2 -1}$ is at most
 $\lambda/2$, and $i_3$ is selected in the same fashion. The indices $i_0, i_1, \dots, i_t$ thus partition $C$ into paths
 \begin{align*}
   C_0 &= v_{i_0} \xrightarrow{e_{i_0}}  v_{i_0+1}  \xrightarrow{e_{i_0+1}} \dots \xrightarrow{e_{i_1 -2}} v_{i_1 -1} \xrightarrow{e_{i_1 -1}} v_{i_1} \\
   C_1 &= v_{i_1} \xrightarrow{e_{i_1}}  v_{i_1+1}  \xrightarrow{e_{i_1+1}} \dots \xrightarrow{e_{i_2 -2}} v_{i_2 -1} \xrightarrow{e_{i_2 -1}} v_{i_2} \\
   & \vdots\\
   C_t &= v_{i_t} \xrightarrow{e_{i_t}}  v_{i_t+1}  \xrightarrow{e_{i_t+1}} \dots \xrightarrow{e_{i_0 -2}} v_{i_0-1} \xrightarrow{e_{i_0 -1}} v_{i_0} \\
 \end{align*}
 so that the total node-weight of the internal vertices on each path is at most $\lambda/2$. Indeed, for $C_j$ with $j < 3$ this follows from the
 selection of $i_j$. For $C_3$ (in the case $t = 3$), by maximality of $i_1$, $i_2$ and $i_3$ we have
 \begin{align*}
   \underbrace{\sum_{i=i_0+1}^{i_1} \NW(v_i)}_{\geq \lambda/2} + \underbrace{\sum_{i=i_1+1}^{i_2} \NW(v_i)}_{\geq \lambda/2}+ \underbrace{\sum_{i=i_2+1}^{i_3} \NW(v_i)}_{\geq \lambda/2} \ge \frac{3}{2} \lambda 
 \end{align*}
 and so the internal vertices of $C_3$ can have node-weight at most $\lambda/2$
 (the total node-weight of $C$ being at most $2\lambda$).

We now count the number of possible signature vectors.  As for each $j$ there are at most $n^2$ ways of choosing the
incoming edge, at most $n^2$ ways of choosing the outgoing edge, and $i_j$ can
have two different parities, the number of possible signatures is (summing over the choices of $t=0,1,2,3$) at most
$ \rb{2n^2\cdot n^2} + \rb{2n^2\cdot n^2}^2 + \rb{2n^2\cdot n^2}^3 + \rb{2n^2\cdot n^2}^4  < n^{17}$.

It remains to be shown that any
two alternating circuits $C$ and $D$ in $H$ of node-weight at most $2\lambda$ have different
signatures if $\pmind{C} \neq \pmind{D}$. Suppose that  $\pmind{C} \ne \pmind{D}$ but $\sigma(C) = \sigma(D)$.
We would like to derive a contradiction with the assumption (in \cref{circuit-removal}) that $H$ contains no alternating circuit of node-weight at most $\lambda$.
This will finish the proof.

As described above, $C$ can be partitioned into disjoint paths
$C_0, \dots, C_t$ using its indices $i_0, \dots, i_t$. 
Similarly we partition $D$ into $D_0, \dots, D_t$.
Since these
are disjoint unions, $\pmind{C} \ne \pmind{D}$ implies that at least one of the
four subpaths must be different between $C$ and $D$, in the sense that the part
of the alternating indicator vector arising from that subpath is different.
More formally,  let $b_j$ denote the parity (i.e., the first element) of the $j$-th tuple in the signatures $\sigma(C) = \sigma(D)$. Then we have $\pmind{C} = \sum_{j=0}^t b_j \cdot \pmind{C_j}$ and $\pmind{D} = \sum_{j=0}^t b_j \cdot \pmind{D_j}$. Therefore, as $\pmind{C} \neq \pmind{D}$, there must be a $j\in \{0,\dots, t\}$ such that $\pmind{C_j} \neq
\pmind{D_j}$.
We will ``glue'' together the paths $C_j$ and $D_j$ to obtain
another alternating circuit $B$.

First notice that  both $C_j$ and $D_j$ are
paths of the form $v_{i_j} = a \rightarrow  b  \rightarrow \dots \rightarrow
c \rightarrow  d = v_{i_{(j+1) \mymod t}}$, where the segment from $b$ to $c$ differs between them.\footnote{
Here again we slightly abuse notation since $i_j$ might differ between $C$ and $D$;
however, the vertex $v_{i_j}$ does not,
because it is the tail of $e_{i_j}$, which is part of the signature $\sigma(C) = \sigma(D)$.
The same applies to $v_{i_{j+1 \mymod t}}$.
}
This
follows from the assumption that $\sigma(C) = \sigma(D)$. Let $P_C$ denote the
path from $b$ to $c$ in $C_j$ and let $P_D$ denote the path from $b$ to $c$ in
$D_j$. As the parity fields of the signatures agree,  we have that  $|P_C|
+ |P_D|$ is even. Now let $B$ be the cyclic walk of even length obtained by walking from
$b$ to $c$ along the path $P_C$ and back from $c$ to $b$ along the path $P_D$ (in
reverse). That is, $B$ is of the form (see also the right part of~\cref{fig:signature})
\begin{align*}
  b \xrightarrow{f_1} \dots \xrightarrow{f_{|P_C|}} c \xrightarrow{g_1} \dots \xrightarrow{g_{|P_D|}} b\,,
\end{align*}
where we let $f_1, \dots, f_{|P_C|}$ denote the edges of the path $P_C$ and $g_1, \dots, g_{|P_D|}$ denote the edges of the reversed path $P_D$.  
To verify that $B$  is an alternating circuit  we need to show that its alternating indicator vector is nonzero:
\begin{align*}
  - \pmind{B} &= \sum_{i=1}^{|P_C|} (-1)^{i} \one_{f_i} + \sum_{i=1}^{|P_D|} (-1)^{|P_C| + i} \one_{g_i} \\
  & =  \underbrace{\left((-1)^0\one_{\outgoing(a)} + \sum_{i=1}^{|P_C|} (-1)^{i} \one_{f_i} + (-1)^{|P_C| + 1} \one_{\incoming(d)}\right)}_{= \pmind{C_j}}\\
  &  \qquad \qquad \qquad + \underbrace{\left( (-1)^{|P_C| + 2} \one_{\incoming(d)} + \sum_{i=1}^{|P_D|} (-1)^{|P_C| + 2 + i} \one_{g_i} + (-1)^{|P_C| + |P_D| + 3} \one_{\outgoing(a)} \right)}_{= -\pmind{D_j}}.
\end{align*}
The second equality is easiest to see
by mentally extending $B$ from a circuit
$b \to ... \to c \to ... \to b$
to
$a \to b \to ... \to c \to d \to c \to ... \to b \to a$.
Also recall that $|P_C|+|P_D|$ is even.
Thus we get $\pmind{B} = - \pmind{C_j} + \pmind{D_j}$, which is nonzero by the choice of $j$.
Finally, the node-weight of $B$ is at most the node-weight of the internal nodes of path $C_j$ plus the node-weight of the internal nodes of path $D_j$ and thus at most $\lambda/2 + \lambda/2 = \lambda$.

We have thus shown that $B$ is a nonempty cyclic walk of even length whose alternating indicator vector is nonzero -- thus an alternating circuit -- and whose node-weight is at most $\lambda$. This contradicts our assumption on $H$.
\end{proof}

Now we have all the tools needed to prove the main result of this section.

\begin{proof}[Proof of \cref{circuit-removal}]
Let us fix some $w \in \cW$.
We want to articulate conditions on $w$ which will make sure that the statement is satisfied.
Then we show that some $w \in \cW$ satisfies these conditions.

Let $C$ be any alternating circuit in $H$ of node-weight at most $2 \lambda$.
Our condition on $w$ will be that all such circuits $C$ should not respect $\facemin{F}{w}$, i.e., that all vectors from the set
\[ \cZ := \{ \pmind{C} : C \text{ is an alternating circuit in $H$ of node-weight at most $2 \lambda$} \} \]
should not respect $\facemin{F}{w}$.
We use \cref{killing_cycle} to transform each $z \in \cZ$ ($z \in \bZ^{E(H)}$) to a vector $y = y(z) \in \bZ^E$ such that if $y(z)$ does not respect $\facemin{F}{w}$, then $z$ does not respect $\facemin{F}{w}$.
Let $\cY = \{ y(z) : z \in \cZ \}$.
Clearly $|\cY| \le |\cZ|$ (actually $|\cY| = |\cZ|$ since the mapping $z \mapsto y(z)$ is one-to-one),
and $|\cZ| \le n^{17}$ by \cref{counting}.
Moreover, since the alternating circuits $C$ were of node-weight at most $2 \lambda \le 4 n$, we have $\|z \|_1 \le 4n$ for $z \in \cZ$ and $\| y \|_1 \le 4n^2$ for $y \in \cY$.
Now it is enough to apply \cref{killing_set_of_cycles} to obtain that there exists $w \in \cW(n^3 \cdot n^{17}) = \cW$ such that each $y \in \cY$ does not respect 
the face $\facemin{F}{w}$,
and thus each $z \in \cZ$ does not respect $\facemin{F}{w}$.
\end{proof}

\subsection{The existence of a good weight function}
In this section, we use \cref{circuit-removal} to prove the existence of
a weight function defining a face $\Fout$ with the desired properties (so
as to be  the face in our $2\lambda$-good face-laminar pair), namely:
\begin{theorem}
  Let $(\Fin, \Lin)$ be a $\lambda$-good face-laminar pair.  Then there exists a weight function $\wout \in \cWf^{\log_2(n) + 1 }$ such that
  the face $\Fout = \facemin{\Fin}{\wout}$ satisfies:
  \begin{itemize}
    \item[(i)'] For each $S \in \Lin$ with $|S| \le 2 \lambda$, $S$ is $\Fout$-contractible.
    \item[(ii)'] In the $(\Fout, \Lin, 2 \lambda)$-contraction of $G$, there is no $\Fout$-respecting alternating circuit of node-weight at most $2 \lambda$.
  \end{itemize}
  \label{good-weight-fn}
\end{theorem}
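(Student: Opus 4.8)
The plan is to produce $\wout$ as a concatenation $\wout = w_1 \circ w_2$, where $w_1 \in \cW^{\log_2 n}$ is supplied by the chain-handling \cref{making-2l-contractible} and takes care of condition (i)', while $w_2 \in \cW$ is supplied by \cref{circuit-removal} and takes care of condition (ii)'. By the reasoning behind \cref{concatenation_and_subface} we then have $\Fout = \facemin{\facemin{\Fin}{w_1}}{w_2} = \facemin{\Fin}{w_1 \circ w_2}$, and $w_1 \circ w_2 \in \cW^{\log_2 n + 1}$ (if \cref{making-2l-contractible} uses fewer than $\log_2 n$ functions we pad, which is harmless).

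First I would apply \cref{making-2l-contractible} to the $\lambda$-good pair $(\Fin, \Lin)$ to get $w_1$ such that, writing $F_1 := \facemin{\Fin}{w_1}$, every $S \in \Lin$ with $|S| \le 2\lambda$ is $F_1$-contractible. Note that $(F_1, \Lin)$ is again a face-laminar pair since $\tight(F_1) \supseteq \tight(\Fin) \supseteq \Lin$. Since the final face $\Fout$ will be a subface of $F_1$, \cref{sub-face-contractable} immediately upgrades this to condition (i)' for $\Fout$; so it only remains to choose $w_2$ correctly and to verify (ii)'.

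Next I would show that the $(F_1, \Lin, 2\lambda)$-contraction $H$ has no alternating circuit of node-weight at most $\lambda$, which is exactly the hypothesis needed to invoke \cref{circuit-removal} with $F = F_1$, $\cL = \Lin$, $\beta = 2\lambda$. The idea is to compare $H$ with the $(\Fin, \Lin, \lambda)$-contraction $H_0$, which by $\lambda$-goodness of $(\Fin, \Lin)$ contains no such circuit. Any alternating circuit $C$ in $H$ of node-weight $\le \lambda$ visits only vertices of node-weight $\le \lambda$; one checks that (a) these vertices, being maximal size-$\le 2\lambda$ sets of $\Lin$ that happen to have size $\le \lambda$, are precisely the maximal size-$\le \lambda$ sets of $\Lin$, i.e. the vertices of $H_0$; and (b) no edge of $C$ can cross a set of $\Lin$ of size $> \lambda$, since an endpoint of such an edge lying inside a $\Lin$-set of size $> \lambda$ lies inside a contracted set of size $> \lambda$, forcing $C$ to touch a vertex of $H$ of node-weight $> \lambda$ — a contradiction. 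Hence every edge of $C$ survives into $H_0$ (it lies in $\suppo(F_1) \subseteq \suppo(\Fin)$, is not inside a contracted set, and crosses no set of size $> \lambda$), and the map $H_0 \to G$ makes $C$ an alternating circuit of the same node-weight $\le \lambda$ in $H_0$, contradicting $\lambda$-goodness. This contraction-comparison, with its laminarity and node-weight bookkeeping, is the step I expect to be the main obstacle.

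Finally I would apply \cref{circuit-removal} to the pair $(F_1, \Lin)$ with $\beta = 2\lambda$, obtaining $w_2 \in \cW$ such that $H$ (the $(F_1, \Lin, 2\lambda)$-contraction) has no $\facemin{F_1}{w_2}$-respecting alternating circuit of node-weight at most $2\lambda$. Put $\Fout := \facemin{F_1}{w_2}$. Since the family of contracted sets and of erased boundaries in the $(\Fout, \Lin, 2\lambda)$-contraction is determined by $\Lin$ and $2\lambda$ alone, this contraction is a sub-multigraph of $H$ with the same vertex set (only $\suppo(\Fout) \subseteq \suppo(F_1)$ may remove edges). Therefore an $\Fout$-respecting alternating circuit of node-weight $\le 2\lambda$ in the $(\Fout, \Lin, 2\lambda)$-contraction would also be one in $H$ — the respecting conditions of \cref{def:respect_contraction} refer only to $\suppo(\Fout)$, $\tight(\Fout)$ and the common vertex set — contradicting the choice of $w_2$. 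This establishes (ii)', and together with (i)' (from \cref{making-2l-contractible} and \cref{sub-face-contractable}) completes the proof.
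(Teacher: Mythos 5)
Your proposal is correct and follows essentially the same route as the paper's proof: first apply \cref{making-2l-contractible} to obtain the intermediate face ($\Fmid$ in the paper, your $F_1$), argue that the $(\Fmid,\Lin,2\lambda)$-contraction inherits the absence of alternating circuits of node-weight at most $\lambda$ from the $(\Fin,\Lin,\lambda)$-contraction by the same node-weight/laminarity bookkeeping, then apply \cref{circuit-removal} with $\beta=2\lambda$ and concatenate the two weight functions. The paper's argument is the same in all essential steps, including the final observation that the $(\Fout,\Lin,2\lambda)$-contraction is a subgraph of the $(\Fmid,\Lin,2\lambda)$-contraction.
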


Throughout this section, $\Fin$ and $\Lin$ are as in the statement of \cref{good-weight-fn}.
The proof of~\cref{good-weight-fn} is based on the following technical lemma.
\begin{lemma}
 There exists a weight function
 $\wmid \in \cWf^{\log_2 n}$ such that the face  $\Fmid = \facemin{\Fin}{\wmid}$ satisfies: 
  \begin{itemize}
    \item[(i)'] For each $S \in \Lin$ with $|S| \le 2 \lambda$, $S$ is $\Fmid$-contractible.
  \end{itemize}
  \label{making-2l-contractible}
\end{lemma}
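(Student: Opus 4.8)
The plan is to carry out the ``chain case'' sketched in the introduction, applying one weight function per phase over $\log_2 n$ phases.

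\emph{Reduction to chains.} Every $S\in\Lin$ with $|S|\le\lambda$ is already $\Fin$-contractible by $\lambda$-goodness, hence $F$-contractible for any subface $F\subseteq\Fin$ by \cref{sub-face-contractable}; so it suffices to make the sets with $\lambda<|S|\le 2\lambda$ contractible. Fix a maximal $T\in\Lin$ with $\lambda<|T|\le 2\lambda$. By laminarity the sets $S\in\Lin$ with $S\subseteq T$ and $|S|>\lambda$ form a chain $S_1\subsetneq\cdots\subsetneq S_\ell=T$: two disjoint such sets would give $|T|\ge|S|+|S'|>2\lambda$. Every $S\in\Lin$ with $\lambda<|S|\le 2\lambda$ lies in a unique such $T$, so it is enough to deal with each chain; since distinct chains sit inside vertex-disjoint sets $T$, a single weight function can handle all of them simultaneously in each phase. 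Note also that $\ell\le\lceil\lambda/2\rceil\le n$ (the $|S_i|$ are distinct odd integers in $(\lambda,2\lambda]$), so $\log_2\ell\le\log_2 n$.

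\emph{Layers and the invariant.} Put $S_0=\emptyset$ and, for $1\le p\le r\le\ell$, $U_{p,r}=S_r\setminus S_{p-1}$; then $|U_{p,r}|=|S_r|-|S_{p-1}|$ is even. Call $U_{p,r}$ \emph{$F$-contractible} if for every pair $e_{p-1}\in\delta(S_{p-1})$, $e_r\in\delta(S_r)$ there are no two perfect matchings in $F$ having $e_{p-1},e_r$ as their boundary edges on $S_{p-1},S_r$ and differing on $E(U_{p,r})$; our goal, ``$S_r$ is $\Fmid$-contractible for all $r$'', is exactly ``$U_{1,r}$ is $\Fmid$-contractible for all $r$'' (using $\delta(S_0)=\emptyset$). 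I will build faces $\Fin=F_0\supseteq F_1\supseteq\cdots$ with $F_t=\facemin{F_{t-1}}{w_t}$ for some $w_t\in\cW$, maintaining the invariant: every $U_{p,r}$ with $r-p\le 2^{t-1}-1$ is $F_t$-contractible. After $t^\star=1+\lceil\log_2\ell\rceil\le\log_2 n$ phases this covers all layers of the chains; setting the remaining $w_t$ (for $t^\star<t\le\log_2 n$) to the constant function $w_2\in\cW$, which leaves the face unchanged, and $\wmid:=w_1\circ\cdots\circ w_{\log_2 n}\in\cWf^{\log_2 n}$, iterated application of \cref{concatenation_and_subface} gives $\facemin{\Fin}{\wmid}=F_{t^\star}$, as required.

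\emph{A single phase.} Phase $1$ ($r=p$, i.e.\ the single layers $U_p:=U_{p,p}$) is handled by \cref{circuit-removal} with $\beta=\lambda$: since by $\lambda$-goodness the $(\Fin,\Lin,\lambda)$-contraction has no alternating circuit of node-weight $\le\lambda$, there is $w_1\in\cW$ so that it has no $F_1$-respecting alternating circuit of node-weight $\le 2\lambda$. If some $U_p$ were not $F_1$-contractible, a witnessing pair $\hat M\ne\hat M'$ in $F_1$ (sharing their $S_{p-1},S_p$-boundary edges) would have $(\hat M\triangle\hat M')\cap E(U_p)$ a nonempty union of alternating cycles in $G$ lying inside $U_p$ and crossing no tight set of size $>\lambda$ (for $p\ge2$ because $|U_p|<\lambda$, for $p=1$ because $U_1=S_1$ has its boundary erased in the contraction); lifting such a cycle canonically into the contraction (the construction behind \cref{killing_cycle}) yields an alternating circuit of node-weight $\le|U_p|\le 2\lambda$ that respects $F_1$, a contradiction. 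For phase $t\ge2$ set $F=F_{t-1}$ and take a layer $U_{p,r}$ with $2^{t-2}-1<r-p\le 2^{t-1}-1$; choosing $m=p+2^{t-2}-1$, both $U_{p,m}$ and $U_{m+1,r}$ have width $\le 2^{t-2}$, hence are $F$-contractible by the invariant. Therefore in $F$, once $e_{p-1}\in\delta(S_{p-1})$, $e_r\in\delta(S_r)$ and the matching edge $e_m\in\delta(S_m)$ are fixed, the restriction of a perfect matching to $E(U_{p,r})=E(U_{p,m})\cup E(U_{m+1,r})\cup(\delta(S_m)\cap E(U_{p,r}))$ is determined; so there are at most $|\delta(S_m)|\le n^2$ candidate restrictions per $(p,r,e_{p-1},e_r)$. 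Let $\mathcal Y$ be the set of all differences $\one_M-\one_{M'}$ over layers $U_{p,r}$ of this phase, all boundary pairs, and all pairs of distinct candidates; then $|\mathcal Y|\le\ell^2\cdot n^4\cdot n^4\le n^{10}$, each $y\in\mathcal Y$ is a nonzero integer vector supported on $E(U_{p,r})\subseteq\suppo(F)$ with $\|y\|_1\le|U_{p,r}|\le 2\lambda\le 4n^2$. By \cref{killing_set_of_cycles} there is $w_t\in\cW(n^3\cdot n^{10})\subseteq\cW$ with no $y\in\mathcal Y$ respecting $F_t:=\facemin{F}{w_t}$. The invariant is preserved: layers of smaller width stay contractible in $F_t\subseteq F$, and if a new-width $U_{p,r}$ were not $F_t$-contractible, a witnessing pair $\hat M\ne\hat M'\in F_t$ would give two distinct candidates $M=\hat M\cap E(U_{p,r})$, $M'=\hat M'\cap E(U_{p,r})$, and $y:=\one_M-\one_{M'}\in\mathcal Y$ would respect $F_t$ --- indeed $\supp(y)\subseteq\suppo(F_t)$, and for any tight set of $F_t$, reduced via \cref{respect_L_respect_F} to a maximal laminar family extending $\Lin$, laminarity with $S_{p-1},S_r$ forces either $\delta(S)\cap E(U_{p,r})=\emptyset$ (so $\langle y,\one_{\delta(S)}\rangle=0$ trivially) or that the unique boundary edge of $S$ lies in $E(U_{p,r})$ for both $\hat M$ and $\hat M'$ (so the two contributions cancel) --- contradicting the choice of $w_t$.

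\emph{Main obstacle.} The chain reduction, the candidate counting, and the phase arithmetic are routine. The delicate points are (a) the passage in the base case between alternating cycles in $G$ and alternating circuits in the $(\Fin,\Lin,\lambda)$-contraction --- one must ensure the alternating indicator vector stays nonzero and keeps respecting the face when the small tight sets are contracted, which is exactly the canonical lifting technology underlying \cref{killing_cycle} (used here in the reverse direction) --- and (b) verifying, in the inductive step, that the difference vector respects \emph{every} tight set of the new face, for which the case analysis of how $S$ is situated relative to $S_{p-1},S_r$ must be carried out in full. Everything else then plugs into \cref{killing_set_of_cycles} and \cref{circuit-removal} as indicated.
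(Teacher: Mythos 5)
Your proposal is correct and follows essentially the same route as the paper: the reduction to disjoint chains, the layers $U_{p,r}$, the doubling invariant over $\log_2 n$ phases, the base case via \cref{circuit-removal} with $\beta=\lambda$, and the $n^2$ bound on candidate restrictions via a midpoint split are all exactly the paper's argument (\cref{F1,match-same-weight,matching-count}). The one local difference is in the inductive step: where you route through \cref{killing_set_of_cycles} and must verify that the difference vector $\one_M-\one_{M'}$ \emph{respects} the new face (via the laminarity case analysis you flag), the paper instead observes that the hybrid matching $(M_1\setminus E(U))\cup(M_2\cap E(U))$ lies in the same minimizing subface, so all that is needed is the single non-equality $w_t(M_1\cap E(U))\neq w_t(M_2\cap E(U))$ fed directly to \cref{lem23} --- slightly simpler, but both arguments are sound (your count $n^{10}$ omits a factor of $n$ for the choice of chain, which is immaterial).
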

Before giving the proof of~\cref{making-2l-contractible}  let
us see how it, together with~\cref{circuit-removal}, readily implies~\cref{good-weight-fn}.
In short, once we have made sets of size up to $2 \lambda$ contractible
using \cref{making-2l-contractible},
we are only left with removing alternating circuits
of node-weight between $\lambda$ and $2 \lambda$.
One application of \cref{circuit-removal}
is enough to achieve this.

\begin{proof}[Proof of \cref{good-weight-fn}]
  \cref{making-2l-contractible} says that
  there is a weight function $\wmid \in \cWf^{\log_2(n)}$  such that
  the face $\Fmid = \facemin{\Fin}{\wmid}$ satisfies that every $S\in \Lin$ with $|S| \leq 2 \lambda$ is
  $\Fmid$-contractible.
  As we will obtain $\Fout$ as a subface of $\Fmid$,
  we have
  thus proved point $(i)'$ of \cref{good-weight-fn}, as any set that is $\Fmid$-contractible will remain
  contractible in any subface of $\Fmid$ (by \cref{sub-face-contractable}).

  By the above, every vertex in the $(\Fmid, \Lin,
  2\lambda)$-contraction of $G$ corresponds to an $\Fmid$-contractible set.
  Moreover, by the assumption that the face-laminar pair $(\Fin, \Lin)$ is
  $\lambda$-good, the $(\Fin, \Lin, \lambda)$-contraction of $G$ does not have any
  alternating circuits of node-weight at most $\lambda$.
  This implies
  that the $(\Fmid, \Lin, 2\lambda)$-contraction of $G$ does not have any such alternating
  circuits.
  For suppose $C$ were one.
  Let $S_1, ..., S_k$ be maximal sets of size at most $2 \lambda$ in $\Lin$, i.e., the vertices of the $(\Fmid, \Lin, 2 \lambda)$-contraction of $G$.
  Note that $C$ cannot cross a set $S_i$ with $|S_i| > \lambda$, because then its node-weight would be larger than $\lambda$.
  Therefore $C$ only crosses sets $S_i$ with $|S_i| \le \lambda$.
  Thus $C$ also appears in the $(\Fmid, \Lin, \lambda)$-contraction of $G$, with the same node-weight,
  and in the $(\Fin, \Lin, \lambda)$-contraction (of which the $(\Fmid, \Lin, \lambda)$-contraction is a subgraph) as well.
  This is a contradiction.
  
  We can thus apply~\cref{circuit-removal} with $\beta = 2 \lambda$ to
  the face-laminar pair $(\Fmid, \Lin)$. We
  get a weight function $w \in \cWf$ such that
  the $(\Fmid, \Lin, 2 \lambda)$-contraction
  has no $\facemin{\Fmid}{w}$-respecting
  alternating circuit of node-weight at most $2 \lambda$.
  Therefore, as the $(\facemin{\Fmid}{w}, \Lin, 2\lambda)$-contraction is a subgraph of the $(\Fmid, \Lin, 2\lambda)$-contraction,
  the face $\Fout = \facemin{\Fmid}{w}$
  satisfies $(ii)'$. Selecting $\wout = \wmid \circ
  w \in \cWf^{\log_2(n) + 1}$ completes the proof (by \cref{concatenation_and_subface}).  \qedmanual 
\end{proof}

The rest of this section is devoted to the proof of~\cref{making-2l-contractible}.
Recall that we need to prove the existence of a weight function $\wmid \in \cWf^{\log_2(n)}$  satisfying  $(i)'$, i.e., that
 \begin{align*}
   \label{eq:cond1} 
   \mbox{for each $S \in \Lin$ with $|S| \le 2 \lambda$, $S$ is $\Fmid$-contractible,}
 \end{align*}
 where $\Fmid = \facemin{\Fin}{\wmid}$.
First note that the statement will be true for every $S\in \Lin$ with $|S| \le
\lambda$, regardless of the choice of the weight function $\wmid$. Indeed, by
assumption $(\Fin, \Lin)$ is $\lambda$-good and so $S$ is $\Fin$-contractible. Thus,
by \cref{sub-face-contractable}, $S$ remains $\Fmid$-contractible for any
subface $\Fmid \subseteq\Fin$.  

It remains to deal with the sets $S\in \Lin$ with $\lambda < |S| \leq
2\lambda$.  Let $\cL = \{S\in \Lin :\lambda<  |S| \leq 2\lambda\}$ be the
laminar family $\Lin$ restricted to these sets. Notice that any set in $\cL$ can have
at most one child in $\cL$ due to the cardinality
constraints. In other words, $\cL$ consists of disjoint chains, as depicted
in~\cref{fig:laminar}.

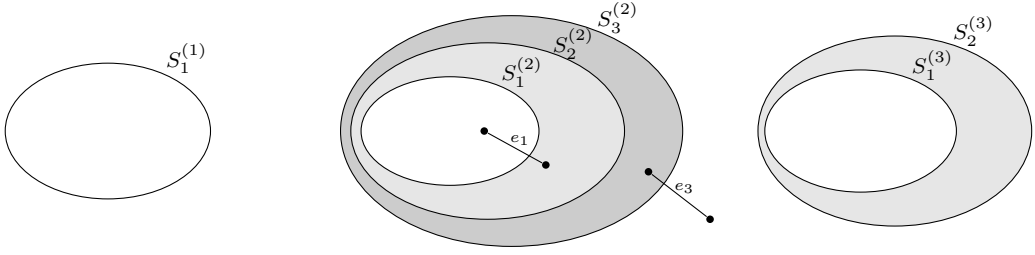
\begin{figure}[t]
  \begin{center}
    \begin{tikzpicture}[scale=0.9]
    \tikzstyle{vertex}=[circle, fill=black, minimum size=2,inner sep=1pt]
      \draw (-5,0) ellipse (1.5cm and 1cm) node[above right = 0.63cm and 0.63cm] {\scriptsize $S^{(1)}_1$};
      \draw[fill=gray!20!white] (6.5,0) ellipse (2cm and 1.4cm)node[above right = 0.99cm and 0.63cm] {\scriptsize $S^{(3)}_2$};
      \draw[fill=white] (6,0) ellipse (1.4cm and 0.9cm)node[above right = 0.54cm and 0.54cm] {\scriptsize $S^{(3)}_1$};
      \draw[fill=gray!40!white] (0.9,0) ellipse (2.5cm and 1.7cm)node[above right = 1.17cm and 0.99cm] {\scriptsize $S^{(2)}_3$};
      \draw[fill=gray!20!white] (0.55,0) ellipse (2cm and 1.3cm)node[above right = 0.825cm and 0.72cm] {\scriptsize $S^{(2)}_2$};
      \draw[fill=white] (0,0) ellipse (1.3cm and 0.8cm)node[above right = 0.45cm and 0.54cm] {\scriptsize $S^{(2)}_1$};
      \node[vertex] (e1a) at (0.5,0) {};
      \node[vertex] (e1b) at (1.4,-0.5) {};
      \draw (e1a) edge node[above right = -0.1cm and -0.2cm] {\tiny $e_1$} (e1b);
      \node[vertex] (e3a) at (2.9,-0.6) {};
      \node[vertex] (e3b) at (3.8,-1.3) {};
      \draw (e3a) edge node[above right = -0.1cm and -0.2cm] {\tiny $e_3$} (e3b);
    \end{tikzpicture}
  \end{center}
  \caption{An example of the laminar family $\cL$, which consists of disjoint chains. The different shades of gray depict the sets $U_p^{(i)}$. }
  \label{fig:laminar}
\end{figure}

\paragraph{Notation.} We refer to the sets in the $i$-th chain  of $\cL$ by $\cL_{i}$. 
Let $\ell_i = |\cL_i|$ and index the sets of the chain $\cL_i = \{S^{(i)}_1, S^{(i)}_2,
\ldots, S^{(i)}_{\ell_i}\}$ so that  $S^{(i)}_1 \subseteq S^{(i)}_2 \subseteq \cdots \subseteq
S^{(i)}_{\ell_i}$. Let $U^{(i)}_1 = S^{(i)}_1$ and $U^{(i)}_{p}
= S^{(i)}_p\setminus S^{(i)}_{p-1}$ for $p=2, 3, \ldots, \ell_i$. Also define
$U^{(i)}_{p, r} = U^{(i)}_p \cup U^{(i)}_{p+1} \cup \cdots \cup U^{(i)}_{r}$.

\paragraph{}Recall that a set
$U^{(i)}_{1,r} = S^{(i)}_r$ is defined to be $F$-contractible if for every $e_r \in
\delta(U^{(i)}_{1,r})$ there are no two perfect matchings in $F$ which both
contain $e_r$ and are different inside $U^{(i)}_{1,r}$ (\cref{def:contractable}). 
For the proof of~\cref{making-2l-contractible}, we generalize this definition to
also include sets $U^{(i)}_{p,r}$ with $p\geq 2$. 
\begin{definition}
  Consider a face $F$. We say that a set
  $U^{(i)}_{p,r}$ with $2\leq p \leq r \leq \ell_i$ is \emph{$F$-contractible} if for every 
  $e_{p-1}\in \delta(S^{(i)}_{p-1})$ and $e_r \in \delta(S^{(i)}_{r})$ there are no two perfect matchings in $F$ which both contain $e_{p-1}$ and $e_r$ and are different inside $U^{(i)}_{p,r}$. (It is possible that $e_{p-1} = e_r$, in which case neither endpoint of this edge lies in $U^{(i)}_{p,r}$.)
\end{definition}
The intuition of this definition is similar to that of~\cref{def:contractable}. Consider the second chain in~\cref{fig:laminar} for an example.  If
we restrict our attention to perfect matchings that contain edges $e_1 \in
\delta(S^{(2)}_1)$ and $e_3 \in \delta(S^{(2)}_3)$, then, as $S^{(2)}_1$
are $S^{(2)}_3$ are tight sets, the task of selecting such a matching
decomposes into two \emph{independent} problems:  the problem of selecting
a perfect matching in $U_{2,3}^{(2)}$ (ignoring the vertices incident to $e_1$
and $e_3$) and the problem of selecting a perfect matching in $V\setminus
U_{2,3}^{(2)}$ (again ignoring the vertices incident to $e_1$ and $e_3$).  

The proof now proceeds iteratively as follows. 
\begin{itemize}
  \item First we select $w_1 \in \cWf$ such that $F_1 =  \facemin{\Fin}{w_1}$ satisfies: 
      \begin{align}
        \label{eq1}
        \mbox{$U^{(i)}_p$ is $F_1$-contractible
      for all chains $i$ and $1\leq p \leq \ell_i$.}
    \end{align}
  \item For $t=2, 3,\ldots,  \log_2(n)$ we select $w_t \in \cWf$ such that
$F_t = \facemin{F_{t-1}}{w_t}$ satisfies: 
\begin{align}
  \label{eqt}
  \mbox{$U^{(i)}_{p,r}$
is $F_t$-contractible for all chains $i$ and $1\leq p \leq r \leq \ell_i$ with
$r-p\leq 2^{t-1}-1$.}
\end{align}
\end{itemize}
We remark that, having selected $w_1, w_2, \ldots, w_{\log_2(n)}$ as above, if
we let $\wmid = w_1 \circ w_2 \circ \ldots \circ w_{\log_2(n)} \in \cWf^{\log_2(n)}$,  then the
face $\Fmid  = \facemin{\Fin}{\wmid}$ equals $F_{\log_2(n)}$ (by \cref{concatenation_and_subface}). To see
that this completes the proof of~\cref{making-2l-contractible}, note that  $\ell_i < n/2$ for any chain $i$
since $|S^{(i)}_1| > \lambda = (2 \lambda)/2 \geq |S^{(i)}_{\ell_i}|/2$ and $|S^{(i)}_{\ell_i}| \leq n$.
Therefore any set $S^{(i)}_r \in \cL$ has $r\leq n/2$ and so, by
\eqref{eqt}, $S^{(i)}_r = U^{(i)}_{1,r}$ is
$\Fmid$-contractible.  

In what follows, we complete the proof of~\cref{making-2l-contractible} with a description of how to select $w_1$, followed by the selection of $w_t$, $t=2, 3,\ldots,  \log_2(n)$, in  the iterative case.

\subsubsection{The selection of $w_1$}

The following claim allows us to use~\cref{circuit-removal} to show the existence of a weight function $w_1$ satisfying~\eqref{eq1}.
\begin{claim}
  \label{F1}
  If the $(\Fin, \Lin, \lambda)$-contraction of $G$ has no $F_1$-respecting
  alternating circuit of node-weight at most $2\lambda$, then every $U_p^{(i)}$
  is $F_1$-contractible.
\end{claim}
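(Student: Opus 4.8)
The plan is to prove the contrapositive by contradiction. Assume the $(\Fin,\Lin,\lambda)$-contraction $H$ of $G$ contains no $F_1$-respecting alternating circuit of node-weight at most $2\lambda$, and suppose some layer $U^{(i)}_p$ fails to be $F_1$-contractible (use the convention $S^{(i)}_0 = \emptyset$ when $p=1$). Then there are perfect matchings $M_1,M_2\in F_1$ that use the same edge(s) on $\delta(S^{(i)}_{p-1})$ and on $\delta(S^{(i)}_p)$ but with $M_1\cap E(U^{(i)}_p)\ne M_2\cap E(U^{(i)}_p)$. Since $S^{(i)}_{p-1},S^{(i)}_p\in\tight(F_1)$, each matching uses exactly one edge on each of these two cuts, so $M_1\triangle M_2$ avoids $\delta(S^{(i)}_{p-1})\cup\delta(S^{(i)}_p)\supseteq\delta(U^{(i)}_p)$. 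Hence each alternating cycle in the standard cycle decomposition of $M_1\triangle M_2$ lies entirely inside $U^{(i)}_p$ or entirely outside it, and as $M_1,M_2$ differ inside $U^{(i)}_p$ we obtain an alternating (simple) cycle $C$ with all edges in $E(U^{(i)}_p)\cap(M_1\cup M_2)\subseteq\suppo(F_1)$. A standard exchange argument shows $M_1\triangle C$ is again a perfect matching; it lies in $F_1$ since its support is contained in $M_1\cup M_2\subseteq\suppo(F_1)$ and, for every $S\in\tight(F_1)$, $|\delta(S)\cap(M_1\triangle C)|\le|\delta(S)\cap(M_1\cup M_2)|\le 2$ forces it to equal $1$. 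Writing $\pmind{C}=\one_{M_1}-\one_{M_1\triangle C}$ up to sign, we see that $C$ respects $F_1$: $\supp(\pmind{C})=C\subseteq\suppo(F_1)$, and $\ab{\pmind{C},\one_{\delta(S)}}=1-1=0$ for each $S\in\tight(F_1)$.

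Next I project $C$ into $H$. Let $R_1,\dots,R_m$ be the vertices of $H$ contained in $U^{(i)}_p$; by laminarity these are exactly the maximal members of $\Lin$ of size at most $\lambda$ lying inside $U^{(i)}_p$, they partition $U^{(i)}_p$, and each $R_j$ is $\Fin$-contractible and tight for $\Fin$ because $(\Fin,\Lin)$ is $\lambda$-good. One checks, using laminarity of $T$ with $S^{(i)}_{p-1},S^{(i)}_p$ together with the cardinality bounds defining $\cL$, that every $T\in\Lin$ with $|T|>\lambda$ is either disjoint from $U^{(i)}_p$ or contains it; hence $C$ crosses no such $T$, and every edge of $C$ either lies inside some $R_j$ or survives as an edge of $H$. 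If $C$ were contained in a single $R_j$, then $M_1$ and $M_1\triangle C$ would be perfect matchings in $\Fin$ sharing the edge $\delta(R_j)\cap M_1$ but differing inside $R_j$, contradicting $\Fin$-contractibility of $R_j$; so $C$ meets at least two of the $R_j$. Since $R_j\in\tight(F_1)$, $M_1\cup M_2$ has at most two edges on $\delta(R_j)$, so $C$ crosses $\delta(R_j)$ at most twice, i.e.\ it makes at most one excursion through $R_j$; when it does, the two boundary edges of that excursion must be $\delta(R_j)\cap M_1$ and $\delta(R_j)\cap M_2$ (which are distinct and lie in different matchings), and alternation then forces the excursion to have even length. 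Contracting each $R_j$ therefore turns $C$ into a closed walk $C'$ in $H$ of even length visiting each vertex at most once; deleting even blocks preserves the alternation pattern, so $\pmind{C'}$ equals $\pmind{C}$ restricted to the surviving edges (up to a global sign), which is nonzero because $C$ is simple and has an edge joining two distinct $R_j$. Thus $C'$ is a genuine alternating circuit of $H$.

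Finally, $C'$ contradicts the hypothesis. Its node-weight is at most $\sum_{j=1}^m|R_j|=|U^{(i)}_p|\le 2\lambda$ (using $|S^{(i)}_p|\le 2\lambda$; for $p\ge 2$ one even has $|U^{(i)}_p|<\lambda$). And $C'$ respects $F_1$: $\supp(\pmind{C'})\subseteq\suppo(F_1)$, and for every $S\in\tight(F_1)$ that is a union of vertices of $H$, the vectors $\pmind{C'}$ and $\pmind{C}$ differ only on edges interior to the $R_j$, each of which is contained in or disjoint from $S$; hence $\ab{\pmind{C'},\one_{\delta(S)}}$ agrees up to sign with $\ab{\pmind{C},\one_{\delta(S)}}=0$. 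This is the desired contradiction, so every $U^{(i)}_p$ is $F_1$-contractible.

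The main obstacle is the projection step $C\mapsto C'$: one must combine tightness of the $R_j$ (bounding the number of excursions) with their contractibility — i.e.\ with the $\lambda$-goodness of $(\Fin,\Lin)$ — to rule out that $C$ is swallowed by a single contracted vertex and to force each excursion to have even length, so that $C'$ is a bona fide alternating circuit (even length, $\pmind{C'}\ne 0$) rather than a degenerate or odd closed walk. Once this is in place, the respect property and the node-weight bound follow almost for free, since all the discarded mass sits inside sets that are whole vertices of $H$.
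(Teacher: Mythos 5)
Your proof is correct and follows essentially the same route as the paper's: the symmetric difference of the two matchings yields a simple alternating cycle that descends to an $F_1$-respecting alternating circuit of node-weight at most $2\lambda$ in the $(\Fin,\Lin,\lambda)$-contraction, with $\Fin$-contractibility of the contracted sets handling any differences hidden inside them. The only (cosmetic) difference is the order of operations --- you take the symmetric difference in $G$ and then project the cycle, whereas the paper first projects the matchings onto the contracted vertices and takes the symmetric difference in the contraction.
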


The claim together with~\cref{circuit-removal} completes the selection of $w_1$
as follows. Since $(\Fin, \Lin)$ is $\lambda$-good, we can
apply~\cref{circuit-removal} with $\beta = \lambda$ to obtain the existence of
a weight function $w_1\in \cWf$ such that in  the $(\Fin, \Lin,
\lambda)$-contraction of $G$ there is no $F_1$-respecting alternating circuit of
node-weight at most $2\lambda$, where $F_1 = \facemin{\Fin}{w_1}$.
Hence, by the above claim, every $U_p^{(i)}$ is $F_1$-contractible as
required. 

\begin{proof}[Proof of \cref{F1}]
This proof resembles that of \cref{finalization}.
Fix $U_p^{(i)}$ and let $e_{p-1} \in \delta(S^{(i)}_{p-1})$ and $e_p \in \delta(S^{(i)}_{p})$.\footnote{Here and in \cref{select_wt}, we abuse notation and assume $p\geq 2$. The only difference is that, given a set $U^{(i)}_{p,r}$ with $p=1$ (in this section $p=r$), we consider matchings containing one edge $e_r \in \delta(S^{(i)}_r)$ instead of matchings containing two edges $e_{p-1} \in \delta(S^{(i)}_{p-1})$ and $e_r \in \delta(S^{(i)}_r)$, since if $p=1$, the set $S^{(i)}_{p-1}$ is not defined.}
Suppose that $M_1$ and $M_2$ are two perfect matchings in $F_1$ that both contain $e_{p-1}$ and $e_p$.
We want to show that $M_1$ and $M_2$ are equal inside $U_p^{(i)}$.

\begin{figure}[t]
  \begin{center}
    \tikzset{external/export next=false} 
    \tikzstyle{vertex}=[circle, fill=black, minimum size=2,inner sep=1pt]
    \begin{tikzpicture}
      \begin{scope}
        \clip(1.5,-2) rectangle (5.2,3);
        \draw[fill=gray!10!white, draw=gray!80!black] (-0.7,-2.3) ellipse (5.7cm and 5cm);
      \end{scope}
      \begin{scope}
        \clip(1.49,-2.01) rectangle (5.2,2.01);
        \draw[fill=white, draw=gray!80!black] (-0.5,-2.5) ellipse (3cm and 2.7cm);
      \end{scope}
      \node at (2.6, -2.5) {\scriptsize $S^{(i)}_{p-1}$};
      \node at (5.0, -2.5) {\scriptsize $S^{(i)}_{p}$};
      \node[vertex] (e1) at (2.2,-0.8) { };  
      \node[vertex,fill=white, draw=white] (e1o) at (1.4,-1.4) { };  
      \draw (e1) edge[bend left = 10] node[below right = 0.1cm and -0.35cm] {\tiny $e_{p-1}$} (e1o);
      \draw(e1) edge[decorate,decoration={snake,amplitude=.3mm,segment length=2pt,post length=0mm}, bend right=10] (e1o);
      \draw[fill = gray!30!white, rotate around={115:(2.2,-0.8)}] (2.2, -0.8) ellipse (0.2cm and 0.15cm);

      \node[vertex] (e2) at (2.50,1.55) { };  
      \node[vertex,fill=white,draw=white] (e2o) at (3.4,2.1) { };  
      \draw (e2) edge[bend right = 10] node[below right = -0.2cm and 0cm] {\tiny $e_{p}$} (e2o);
      \draw(e2) edge[decorate,decoration={snake,amplitude=.3mm,segment length=2pt,post length=0mm}, bend left=10] (e2o);
      \draw[fill = gray!30!white, rotate around={115:(2.3,1.5)}]  (2.3, 1.5) ellipse (0.2cm and 0.35cm); 

      \begin{scope}[xshift=0.2cm,yshift=-0.3cm]
        \node[vertex] (a1) at (2,0.8) { };  
        \node[vertex] (a2) at (2.5,0.8) { };  
        \draw (a1) edge[bend right = 20]  (a2);
        \draw(a1) edge[decorate,decoration={snake,amplitude=.3mm,segment length=2pt,post length=0mm}, bend left=20] (a2);
        \draw[fill = gray!30!white, rotate around={0:(2.3,1.5)}]  (2, 0.8) ellipse (0.1cm and 0.15cm);
        \draw[fill = gray!30!white, rotate around={45:(2.5,0.8)}]  (2.5, 0.8) ellipse (0.12cm and 0.1cm);
      \end{scope}
      \begin{scope}[xshift=0.1cm, yshift=0.5cm]
        \node[vertex] (v1) at (3, -1) {};
        \node[vertex] (v2) at (3.7, -0.5) {};
        \node[vertex] (v3) at (3.9, -1.2) {};
        \node[vertex] (v5) at (3.65, -1.5) {};
        \node[vertex] (v6) at (3, -1.7) {};
        \draw (v1) edge[ultra thick] (v2);
        \draw (v2) edge[decorate,decoration={snake,amplitude=.3mm,segment length=4pt,post length=0mm}, ultra thick] (v3);
        \draw (v5) edge[ultra thick] (v6);
        \draw (v6) edge[decorate,decoration={snake,amplitude=.3mm,segment length=4pt,post length=0mm}, ultra thick] (v1);
        \draw[fill = gray!30!white, rotate around={53:(3.9,-1.4)}] (3.9, -1.4) ellipse (0.4cm and 0.25cm);
        \draw[fill = gray!30!white, rotate around={45:(3,-1.7)}]  (3, -1.7) ellipse (0.12cm and 0.1cm);
        \draw[fill = gray!30!white, rotate around={120:(3,-1)}]  (3, -1) ellipse (0.22cm and 0.14cm);
        \draw[fill = gray!30!white, rotate around={45:(3.7,-0.5)}]  (3.7, -0.5) ellipse (0.13cm and 0.15cm);
      \end{scope}
        \draw[rounded corners=5pt] (1.0,-3) rectangle (6,3);
        \node at (3.5, -3.3) {\small \cref{F1}};
      \begin{scope}[xshift=5.5cm]
        \begin{scope}
          \clip(1.5,-2) rectangle (5.8,3);
          \draw[fill=gray!30!white, draw=gray!80!black] (-0.7,-2.3) ellipse (6cm and 5.5cm);
          \draw[fill=gray!20!white, draw=gray!80!black, dashed] (-0.7,-2.3) ellipse (5.2cm and 4.7cm);
          \draw[fill=gray!10!white, draw=gray!80!black, dashed] (-0.7,-2.3) ellipse (4.2cm and 3.7cm);
        \end{scope}
        \begin{scope}
          \clip(1.49,-2.01) rectangle (5.8,2.01);
          \draw[fill=white, draw=gray!80!black] (-0.5,-2.5) ellipse (3cm and 2.7cm);
        \end{scope}
        \node at (2.6, -2.5) {\scriptsize $S^{(i)}_{p-1}$};
        \node at (5.4, -2.5) {\scriptsize $S^{(i)}_{r}$};

        \node[vertex] (e1) at (1.9,-0.4) { };  
        \node[vertex] (e1o) at (1.5,-0.8) { };  
        \draw (e1) edge[bend left = 20, ultra thick] node[below right = -0.1cm and 0cm] {\tiny $e_{p-1}$} (e1o);
        \draw(e1) edge[decorate,decoration={snake,amplitude=.3mm,segment length=2pt,post length=0mm}, bend right=20] (e1o);

        \node[vertex] (a1) at (2.7,-1.4) { };  
        \node[vertex] (a2) at (2.9,-1.0) { };  
        \node[vertex] (a3) at (3.4,-0.8) { };  
        \node[vertex] (a4) at (3.9,-0.8) { };  
        \node[vertex] (a5) at (4.1,-1.2) { };  
        \node[vertex] (a6) at (4.1,-1.6) { };  
        \node[vertex] (a7) at (3.7,-1.7) { };  
        \node[vertex] (a8) at (3.1,-1.7) { };  
        \draw (a1) edge (a2);
        \draw (a2) edge[decorate,decoration={snake,amplitude=.3mm,segment length=4pt,post length=0mm}, ultra thick] (a3);
        \draw (a3) edge (a4);
        \draw (a4) edge[decorate,decoration={snake,amplitude=.3mm,segment length=4pt,post length=0mm}, ultra thick] (a5);
        \draw (a5) edge (a6);
        \draw (a6) edge[decorate,decoration={snake,amplitude=.3mm,segment length=4pt,post length=0mm}, ultra thick] (a7);
        \draw (a7) edge (a8);
        \draw (a8) edge[decorate,decoration={snake,amplitude=.3mm,segment length=4pt,post length=0mm}, ultra thick] (a1);
        \node[vertex] (b1) at (2.3,1.25) { };  
        \node[vertex] (b2) at (2.6,1.65) { };  
        \draw (b1) edge[bend right = 20]  (b2);
        \draw(b1) edge[decorate,decoration={snake,amplitude=.3mm,segment length=4pt,post length=0mm}, bend left=20, ultra thick] (b2);
        \node[vertex] (e2) at (4.8,-0.9) { };  
        \node[vertex] (e2o) at (5.4, -0.8) { };  
        \draw (e2) edge[bend right = 20,ultra thick] node[below right =-0.05cm and 0cm] {\tiny $e_{r}$} (e2o);
        \draw(e2) edge[decorate,decoration={snake,amplitude=.3mm,segment length=2pt,post length=0mm}, bend left=20] (e2o);
        \node[vertex] (b1) at (4.0, 2.4) { };  
        \node[vertex] (b2) at (4.5, 2.4) { };  
        \node[vertex] (b3) at (4.5, 1.9) { };  
        \node[vertex] (b4) at (4.0, 1.9) { };  
        \draw (b1) edge[ultra thick] (b2);
        \draw (b2) edge[decorate,decoration={snake,amplitude=.3mm,segment length=2pt,post length=0mm}] (b3);
        \draw (b3) edge[ultra thick] (b4);
        \draw (b4) edge[decorate,decoration={snake,amplitude=.3mm,segment length=2pt,post length=0mm}] (b1);
        \draw[rounded corners=5pt] (1.0,-3) rectangle (6,3);
        \node at (3.5, -3.3) {\small \cref{match-same-weight}};
      \end{scope}
      \begin{scope}[xshift=11cm]
        \begin{scope}
          \clip(1.5,-2) rectangle (5.8,3);
          \draw[fill=gray!30!white, draw=gray!80!black] (-0.7,-2.3) ellipse (6cm and 5.5cm);
          \draw[fill=gray!30!white, draw=gray!80!black, dashed] (-0.7,-2.3) ellipse (5.55cm and 5.05cm);
          \draw[fill=gray!30!white, draw=gray!80!black, dashed] (-0.7,-2.3) ellipse (5.0cm and 4.5cm);
          \draw[fill=gray!10!white, draw=gray!80!black] (-0.7,-2.3) ellipse (4.5cm and 4.1cm);
          \draw[fill=gray!10!white, draw=gray!80!black, dashed] (-0.7,-2.3) ellipse (4.0cm and 3.65cm);
          \draw[fill=gray!10!white, draw=gray!80!black, dashed] (-0.7,-2.3) ellipse (3.6cm and 3.20cm);
        \end{scope}
        \begin{scope}
          \clip(1.49,-2.01) rectangle (5.8,2.01);
          \draw[fill=white, draw=gray!80!black] (-0.5,-2.5) ellipse (3cm and 2.7cm);
        \end{scope}
        \node at (2.6, -2.5) {\scriptsize $S^{(i)}_{p-1}$};
        \node at (3.9, -2.5) {\scriptsize $S^{(i)}_{q}$};
        \node at (5.4, -2.5) {\scriptsize $S^{(i)}_{r}$};

        \node[vertex] (e1) at (1.9,-0.4) { };  
        \node[vertex] (e1o) at (1.5,-0.8) { };  
        \draw (e1) edge node[below right = -0.1cm and -0.05cm] {\tiny $e_{p-1}$} (e1o);

        \node[vertex] (e2) at (4.8,-0.9) { };  
        \node[vertex] (e2o) at (5.4, -0.8) { };  
        \draw (e2) edge node[below right =-0.05cm and 0cm] {\tiny $e_{r}$} (e2o);

        \node[vertex] (a1) at (3.1,-0.6) { };  
        \node[vertex] (a2) at (3.6, -0.4) { };  
        \draw (a1) edge node[below right =-0.05cm and 0cm] {\tiny $e_{q}$} (a2);

        \draw[rounded corners=5pt] (1.0,-3) rectangle (6,3);
        \node at (3.5, -3.3) {\small \cref{matching-count}};

      \end{scope}
    \end{tikzpicture}
  \end{center}
  \caption{An illustration of the claims used in the proof
    of~\cref{making-2l-contractible}. 
    \\
    \textbf{\cref{F1}}: Straight and swirly edges denote
    $M_1$ and $M_2$ respectively. The thick edges denote the alternating cycle. The dark-gray sets are $S_1, ..., S_k$.
\\
\textbf{\cref{match-same-weight}}: Straight and swirly edges denote $M_1$ and $M_2$ respectively. The thick edges denote $M_{12}$, which agrees with $M_1$ outside $U^{(i)}_{p,r}$ and with $M_2$ inside $U^{(i)}_{p,r}$.
\\
\textbf{\cref{matching-count}}: The divide-and-conquer argument is illustrated (only edges $e_{p-1}$, $e_q$, and $e_r$ are depicted). After fixing $e_{p-1}$ and $e_q$, the matching in the light-gray area is unique in the face $F_{t-1}$. Similarly, after fixing $e_q$ and $e_r$, the matching in the dark-gray area is unique in the face $F_{t-1}$.
Thus, for each choice of $e_{p-1}$ and $e_r$, there can be at most one matching inside $U^{(i)}_{p, r}$ for each possible way of fixing $e_q$.  
It follows that there are at most $n^2$ matchings inside $U^{(i)}_{p, r}$ that contain $e_{p-1}$ and $e_r$.
}
\label{fig:claims}
\end{figure}

Let $S_1, ..., S_k$ be all maximal sets $S \in \Lin$ with $S \subseteq U_p^{(i)}$.
They are those vertices of the $(\Fin, \Lin, \lambda)$-contraction which lie in $U_p^{(i)}$, and we have $S_1 \cup ... \cup S_k = U_p^{(i)}$.
Because these sets, as well as $S_{p-1}^{(i)}$ and $S_p^{(i)}$, are tight for $F_1$, any perfect matching  in $F_1$ containing $e_{p-1}$ and $e_p$ induces an almost-perfect matching on $S_1, ..., S_k$,
that is, one where only the (up to two) sets $S_i$ containing endpoints of $e_{p-1}$ and $e_p$ are unmatched (see the left part of \cref{fig:claims}).

If the matchings induced by $M_1$ and $M_2$ were different, then their symmetric difference would induce an alternating simple cycle $C$
in the graph obtained from $\suppo(\Fin)$ by contracting $S_1, ..., S_k$.
The cycle $C$ would also be present in the $(\Fin, \Lin, \lambda)$-contraction.
This is because $S_1, ..., S_k$ are maximal sets of size at most $\lambda$ in $\Lin$,
so they are indeed vertices of the contraction,
and because the edges between these vertices
are not on the boundary $\delta(T)$ of any $T \in \Lin$ with $|T| > \lambda$
(in which case they would be missing from the contraction),
which in turn follows by laminarity of $\Lin$
and the definition of $U^{(i)}_p$.

Since $C$ arises from two matchings in $F_1$, it respects $F_1$.
This follows by a similar argument as the proof of \cref{claim:y_respects},
which we repeat here for completeness.
Clearly, $\supp(C) \subseteq M_1 \cup M_2 \subseteq \suppo(F_1)$.
Let $T \in \tight(F_1)$ be a set tight for $F_1$ which is a union of the vertices of the contraction; we want to show that $\ab{\pmind{C}, \one_{\delta(T)}} = 0$.
Because $C$ is a cycle in the contraction and $|M_1 \cap \delta(T)| = |M_2 \cap \delta(T)| = 1$,
either $C$ has no edge in $\delta(T)$ or it has two, one from $M_1$ and one from $M_2$ (and they cancel out).

Moreover, since $C$ is a simple cycle inside $U^{(i)}_p$, its node-weight is at most $|S_1| + \ldots + |S_k| = |U^{(i)}_p| \le 2 \lambda$.
This contradicts our assumption.

Therefore, the induced matchings must be equal.
Moreover, the sets $S_1, ..., S_k$ are $F_1$-contractible, since they are vertices of the $(\Fin, \Lin, \lambda)$-contraction, $(\Fin,\Lin)$ is $\lambda$-good, and $F_1 \subseteq \Fin$.
This means that, given the boundary edges (i.e., the induced matching plus $e_{p-1}$ and $e_p$),
there is a unique perfect matching in $F_1$ inside each $S_i$.
It follows that $M_1$ and $M_2$ are equal inside $U_p^{(i)}$.
\end{proof}

\subsubsection{The selection of $w_t$ for $t= 2, 3, \ldots, \log_2(n)$} \label{select_wt}
In this section we show the existence of a weight function $w_t \in \cWf$ satisfying~\eqref{eqt}, i.e.,  
\begin{align*}
  \mbox{$U^{(i)}_{p,r}$ is $F_t$-contractible for all chains $i$ and $1\leq p \leq r \leq \ell_i$ with $r-p\leq 2^{t-1}-1$,}
\end{align*}
where 
$F_t = \facemin{F_{t-1}}{w_t}$.

The proof outline is as follows.
First, in~\cref{match-same-weight}, we give sufficient conditions on $w_t$ for  $U^{(i)}_{p,r}$ to be $F_t$-contractible.
They are given as a system of linear non-equalities with coefficients in $\{-1,0,1\}$.
Then, in~\cref{matching-count}, we upper-bound the number of these non-equalities by $n^{11}$.
This allows us to deduce the existence of $w_t \in \cWf$ by applying~\cref{lem23}.

The following claim gives sufficient linear non-equalities on $w_t$
for every $U^{(i)}_{p,r}$ to be $F_t$-contractible (one non-equality for each choice of $U^{(i)}_{p,r}$, $e_{p-1}$, $e_r$, $M_1$ and $M_2$). 
\begin{claim}
  \label{match-same-weight}
  Let $ U= U^{(i)}_{p,r}$ for some chain $i$ and $1\leq p\leq r \leq \ell_i$. 
  Suppose that for every two edges $e_{p-1} \in \delta(S^{(i)}_{p-1})$ and $e_{r} \in
  \delta(S^{(i)}_{r})$  defining a face $F= \{x\in F_{t-1}: x_{e_{p-1}} = 1,
  x_{e_{r}} = 1\}$ we have:
  \begin{align*}
    w_t(M_1 \cap E(U)) \neq w_t(M_2 \cap E(U))\quad \mbox{for any two matchings $M_1, M_2$ in $F$ that differ inside $U$.} 
  \end{align*}
  Then  $U$ is $F_t$-contractible. 
\end{claim}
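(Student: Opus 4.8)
The plan is to argue by contradiction, combining the swapping argument used for \cref{contractability_downward_closed} with the minimality of $F_t$ as a face of $F_{t-1}$. I treat the case $p \ge 2$; the case $p=1$ (where $U = U^{(i)}_{1,r} = S^{(i)}_r$) is identical but simpler, with a single boundary edge $e_r$ in place of the pair $e_{p-1},e_r$, following the footnote convention. So suppose $U = U^{(i)}_{p,r}$ is \emph{not} $F_t$-contractible: there are edges $e_{p-1} \in \delta(S^{(i)}_{p-1})$ and $e_r \in \delta(S^{(i)}_r)$ and two perfect matchings $M_1, M_2 \in F_t$ that both contain $e_{p-1}$ and $e_r$ and differ inside $U$. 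Since $F_t \subseteq F_{t-1}$, both $M_1$ and $M_2$ lie in $F := \{x \in F_{t-1} : x_{e_{p-1}} = 1, x_{e_r} = 1\}$, which is a nonempty face of $\PM$: indeed $S^{(i)}_{p-1}, S^{(i)}_r \in \Lin \subseteq \tight(F_{t-1})$, so fixing $x_{e_{p-1}} = 1$ is equivalent to setting $x_f = 0$ for every other $f \in \delta(S^{(i)}_{p-1})$ (and similarly for $e_r$), i.e.\ $F$ is obtained from $F_{t-1}$ by making inequality constraints tight. Hence the hypothesis of the claim applies to $(M_1, M_2)$ and gives
\[ w_t(M_1 \cap E(U)) \ne w_t(M_2 \cap E(U)). \]
The rest of the argument shows that these two quantities are in fact equal, a contradiction.

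The first ingredient is the swap. Because $S^{(i)}_{p-1}$ and $S^{(i)}_r$ are tight for $F$ and each $M_j$ contains $e_{p-1}$ and $e_r$, the only edges of $M_j$ in $\delta(U) \subseteq \delta(S^{(i)}_{p-1}) \cup \delta(S^{(i)}_r)$ are among $\{e_{p-1}, e_r\}$. Consequently the set $W$ of vertices of $U$ matched by $M_j$ to the outside of $U$ equals the set of endpoints of $e_{p-1}$ and $e_r$ lying in $U$, which is independent of $j$, and $M_j \cap E(U)$ is a perfect matching on $U \setminus W$ for both $j$. Define
\[ M_{12} := (M_1 \setminus E(U)) \cup (M_2 \cap E(U)), \qquad M_{21} := (M_2 \setminus E(U)) \cup (M_1 \cap E(U)). \]
Exactly as in the proof of \cref{contractability_downward_closed}, both are perfect matchings (each vertex of $U \setminus W$ is covered by the swapped-in interior matching, and each vertex of $W$ or outside $U$ by the retained exterior edges), both are contained in $M_1 \cup M_2 \subseteq \suppo(F)$, and for every $R \in \tight(F)$ we have $|\delta(R) \cap M_{12}| \le |\delta(R) \cap M_1| + |\delta(R) \cap M_2| = 2$, which together with parity (|R| odd) forces $|\delta(R) \cap M_{12}| = 1$; the same holds for $M_{21}$. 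Therefore $M_{12}, M_{21} \in F \subseteq F_{t-1}$.

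The second ingredient is minimality. Since $M_1, M_2 \in F_t = \facemin{F_{t-1}}{w_t}$, we have $w_t(M_1) = w_t(M_2) = \min\{\ab{w_t, x} : x \in F_{t-1}\}$, and since $M_{12}, M_{21} \in F_{t-1}$ this minimum is at most $w_t(M_{12})$ and at most $w_t(M_{21})$. On the other hand, additivity of $w_t$ over the partitions $M_j = (M_j \cap E(U)) \,\sqcup\, (M_j \setminus E(U))$ gives
\[ w_t(M_{12}) + w_t(M_{21}) = w_t(M_1 \setminus E(U)) + w_t(M_2 \cap E(U)) + w_t(M_2 \setminus E(U)) + w_t(M_1 \cap E(U)) = w_t(M_1) + w_t(M_2). \]
Combining the inequality and this identity forces $w_t(M_{12}) = w_t(M_1)$, and cancelling the common term $w_t(M_1 \setminus E(U))$ yields $w_t(M_2 \cap E(U)) = w_t(M_1 \cap E(U))$, contradicting the displayed non-equality. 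Hence no such $M_1, M_2$ exist and $U$ is $F_t$-contractible.

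The step I expect to require the most care is verifying that $M_{12}$ (and $M_{21}$) are perfect matchings lying in $F$: one needs the bookkeeping that the "escape set" $W$ is common to $M_1$ and $M_2$ (so the two interior matchings are on the same vertex set) and the parity argument that the tight odd-set constraints of $F$ survive the swap. Both points are, however, essentially a re-run of the argument already carried out for \cref{contractability_downward_closed}, so the new content is really just the final two-line averaging/minimality computation.
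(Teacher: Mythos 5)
Your proof is correct and takes essentially the same route as the paper's: both hinge on the swapped matching $M_{12} = (M_1 \setminus E(U)) \cup (M_2 \cap E(U))$ (justified exactly as in \cref{contractability_downward_closed}) together with the fact that $F_t$ minimizes $w_t$ over $F_{t-1}$. The only cosmetic difference is that the paper notes $M_{12}$ lies in a subface of $F_t$, so $w_t(M_{12}) = w_t(M_1)$ immediately, whereas you place $M_{12}$ and $M_{21}$ only in $F_{t-1}$ and recover the same equality via the averaging-plus-minimality computation.
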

\begin{proof}
  We prove the contrapositive.  Suppose that $U$ is not $F_t$-contractible.
  Then, by definition, there must be $e_{p-1} \in \delta(S^{(i)}_{p-1})$ and
  $e_{r} \in \delta(S^{(i)}_{r})$ that define a face $F'= \{x\in F_{t}: x_{e_{p-1}}
  = 1, x_{e_r} = 1\}$ such that there are two matchings $M_1$ and $M_2$ in $F'$
  that differ inside $U$.  Notice that $F' \subseteq F = \{ x \in F_{t-1} : x_{e_{p-1}} = 1, x_{e_{r}} = 1\}$. Therefore $M_1$ and $M_2$ are also two matchings in $F$ that differ inside $U$.

  We complete the proof of the claim by showing that
  \begin{align}
    \label{condt}
    w_t(M_1 \cap E(U)) = w_t(M_2 \cap E(U))\,.
  \end{align}
  Define
  \begin{align*}
    M_{12} &=(M_1 \setminus E(U)) \cup (M_2 \cap E(U))
  \end{align*}
 to be the perfect matching that agrees with $M_1$ on
 all edges not in $E(U)$ and agrees with $M_2$ on all edges in $E(U)$ (see the central part of~\cref{fig:claims} for an example).
 By the same argument as in the proof of \cref{contractability_downward_closed},
 $M_{12}$ is a perfect matching in $F'$.
 It differs from $M_1$ inside $U$ and agrees with $M_1$ outside $U$.

  We now use that $M_1$ and $M_{12}$ are perfect matchings in $F'$ to prove~\eqref{condt}. As $F_t = \facemin{F_{t-1}}{w_t}$ is
  the convex-hull of matchings in $F_{t-1}$ that
  minimize the objective function $w_t$, all matchings $M$
  in $F_t$ and in its subface $F'$ have the same weight $w_t(M)$. 
  In particular, 
  \begin{align*}
w_t( M_1 \setminus E(U)) + w_t (M_1 \cap E(U)) = w_t(M_1) = w_t(M_{12}) = w_t( M_1 \setminus E(U)) + w_t (M_2 \cap E(U))
  \end{align*}
  and thus $w_t (M_1 \cap E(U)) = w_t (M_2 \cap E(U))$ as required.
\end{proof}

The above claim says that it is sufficient to write down a non-equality for
each choice of $U_{p,r}^{(i)}$, $e_{p-1}$, $e_r$, $M_1$, and $M_2$. It is easy to  upper-bound the number of
ways of choosing $i$, $p$, $r$, $e_{p-1}$, and $e_r$. The following
claim bounds the number of ways of choosing $M_1$ and $M_2$. Its proof is
based on a divide-and-conquer strategy (see the right part of~\cref{fig:claims}). It uses the inductive assumption that
$U^{(i)}_{p,r}$ is
$F_{t-1}$-contractible for all chains $i$ and $1\leq p \leq r \leq \ell_i$ with
$r-p\leq 2^{t-2}-1$. 
\begin{claim}
  \label{matching-count}
  Let $U= U^{(i)}_{p,r}$ with $r-p \leq 2^{t-1}-1$ and define $q = \lfloor
  (p+r)/2 \rfloor$.  For any two edges $e_p \in \delta(S^{(i)}_{p-1})$ and $e_r \in
  \delta(S^{(i)}_{r})$ defining a face $F = \{x\in F_{t-1}: x_{e_{p-1}} = 1,
  x_{e_r} = 1\}$ we have
  \begin{align*}
    |\{M \cap E(U) : M \mbox{ is a matching in $F$}\}| \leq  |\delta(S^{(i)}_{q}) \cap E(F)| \leq n^2\,.
  \end{align*}
\end{claim}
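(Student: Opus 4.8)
The plan is to prove \cref{matching-count} by a divide-and-conquer argument that splits $U = U^{(i)}_{p,r}$ at the midpoint $q = \lfloor (p+r)/2 \rfloor$ into the two sublayers $U^{(i)}_{p,q}$ and $U^{(i)}_{q+1,r}$, and to show that once the (unique) edge $e_q$ of a matching $M$ crossing $S^{(i)}_q$ is fixed, the entire restriction $M \cap E(U)$ is determined. If $p = r$ the claim is immediate: then $U = U^{(i)}_{p,p}$ is a single layer, which is $F_{t-1}$-contractible by the inductive assumption, so $M \cap E(U)$ is determined by the fixed pair $(e_{p-1}, e_r)$ and hence $|\{M \cap E(U)\}| \le 1 \le n^2$. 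So I would assume $p < r$, in which case $p \le q \le r-1$ and both sublayers are nonempty. The first step is the arithmetic: from $r - p \le 2^{t-1}-1$ one checks $q - p \le 2^{t-2}-1$ and $r - q - 1 \le 2^{t-2}-1$, so that by the inductive assumption (property for index $t-1$, applied to the chains $\cL_i$) both $U^{(i)}_{p,q}$ and $U^{(i)}_{q+1,r}$ are $F_{t-1}$-contractible.

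Next I would fix a matching $M$ in $F$. Since $S^{(i)}_q \in \Lin \subseteq \tight(\Fin)$ is tight for $F_{t-1} \supseteq F$, there is exactly one edge $e_q = e_q(M) \in M \cap \delta(S^{(i)}_q)$, and $e_q \in \suppo(F) = E(F)$ because $\one_M \in F$. The heart of the argument is that $M \cap E(U)$ is a function of $e_q$ alone, with $e_{p-1}$ and $e_r$ fixed. This follows by decomposing $E(U)$: writing $U$ as the disjoint vertex-set union $U^{(i)}_{p,q} \sqcup U^{(i)}_{q+1,r}$ (using $U^{(i)}_{p,q} = S^{(i)}_q \setminus S^{(i)}_{p-1} \subseteq S^{(i)}_q$ and $U^{(i)}_{q+1,r} = S^{(i)}_r \setminus S^{(i)}_q$ disjoint from $S^{(i)}_q$), every edge of $E(U)$ either lies inside $U^{(i)}_{p,q}$, lies inside $U^{(i)}_{q+1,r}$, or has exactly one endpoint in each; edges of the last kind are precisely the edges of $E(U)$ crossing $\delta(S^{(i)}_q)$, so $M$ contains at most one of them, namely $e_q$, and whether $e_q \in E(U)$ is itself a property of $e_q$. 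Now $M \cap E(U^{(i)}_{p,q})$ is determined by the pair $(e_{p-1}, e_q)$ by $F_{t-1}$-contractibility of $U^{(i)}_{p,q}$ (here $M$ is a matching in $F_{t-1} \supseteq F$ containing $e_{p-1} \in \delta(S^{(i)}_{p-1})$ and $e_q \in \delta(S^{(i)}_q)$), and $M \cap E(U^{(i)}_{q+1,r})$ is determined by $(e_q, e_r)$ by $F_{t-1}$-contractibility of $U^{(i)}_{q+1,r}$; when $p = 1$ one reads $U^{(i)}_{1,q} = S^{(i)}_q$ and drops $e_{p-1}$, in keeping with the footnote convention.

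Putting the pieces together, with $e_{p-1}$ and $e_r$ fixed the map $M \mapsto M \cap E(U)$ factors through $M \mapsto e_q(M) \in \delta(S^{(i)}_q) \cap E(F)$, so
\begin{align*}
  |\{M \cap E(U) : M \text{ a matching in } F\}| \le |\{e_q(M) : M \text{ a matching in } F\}| \le |\delta(S^{(i)}_q) \cap E(F)| \le |\delta(S^{(i)}_q)| \le n^2,
\end{align*}
which is exactly \cref{matching-count}. The step I expect to be the main obstacle is the bookkeeping in the decomposition of $E(U)$: checking cleanly that the single edge $e_q$ plays simultaneously the role of the right boundary edge of $U^{(i)}_{p,q}$ and the left boundary edge of $U^{(i)}_{q+1,r}$, and handling the degenerate cases (when $p=1$, when $p=r$, when a sublayer is empty, or when $e_{p-1}$, $e_q$, $e_r$ are not all distinct). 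By contrast, verifying the width bounds on the two halves and invoking the inductive assumption and the tightness of $S^{(i)}_q$ are routine.
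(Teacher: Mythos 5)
Your proposal is correct and follows essentially the same route as the paper's proof: tightness of $S^{(i)}_q$ forces a unique crossing edge $e_q \in \delta(S^{(i)}_q) \cap E(F)$, and the inductive $F_{t-1}$-contractibility of the two halves $U^{(i)}_{p,q}$ and $U^{(i)}_{q+1,r}$ (whose widths are at most $2^{t-2}-1$) shows that $M \cap E(U)$ is determined by $e_q$ once $e_{p-1}$ and $e_r$ are fixed. The paper phrases this as a proof by contradiction (two matchings in $F_{e_q}$ differing inside $U$ must differ in one of the two sublayers), whereas you phrase it as the map $M \mapsto M \cap E(U)$ factoring through $e_q(M)$, but these are the same argument.
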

We remark that the first inequality holds with equality, but we only need the inequality.
\begin{proof}
  The second inequality in the statement is trivial. We prove the
  first. 
  
  As we have $S^{(i)}_q \in \Lin \subseteq \tight(\Fin)$
  and $F \subseteq F_{t-1} \subseteq \Fin$,
  the set $S^{(i)}_q$ is tight for $F$.
  Thus any matching
  $M$ in $F$ must satisfy $M\cap \delta(S^{(i)}_q) = \{e_q\}$ for some edge
  $e_q \in \delta(S^{(i)}_{q}) \cap E(F)$. 
  
  We prove the statement by showing that for every choice of $e_q$, any
  matching $M$ in the face $F_{e_q} = \{x \in F: x_{e_q} = 1\}$ matches the nodes in
  $U^{(i)}_{p,r}$ in a unique way.  In other words, we show that $|\{M \cap E(U) : M\mbox{ is a matching in $F_{e_q}$}\}| \leq 1$ for every $e_q\in \delta(S^{(i)}_q) \cap E(F)$, which implies
  \begin{align*}
    |\{M \cap E(U) : M \mbox{ is a matching in $F$}\}| &\leq  \sum_{e_q \in \delta(S^{(i)}_q) \cap E(F)}|\{M \cap E(U) : M \mbox{ is a matching in $F_{e_q}$}\}|  \\
    &\leq |\delta(S^{(i)}_{q}) \cap E(F)|\,.
  \end{align*}

  To prove that $|\{M \cap E(U)  : M\mbox{ is a matching in $F_{e_q}$}\}| \leq
  1$, suppose the contrary, i.e., that $|\{M \cap E(U) : M\mbox{ is a matching
    in $F_{e_q}$}\}| \geq 2$. Take two such matchings $M_1$ and $M_2$ that
    differ inside $U$.  By the definition of $F_{e_q}$ we have $M_1 \cap \delta(S^{(i)}_q)
    = M_2 \cap \delta(S^{(i)}_q) = \{e_q\}$ and so $M_1$ and $M_2$ must differ
    inside $U^{(i)}_{p, q}$ or inside $U^{(i)}_{q+1, r}$; assume the former (the argument for the other case is the same).
    Notice that $M_1$ and $M_2$ are two matchings in $F_{e_q} \subseteq F_{t-1}$ which both contain $e_{p-1}$ and $e_q$ but differ inside $U^{(i)}_{p,q}$, which contradicts that $U^{(i)}_{p,q}$ is $F_{t-1}$-contractible.  (Note that $q-p\leq (r-p)/2 \leq 2^{t-2} - 1/2$, which
    implies that $q-p \leq 2^{t-2} -1$.)
\end{proof}

We now have all the needed tools to show the existence of a weight function
$w_t \in \cWf$ such that the face $F_t = \facemin{F_{t-1}}{w_t}$
satisfies~\eqref{eqt}, i.e., that for all chains $i$ and $1\leq p \leq r \leq
\ell_i$ with $r-p\leq 2^{t-1}-1$, $U^{(i)}_{p,r}$ is $F_t$-contractible.
By~\cref{match-same-weight}, this holds if  for any $U=U^{(i)}_{p,r}$ with $r-p
\leq 2^{t-1} - 1$ and for any $e_{p-1} \in \delta(S^{(i)}_{p-1})$ and $e_r \in
\delta(S^{(i)}_{r})$ defining a face $F= \{x\in F_{t-1}: x_{e_{p-1}} = 1,
x_{e_r} = 1\}$ we have the following:
  \begin{align*}
    w_t(M_1\cap E(U)) - w_t(M_2\cap E(U))\neq 0 \quad \mbox{for any two matchings $M_1,M_2$ in $F$ that differ inside $U$.} 
  \end{align*}

There are at most $n$ ways of choosing $i$,  $n$ ways of choosing $p$, $n$ ways of choosing $r$, 
$n^2$ ways of choosing $e_{p-1}$, $n^2$ ways of choosing
$e_r$, and by Claim~\ref{matching-count}  there are at most $n^4$ ways of
choosing $M_1$ and $M_2$. In total, we can write the sufficient
conditions on the weight function $w_t$ as a system of at most $n^{11}$ linear
non-equalities with coefficients  in $\{-1,0,1\}$. It follows  by~\cref{lem23}
that there is a weight function $w_t \in \cW(n^{14}) \subseteq \cW(n^{20}) = \cW$
satisfying these conditions. This completes the
selection of $w_t$  and the proof of~\cref{making-2l-contractible}.

\subsection{A maximal laminar family completes the proof}
\label{ocompletion}

In \cref{good-weight-fn} we have demonstrated the existence of a weight function $\wout$ that
defines a face $\Fout$ with  properties $(i)'$ and $(ii)'$. We now show that
extending $\Lin$ to a maximal laminar family $\Lout$ of $\tight(\Fout)$ yields
a $2\lambda$-good face-laminar pair. Such an extension is possible because $\Lin$ consists of sets that are tight for $\Fin$ and thus also for $\Fout$.
As explained in the beginning of~\cref{proof_of_main}, this will complete the proof of~\cref{main}.

Why a \emph{maximal} laminar family?
Part of our argument so far was about removing certain alternating circuits $C$.
In other words, we have made $C$ not respect the new face $\Fout$.
This means either not having some edge from $\supp(C)$ in the support $\suppo(\Fout)$ of $\Fout$,
or introducing a new odd-set $S$ which is tight for $\Fout$ and such that $\ab{\pmind{C}, \one_{\delta(S)}} \ne 0$.
In the latter case, we want to have an odd-set with this property also in the new laminar family $\Lout$,
so that the removal of $C$ is reflected in the new contraction (which is based on $\Lout$).
\cref{respect_L_respect_F} guarantees that this will happen
if we choose $\Lout$ to be a {maximal} laminar subset of $\tight(\Fout)$.

\begin{lemma} \label{adding_new_laminar_sets}
  Let $(\Fin, \Lin)$ be a $\lambda$-good face-laminar pair
  and $\Fout \subseteq \Fin$ be the face guaranteed by \cref{good-weight-fn}.
  Then $(\Fout, \Lout)$ is a $2 \lambda$-good face-laminar pair,
  where $\Lout$ is any maximal laminar family with $\Lin \subseteq \Lout \subseteq \tight(\Fout)$.
\end{lemma}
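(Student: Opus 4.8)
The plan is to verify the three requirements of $2\lambda$-goodness (\cref{def:face-laminar}) for the pair $(\Fout,\Lout)$. That $\Lout$ is a maximal laminar subfamily of $\tight(\Fout)$ holds by its choice, and the extension exists because every set of $\Lin$ is tight for $\Fin$, hence for the subface $\Fout$. It remains to establish \textbf{(i)} that every $S\in\Lout$ with $|S|\le 2\lambda$ is $\Fout$-contractible, and \textbf{(ii)} that the $(\Fout,\Lout,2\lambda)$-contraction $H'$ of $G$ has no alternating circuit of node-weight at most $2\lambda$. Throughout, let $H$ denote the $(\Fout,\Lin,2\lambda)$-contraction of $G$; by \cref{good-weight-fn}, $\Fout$ satisfies $(i)'$ and $(ii)'$, so $H$ carries no $\Fout$-respecting alternating circuit of node-weight at most $2\lambda$, and its vertices are exactly the maximal sets of size at most $2\lambda$ in $\Lin$, which partition $V$.

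\emph{Condition (i).} Fix $S\in\Lout$ with $|S|\le 2\lambda$. If $S\in\Lin$, then $S$ is $\Fout$-contractible by $(i)'$. Otherwise, by laminarity of $\Lout$ and since the vertices of $H$ partition $V$, either $S$ is properly contained in some vertex $R$ of $H$, or $S$ is a union of vertices of $H$. In the first case $R\in\Lin$ with $|R|\le 2\lambda$, so $R$ is $\Fout$-contractible by $(i)'$, and then $S$ is $\Fout$-contractible by \cref{contractability_downward_closed}. In the second case, suppose not; then some $e\in\delta(S)$ lies in two perfect matchings $M_1,M_2\in\Fout$ that differ inside $S$. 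Every vertex $S'$ of $H$ inside $S$ is $\Fout$-contractible (by $(i)'$, or by $\lambda$-goodness together with \cref{sub-face-contractable}), so if $M_1$ and $M_2$ agreed on $\bigcup_{S'\subseteq S}\delta(S')$ they would agree on all of $E(S)$; hence they differ on some $\delta(S')$. Since $S$ is tight and $e$ is the unique edge of $M_1\cap\delta(S)=M_2\cap\delta(S)$, the set $(M_1\cap E(S))\triangle(M_2\cap E(S))$ is a nonempty disjoint union of $M_1/M_2$-alternating cycles (no paths, as every vertex of $S$ other than the $S$-endpoint of $e$ is matched inside $S$ by both $M_i$); pick one, $C\subseteq E(S)$. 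Contracting the vertices of $H$ turns $C$ into an alternating circuit $\bar C$ in $H$: its edges lie in $\suppo(\Fout)$ and, as $C\subseteq E(S)$ with $|S|\le 2\lambda$, cross no boundary of a set of $\Lin$ of size $>2\lambda$, so they are edges of $H$; each excursion of $C$ into a vertex $S'$ of $H$ enters and leaves along the unique edges of $M_1\cap\delta(S')$ and of $M_2\cap\delta(S')$, which belong to different matchings, so the excursion has even internal length, whence $\bar C$ has even length and $\pmind{\bar C}$ agrees with $\pmind C$ on the surviving edges; $\pmind{\bar C}\ne 0$ (otherwise $C$ would lie inside a single, hence $\Fout$-contractible, vertex of $H$, contradicting that $M_1,M_2$ differ there); and $\bar C$ has node-weight at most $|S|\le 2\lambda$. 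Finally $\bar C$ respects $\Fout$: for $T\in\tight(\Fout)$ that is a union of vertices of $H$, $\ab{\pmind{\bar C},\one_{\delta(T)}}=|M_1\cap C\cap\delta(T)|-|M_2\cap C\cap\delta(T)|$, and this vanishes because $|C\cap\delta(T)|$ is even (as $C$ is a cycle) while $C\cap\delta(T)\subseteq(M_1\cup M_2)\cap\delta(T)$ has size at most two, with the two edges, if present, lying in different matchings. This contradicts $(ii)'$, so condition (i) holds.

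\emph{Condition (ii).} By condition (i), every vertex of $H'$ (a maximal set of size at most $2\lambda$ in $\Lout$) is $\Fout$-contractible, and since each such vertex is a union of vertices of $H$, the contraction $H$ refines $H'$. Suppose $D$ is an alternating circuit of node-weight at most $2\lambda$ in $H'$. The vector $\pmind D\in\bZ^{E(H')}$ satisfies $\ab{\pmind D,\one_{\delta(R)}}=0$ for every vertex $R$ of $H'$: a contraction has no self-loops, so every visit of $D$ to $R$ contributes an entering and a leaving edge of opposite sign. Applying the lifting construction from the proof of \cref{killing_cycle} — legitimate because the vertices of $H'$ are $\Fout$-contractible — I would produce from $D$ an alternating circuit $\bar D$ in the finer contraction $H$, of node-weight at most $2\lambda$, with the property that $\bar D$ respects $\Fout$ whenever $D$ does. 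It remains to see that $D$ does respect $\Fout$: by \cref{respect_L_respect_F} applied to the maximal laminar family $\Lout$, it suffices to check $\ab{\pmind D,\one_{\delta(L)}}=0$ for all $L\in\Lout$, which holds when $|L|>2\lambda$ (then $\delta(L)$ is disjoint from $E(H')$), when $L$ is a vertex of $H'$ (the self-loop-free argument above), and when $L$ is properly contained in a vertex of $H'$ (there the $\Fout$-contractibility of $L$ forces the cancellation, exactly as in the verification of \cref{claim:y_respects} inside the proof of \cref{killing_cycle}). Hence $\bar D$ is an $\Fout$-respecting alternating circuit of $H$ of node-weight at most $2\lambda$, contradicting $(ii)'$. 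Therefore no such $D$ exists, $(\Fout,\Lout)$ is $2\lambda$-good, and, as explained at the beginning of \cref{proof_of_main}, this completes the proof of \cref{main}.

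I expect the main obstacle to be the lifting step in condition (ii): faithfully transplanting the bookkeeping behind \cref{killing_cycle} to the level of contractions, so that the lifted object is genuinely an alternating circuit of $H$ of the same (small) node-weight, with parities and supports tracked carefully enough to control whether it respects the face. The conceptual heart — and the reason the construction throughout \cref{proof_of_main} aims at a \emph{maximal} laminar family — is that, via \cref{respect_L_respect_F}, maximality of $\Lout$ is exactly what makes an alternating circuit that was only forced to be non-respecting over $\Lin$ disappear entirely once we pass to the contraction taken over $\Lout$.
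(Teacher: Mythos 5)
Your treatment of condition (i) is sound and follows the same route as the paper: reduce to the case where $S$ is a union of vertices of the $(\Fout,\Lin,2\lambda)$-contraction, and convert a hypothetical pair of matchings differing inside $S$ into an $\Fout$-respecting alternating circuit of node-weight at most $2\lambda$, contradicting $(ii)'$. One small repair: choose the cycle $C$ so that it crosses some $\delta(S')$ (you already showed the symmetric difference meets some such boundary), rather than arguing afterwards that $\pmind{\bar C}\ne 0$. If the cycle you happened to pick lay entirely inside a single vertex $S'$, contractibility of $S'$ would not immediately yield a contradiction, because $M_1$ and $M_2$ need not share their boundary edge of $S'$.

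The genuine gap is in condition (ii), and it sits exactly where you predicted: the lifting of $D$ from the $(\Fout,\Lout,2\lambda)$-contraction to the $(\Fout,\Lin,2\lambda)$-contraction is asserted, not constructed, and \cref{killing_cycle} cannot be invoked for it. That lemma turns a vector on the edges of a contraction into a \emph{vector} on $E$ by adding indicator vectors of matchings inside the contracted sets; but $(ii)'$ quantifies over alternating \emph{circuits}, i.e., cyclic walks, so you must exhibit a genuine closed walk in the finer contraction. The paper's construction inserts, between each pair of consecutive edges $e_1,e_2$ of $D$ meeting at a vertex $S$ of the coarser contraction, a simple even-length path from $S_{e_1}$ to $S_{e_2}$ through the $\Lin$-vertices inside $S$, extracted from the symmetric difference of the near-perfect matchings induced by two matchings in $\Fout$ containing $e_1$ and $e_2$ respectively. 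Three verifications are then needed which your sketch leaves open: the inserted paths have even length (so the lift is alternating and its indicator vector contains $\pmind{D}$, hence is nonzero); they are simple (so the node-weight does not exceed that of $D$, hence stays at most $2\lambda$); and the lifted circuit respects $\Fout$. The last is not a formal consequence of ``$D$ respects $\Fout$'': respect in the finer contraction also demands $\ab{\pmind{\cdot},\one_{\delta(T)}}=0$ for tight sets $T$ that are unions of $\Lin$-vertices but sit strictly inside a single vertex of the coarser contraction, and this must be checked directly from the structure of the inserted paths (each path together with $e_1,e_2$ is part of a symmetric difference of two matchings in $\Fout$, so it crosses such a tight $T$ once with each matching, and the signs cancel). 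Finally, \cref{respect_L_respect_F} is stated for vectors in $\bZ^E$; applying it to a vector on the edges of a contraction requires a contracted analogue, which in turn requires checking that the image of $\Lout$ is still a maximal laminar family of tight sets for the contracted face.
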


\begin{proof}[Proof of \cref{adding_new_laminar_sets}]

  Recall that \cref{good-weight-fn} guarantees that:
  \begin{itemize}
    \item[(i)'] each $S \in {\Lin}$ with $|S| \le 2 \lambda$ is $\Fout$-contractible,
    \item[(ii)'] in the $(\Fout, {\Lin}, 2 \lambda)$-contraction of $G$, there is no alternating circuit of node-weight at most $2 \lambda$ {which respects $\Fout$}.
  \end{itemize}
  
  We want to show that the pair $(\Fout, \Lout)$ satisfies \cref{def:face-laminar}, that is,
  \begin{enumerate}
	\item[(i)] each $S \in {\Lout}$ with $|S| \le 2 \lambda$ is $\Fout$-contractible,
	\item[(ii)] in the $(\Fout, {\Lout}, 2 \lambda)$-contraction of $G$, there is no alternating circuit of node-weight at most $2 \lambda$.
  \end{enumerate}
  
  \paragraph{Property (i).} 
    Fix a set $S \in \Lout$ with $|S| \le 2 \lambda$.
    Let $S_1, ..., S_k$ be all maximal subsets of $S$ in $\Lin$ (we have $S = S_1 \cup ... \cup S_k$).
    If $S$ is contained in a set from $\Lin$ of size at most $2 \lambda$, then that set is $\Fout$-contractible by $(i)'$, and thus $S$ is $\Fout$-contractible by~\cref{contractability_downward_closed}.
    So assume that is not the case; therefore, by laminarity, each $S_i$ is a maximal set of size at most $2 \lambda$ in $\Lin$, that is, a vertex of the $(\Fout, \Lin, 2 \lambda)$-contraction of $G$.
    By $(i)'$, each $S_i$ is $\Fout$-contractible.
    
    Now the proof proceeds as in \cref{F1}.
    We present it for completeness.
    Let $M_1$ and $M_2$ be two perfect matchings in $\Fout$
    which both contain an edge $e \in \delta(S)$.
    We want to show that $M_1$ and $M_2$ are equal inside $S$.
    Because $S$ and $S_1, ..., S_k$ are tight for $\Fout$,
    any perfect matching (on $G$) in $\Fout$ containing $e$ induces an almost-perfect matching on $S_1, ..., S_k$,
    that is, one where only the set $S_i$ containing the $S$-endpoint of $e$ is unmatched.

    If the matchings induced by $M_1$ and $M_2$ were different, then their symmetric difference would contain an alternating simple cycle
    in the $(\Fout, \Lin, 2 \lambda)$-contraction.
    Since this cycle arises from two matchings in $\Fout$, it respects $\Fout$.
    Moreover, since it is a simple cycle inside $S$, its node-weight is at most $|S_1|  + \ldots + |S_k| = |S| \le 2 \lambda$.
    This would contradict our assumption~(ii)'.
    
    Therefore the induced matchings must be equal.
    Moreover, the sets $S_1, ..., S_k$ are $\Fout$-contractible, which means that, given the boundary edges
    (i.e., the induced matching plus $e$), there is a unique perfect matching in $\Fout$ inside each $S_i$.
    It follows that $M_1$ and $M_2$ are equal inside $S$.

  \newcommand{\Hout}{\ensuremath{H_{\textsf{\tiny out}}}}
  \newcommand{\Hin}{\ensuremath{H_{\textsf{\tiny in}}}}
  \newcommand{\Cout}{\ensuremath{C_{\textsf{\tiny out}}}}
  \newcommand{\Cin}{\ensuremath{C_{\textsf{\tiny in}}}}

  \paragraph{Property (ii).}
   Let $\Hout$ be the $(\Fout, \Lout, 2 \lambda)$-contraction of $G$
   and let $\Hin$ be the $(\Fout, \Lin, 2 \lambda)$-contraction of $G$.
   Thus $\Hout$ can also be obtained by further contracting $\Hin$, as well as removing all edges that are in the boundaries of sets $S\in \Lout \setminus \Lin$ with $|S| > 2 \lambda$. 
   This will be our perspective.  Suppose towards a contradiction that there is
   an alternating circuit $\Cout$ in $\Hout$ of node-weight at most
   $2\lambda$.

  To obtain a contradiction, we are going to lift $\Cout$ back to an $\Fout$-respecting alternating
  circuit $\Cin$ in $\Hin$, which should not exist by $(ii)'$. (This is in the same spirit as the proof of \cref{killing_cycle}.)
  Namely, whenever $\Cout$ visits a vertex $S \in V(\Hout)$, we connect up the dangling endpoints of this visit inside $S$ to obtain a walk in $\Hin$.
  More precisely, let $e_1$ and $e_2$ be two consecutive edges of $\Cout$, whose common endpoint in $\Hout$ is $S$. Between them, we insert a simple path $P_{e_1e_2}$ inside the image of $S$ in $\Hin$, which is constructed as follows.
  
  Since $e_1, e_2 \in \supp(\Cout) \subseteq \suppo(\Fout)$, there exist matchings $M_1$ and $M_2$ (on $G$) in $\Fout$ containing $e_1$ and $e_2$, respectively.
  Let $S_1, ..., S_k$ be all maximal subsets of $S$ in $\Lin$ ($S_i$ are vertices of $\Hin$ and we have $S = S_1 \cup ... \cup S_k$).
  Denote by $S_{e_1}$ and $S_{e_2}$ the sets $S_i$ which contain the $S$-endpoint of $e_1$ and $e_2$, respectively.
  The sets $S$ and $S_1, ..., S_k$ are tight for $\Fout$, so $M_1$ induces a perfect matching on $\{S_1, ..., S_k\} \setminus \{S_{e_1}\}$ (and similarly for $M_2$ and $e_2$).
  The symmetric difference of these two induced matchings contains a simple path $P_{e_1e_2}$ from $S_{e_1}$ to $S_{e_2}$ in $\Hin$ which has even length (possibly $0$). For an example, see~\cref{fig:propertyii}.
  We obtain $\Cin$ by inserting such a path $P_{e_1e_2}$ between each two consecutive edges $e_1, e_2$ in $\Cout$.
  
\begin{figure}[t]
  \begin{center}
    \begin{tikzpicture}[scale=0.9]
    \tikzstyle{vertex}=[circle, fill=black, minimum size=2,inner sep=1pt]
      \draw[fill=gray!10!white] (0,0) ellipse (4cm and 2cm)node[above = 1.9cm] {\small $S\in V(\Hout)$};
     \begin{scope}[yshift=0.5cm]
       \draw[fill = none,dashed] (0, 0) ellipse (2.2cm and 1cm)node[above = 0.9cm] {\scriptsize $T$};
       \draw (-3,0) edge node[above left] {\scriptsize $e_1$} (-4.5, 0.4);
       \draw (-3, 0) edge[decorate,decoration={snake,amplitude=.3mm,segment length=4pt,post length=0mm}, ultra thick] (-1.5,0);
       \draw (-1.5, 0) edge[ultra thick] (0,0);
       \draw (0, 0) edge[decorate,decoration={snake,amplitude=.3mm,segment length=4pt,post length=0mm},ultra thick] (1.5,0);
       \draw (1.5, 0) edge[ultra thick] (3,0);
       \draw (3,0) edge[decorate,decoration={snake,amplitude=.3mm,segment length=3pt,post length=0mm}] node[above right] {\scriptsize $e_2$} (4.5, 0.4);
       \draw[fill = gray!50!white, rotate around={115:(-3,0)}] (-3, 0) ellipse (0.3cm and 0.45cm);
       \draw[fill = gray!50!white, rotate around={115:(-1.5,0)}]  (-1.5, 0) ellipse (0.2cm and 0.35cm); 
       \draw[fill = gray!50!white, rotate around={0:(0,0)}]  (0, 0) ellipse (0.4cm and 0.25cm);
       \draw[fill = gray!50!white, rotate around={45:(1.5,0)}]  (1.5, 0) ellipse (0.32cm and 0.3cm);
       \draw[fill = gray!50!white, rotate around={45:(3,0)}]  (3, 0) ellipse (0.2cm and 0.35cm); 
       \node at (-3, -0.5) {\scriptsize $S_1 = S_{e_1}$};
       \node at (-1.5, -0.5) {\scriptsize $S_2$};
       \node at (0, -0.5) {\scriptsize $S_3$};
       \node at (1.5, -0.5) {\scriptsize $S_4$};
       \node at (3, -0.5) {\scriptsize $S_5 = S_{e_2}$};
    \end{scope}
    \begin{scope}[xshift=-0.5cm, yshift=-1.2cm]
       \draw (0, 0) edge[decorate,decoration={snake,amplitude=.3mm,segment length=3pt,post length=0mm},bend right=15] (1.5,0);
       \draw (0, 0) edge[bend left=15] (1.5,0);
       \draw[fill = gray!50!white, rotate around={0:(0,0)}]  (0, 0) ellipse (0.4cm and 0.35cm);
       \draw[fill = gray!50!white, rotate around={45:(1.5,0)}]  (1.5, 0) ellipse (0.42cm and 0.25cm);
       \node at (0, -0.55) {\scriptsize $S_6$};
       \node at (1.5, -0.55) {\scriptsize $S_7$};
     \end{scope}
    \end{tikzpicture}
  \end{center}
  \caption{The construction of the path $P_{e_1e_2}$ inside a set $S \in V(\Hout)$ in the proof of Property~(ii). The dark-gray sets correspond to vertices of $\Hin$ that are subsets of $S$. The straight and swirly edges depict matchings $M_1$ and $M_2$, respectively. The path $P_{e_1e_2}$ is depicted by fat edges.   }
  \label{fig:propertyii}
\end{figure}
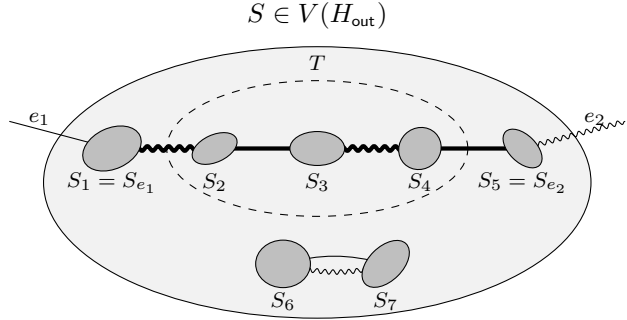

  To obtain a contradiction, we need to prove that $\Cin$ is an alternating circuit of node-weight at most $2 \lambda$ which respects $\Fout$.
  \begin{itemize}
  \item That $\Cin$ is an alternating circuit follows by construction because each path $P_{e_1e_2}$ is of even length and the alternating indicator vector of $\Cin$ is nonzero since it contains the alternating indicator vector of $\Cout$ (via the natural mapping of edges). 
  \item For the node-weight, note that in $\Cout$, the visit to $S$ (on the edge $e_1$) incurs a node-weight increase of $|S|$, whereas in $\Cin$, the visit to a certain subset of $\{S_1, ..., S_k\}$ (on $e_1$ and $P_{e_1e_2}$) incurs an increase of at most $|S_1| + \ldots + |S_k| = |S|$ because $P_{e_1e_2}$ is a simple path.
  Therefore the node-weight of $\Cin$ is at most that of $\Cout$ -- at most $2 \lambda$.
  
  \item To see that $\Cin$ respects $\Fout$, we use the assumption that $\Lout$ is a maximal laminar subset of $\tight(\Fout)$.
  This is in similar spirit as the proof of~\cref{claim:y_respects}.
  To work around the fact that we are dealing with the contraction,
  we use the following version of~\cref{respect_L_respect_F}:

  \begin{claim} \label{respect_L_respect_F_in_contraction}
	Consider a vector $z \in \bZ^{E(\Hin)}$.
	If for each $T \in \Lout$
	which is a union of sets in $V(\Hin)$
	we have
	$\ab{z, \one_{\delta(T)}} = 0$,
	then the same holds for each $T \in \tight(\Fout)$
	which is a union of sets in $V(\Hin)$.
  \end{claim}
  \vspace{-0.6em}
  
  Let us defer the proof of~\cref{respect_L_respect_F_in_contraction}
  and first use it to show that $\Cin$ respects $\Fout$.
  We verify the conditions of~\cref{def:respect_contraction}.
  First, note that $\supp(\Cin) \subseteq \suppo(\Fout)$ by construction.
  Second, let $T \in \tight(\Fout)$ be a union of vertices of $\Hin$;
  we need to show  that $\ab{\pmind{\Cin}, \one_{\delta(T)}} = 0$.
  By~\cref{respect_L_respect_F_in_contraction},
  we may assume that $T \in \Lout$.
  We consider two cases:
  \begin{itemize}
  \item If $|T| > 2 \lambda$, then all boundary edges of $T$ are absent from $\Hout$ (see \cref{def:contraction}), so $\supp(\Cout) \cap \delta(T) = \emptyset$. In this case $T$ is a union of vertices of $\Hout$ and so no path $P_{e_1e_2}$ contains any edges from $\delta(T)$ either. Hence $\supp(\Cin) \cap \delta(T) = \emptyset$.\footnote{Here we take advantage of the fact that the boundaries of large sets are erased in the definition of the contraction. If they were not erased, we would be unable to proceed, as there would be no reason for $\Cout$ (and thus $\Cin$) to respect $\Fout$.}
  
  \item If $|T| \le 2 \lambda$, then $T$ must be contained in a single set $S \in V(\Hout)$ (as depicted in~\cref{fig:propertyii}). This is because $T \in \Lout$ and the sets $S \in V(\Hout)$ are maximal sets $S \in \Lout$ with $|S| \le 2 \lambda$.
  For every path $P_{e_1e_2}$ inside $S$, the path $e_1, P_{e_1e_2}, e_2$ is a path from outside of $S$ to outside of $S$ which is part of the symmetric difference of two matchings in $\Fout$.
  If this path enters $T$, it must also leave $T$.
  Suppose it entered $T$ on an edge of the first matching. Then it must exit $T$ on an edge of the second matching,
  since $T \in \Lout \subseteq \tight(\Fout)$ is tight for $\Fout$ and both matchings are in $\Fout$,
  and so the corresponding $\pm 1$ terms cancel out.
  Abusing notation, we have $\ab{\pmind{e_1, P_{e_1e_2}, e_2}, \one_{\delta(T)}} = 0$.
  Since this holds for every path $P_{e_1e_2}$ inside $S$, we get $\ab{\pmind{\Cin}, \one_{\delta(T)}} = 0$ as required.
  \end{itemize}
  \end{itemize}
  
  Thus
  $\Cin$ is an alternating circuit of node-weight at most $2 \lambda$ which respects $\Fout$.
  Its
  existence contradicts $(ii)'$.
  We conclude with the proof of~\cref{respect_L_respect_F_in_contraction}:
  
  \begin{proof}[Proof of~\cref{respect_L_respect_F_in_contraction}]
  
  We wish to reduce our setting to that of~\cref{respect_L_respect_F}.
  Define $\Hin'$ to be the graph obtained from $(V, \suppo(\Fout))$ by contracting all maximal sets $S \in \Lin$ with $|S| \le 2 \lambda$,
  but {not} erasing the boundaries of sets $S \in \Lin$ with $|S| > 2 \lambda$.
  (If we did erase them, we would obtain $\Hin$; instead, 
  we have $V(\Hin) = V(\Hin')$ and $E(\Hin) \subseteq E(\Hin')$.)
  
  Next, we define $\Fout'$ to be the image of $\Fout$ in $\Hin'$.
  More precisely,
  since the contracted sets $S \in \Lin$ with $|S| \le 2 \lambda$
  are tight for $\Fout$,
  each perfect matching on $G$ in $\Fout$ induces a perfect matching on $\Hin'$.
  We let $\Fout'$ be the convex hull of the indicator vectors of these induced matchings.
  Note that there is a one-to-one correspondence
  between subsets of $V(\Hin)$ that are tight for $\Fout'$
  and
  subsets of $V$ that
  are tight for $\Fout$
  and
  are unions of sets in $V(\Hin)$.
  
  Finally, $\Lout$ also naturally maps to a laminar family $\Lout'$ of subsets of vertices of $\Hin'$
  since $\Lin \subseteq \Lout$. Specifically, there is a set in $\Lout'$ corresponding to each set in $\Lout$ that is a union of sets in $V(\Hin')$.
  Note that each set in $\Lout'$ is still tight for $\Fout'$,
  and that $\Lout'$ is still a {maximal} subset of $\tight(\Fout')$.
  Indeed, if it were possible to add any set in $\tight(\Fout')$ to $\Lout'$ while maintaining laminarity, then that set could be mapped back to a set in $\tight(\Fout)$ and used to enlarge $\Lout$.
  
  Now,
  by assumption,
  for each $T \in \Lout$ which is a union of sets in $V(\Hin)$
  we have
  $\ab{z, \one_{\delta(T)}} = 0$.
  This is equivalent to saying that
  $\ab{z, \one_{\delta(T')}} = 0$
  for each $T' \in \Lout'$.
  By~\cref{respect_L_respect_F}
  (applied to $\Hin'$, $\Fout'$, and $\Lout'$),
  we have the same
  for all $T' \in \tight(\Fout')$.
  Finally, that is equivalent to having
  $\ab{z, \one_{\delta(T)}} = 0$
  for each $T \in \tight(\Fout)$
  which is a union of sets in $V(\Hin)$.
  \end{proof}

\end{proof}

\fi

\vspace{-0.9em}
\section*{Acknowledgment}
We thank the anonymous FOCS 2017 reviewers for their exceptionally detailed and insightful comments.
\vspace{-0.4em}

\ifieee
  This work is supported by the ERC Starting Grant 335288-OptApprox.
\fi

\ifieee\else
	\appendix
\section{Proof of \cref{face_structure}} \label{proof_of_face_structure}

The proof proceeds via the primal uncrossing technique; it is adapted from \cite{LauRS11}. 
Assume without loss of generality that $E = \suppo(F)$. We can do this since including the constraint $x_e = 0$ yields the same face as removing the edge $e$ from $G$.
We begin with an uncrossing lemma.

%
%

\begin{lemma}[uncrossing] \label{uncrossing}
Let $S, T \in \cS(F)$ be two sets which are crossing (i.e., $S \cap T, S \setminus T, T \setminus S \ne \emptyset$). Then:
\begin{itemize}
	\item if $|S \cap T|$ is odd: then $S \cap T, S \cup T \in \cS(F)$ and $\one_{\delta(S)} + \one_{\delta(T)} = \one_{\delta(S \cap T)} + \one_{\delta(S \cup T)}$,
	\item otherwise: $S \setminus T, T \setminus S \in \cS(F)$ and $\one_{\delta(S)} + \one_{\delta(T)} = \one_{\delta(S \setminus T)} + \one_{\delta(T \setminus S)}$.
\end{itemize}
\end{lemma}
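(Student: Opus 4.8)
The plan is to run the standard primal uncrossing argument. First I would fix a point $\xs$ in the relative interior of $F$ --- concretely, the barycenter $\xs = \frac{1}{N}\sum_{v} v$ of the (finitely many, $0/1$-valued) extreme points $v$ of the nonempty face $F$. For any inequality $\langle a,x\rangle \ge b$ valid for $\PM$, it is tight at $\xs$ if and only if it is tight on all of $F$: if $\langle a, v\rangle \ge b$ for every extreme point $v$ of $F$ and the average of these numbers equals $b$, then each equals $b$, and the converse is immediate since $F$ is their convex hull. Hence, for odd $S \subseteq V$, $S \in \tight(F)$ iff $\xs(\delta(S)) = 1$; likewise $\xs_e > 0$ precisely for $e \in \suppo(F)$. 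As in the ambient proof of \cref{face_structure}, we may assume $E = \suppo(F)$, so that $\xs_e > 0$ for every $e \in E$.

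Next I would record two combinatorial identities that hold as vectors in $\bR^E$ for arbitrary $S, T \subseteq V$, writing $E(X,Y)$ for the set of edges with one endpoint in $X$ and the other in $Y$:
\begin{align*}
\one_{\delta(S)} + \one_{\delta(T)} &= \one_{\delta(S\cap T)} + \one_{\delta(S\cup T)} + 2\,\one_{E(S\setminus T,\,T\setminus S)}, \\
\one_{\delta(S)} + \one_{\delta(T)} &= \one_{\delta(S\setminus T)} + \one_{\delta(T\setminus S)} + 2\,\one_{E(S\cap T,\,V\setminus(S\cup T))}.
\end{align*}
Both are verified by a finite case check: partition $V$ into the four cells $S\cap T$, $S\setminus T$, $T\setminus S$, $V\setminus(S\cup T)$, and compare, for each type of edge (according to which cells its two endpoints lie in), the coefficient on the two sides.

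Now I would split into the two cases of the lemma. Since $|S|$ and $|T|$ are odd, $|S\cap T|$ and $|S\cup T|$ have equal parity, whereas $|S\setminus T| = |S| - |S\cap T|$ and $|T\setminus S| = |T| - |S\cap T|$ have parity opposite to $|S\cap T|$. If $|S\cap T|$ is odd, then $S\cap T$ and $S\cup T$ are odd (and nonempty, as $S,T$ cross), so the odd-cut inequalities of $\PM$ give $\xs(\delta(S\cap T)), \xs(\delta(S\cup T)) \ge 1$; evaluating $\langle \xs, \cdot\rangle$ on both sides of the first identity yields
\[ 2 = \xs(\delta(S)) + \xs(\delta(T)) = \xs(\delta(S\cap T)) + \xs(\delta(S\cup T)) + 2\,\xs\bigl(E(S\setminus T,\,T\setminus S)\bigr) \ge 1 + 1 + 0 = 2, \]
so equality holds throughout. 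Thus $\xs(\delta(S\cap T)) = \xs(\delta(S\cup T)) = 1$, i.e.\ $S\cap T, S\cup T \in \tight(F)$, and $\xs(E(S\setminus T, T\setminus S)) = 0$; since $\xs_e > 0$ for every $e \in E$, this forces $E(S\setminus T, T\setminus S) = \emptyset$, so the extra term in the first identity vanishes and $\one_{\delta(S)} + \one_{\delta(T)} = \one_{\delta(S\cap T)} + \one_{\delta(S\cup T)}$. If instead $|S\cap T|$ is even, then $S\setminus T$ and $T\setminus S$ are odd and nonempty, so $\xs(\delta(S\setminus T)), \xs(\delta(T\setminus S)) \ge 1$; the identical argument applied to the second identity gives $S\setminus T, T\setminus S \in \tight(F)$, $E(S\cap T, V\setminus(S\cup T)) = \emptyset$, and the claimed equation.

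The argument is essentially routine; the only care needed is the parity bookkeeping --- checking that the sets produced in each case have odd cardinality, so that the $\PM$ odd-cut constraints actually apply to them --- and keeping the reduction $E = \suppo(F)$ in force, so that the positivity of $\xs$ can be used to kill the cross-edge term. The one genuinely mechanical obstacle is verifying the two vector identities, which I would handle by the exhaustive cell-by-cell comparison indicated above.
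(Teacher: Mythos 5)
Your proposal is correct and follows essentially the same route as the paper's proof: the same two vector identities, the same parity bookkeeping, and the same argument that the sandwiched inequality forces tightness of the new cuts and emptiness of the cross-edge set under the reduction $E = \suppo(F)$. The only cosmetic difference is that you evaluate at a relative-interior point $\xs$ rather than quantifying over all $x \in F$ directly, which is an equivalent (and equally valid) way to handle the tightness-on-the-whole-face conclusion.
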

\begin{proof}
\textbf{Case $|S \cap T|$ odd.}
Note that we have
\[ \one_{\delta(S)} + \one_{\delta(T)} = \one_{\delta(S \cap T)} + \one_{\delta(S \cup T)} + 2 \cdot \one_{\delta(S \setminus T, T \setminus S)}. \]
For any $x \in F$, since $S, T \in \cS(F)$ and because $S \cap T$, $S \cup T$ are nonempty odd sets, we have
\[ 1 + 1 = x(\delta(S)) + x(\delta(T)) = x(\delta(S \cap T)) + x(\delta(S \cup T)) + 2 \cdot x(\delta(S \setminus T, T \setminus S)) \ge 1 + 1 + 2 \cdot 0 \]
where the inequality must be an equality, and thus $x(\delta(S \cap T)) = 1$, $x(\delta(S \cup T)) = 1$ (implying $S \cap T, S \cup T \in \cS(F)$) and $x(\delta(S \setminus T, T \setminus S)) = 0$ for all $x \in F$ (which, given that $E = \suppo(F)$, implies that $\delta(S \setminus T, T \setminus S) = \emptyset$ and thus $\one_{\delta(S \setminus T, T \setminus S)} = 0$).

\textbf{Case $|S \cap T|$ even.}
Now we have
\[ \one_{\delta(S)} + \one_{\delta(T)} = \one_{\delta(S \setminus T)} + \one_{\delta(T \setminus S)} + 2 \cdot \one_{\delta(S \cap T, V \setminus (S \cup T))}. \]
The sets $S \setminus T$ and $T \setminus S$ are odd and nonempty, and we proceed as above.
\end{proof}

\newcommand{\cross}{\mathrm{cross}}
Define $\cross(S, \cL)$ to be the number of sets in $\cL$ that cross $S$.

\begin{proposition} \label{intersects_are_smaller}
If $S \not \in \cL$ and $T \in \cL$ are crossing, then all four numbers $\cross(S \cap T, \cL)$, $\cross(S \cup T, \cL)$, $\cross(S \setminus T, \cL)$ and $\cross(T \setminus S, \cL)$ are smaller than  $\cross(S, \cL)$.
\end{proposition}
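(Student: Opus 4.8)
The plan is to prove, for each of the four sets $U \in \{S\cap T,\ S\cup T,\ S\setminus T,\ T\setminus S\}$, the containment
\[
\{R \in \cL : R \text{ crosses } U\} \ \subseteq\ \{R \in \cL : R \text{ crosses } S\} \setminus \{T\},
\]
which immediately yields $\cross(U,\cL) \le \cross(S,\cL) - 1 < \cross(S,\cL)$. Note that $S \notin \cL$ plays no role in the counting except that $S$ is not among the sets being counted.

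First I would dispose of the easy half: $T$ crosses $S$ by hypothesis, but $T$ crosses \emph{none} of the four sets $U$, since $T \supseteq S\cap T$, $T \subseteq S\cup T$, $T \cap (S\setminus T) = \emptyset$, and $T \supseteq T\setminus S$ — in each case $T$ is comparable to, or disjoint from, $U$. This shows $T$ lies in the right-hand collection but not the left-hand one.

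The heart of the argument is the claim: if $R \in \cL$ crosses $U$, then $R$ crosses $S$. Here I would use laminarity of $\cL$: since $R, T \in \cL$, the sets $R$ and $T$ are disjoint or nested. In each of the four cases for $U$, I split into the subcases $R\cap T = \emptyset$, $T \subseteq R$, $R \subsetneq T$. Two of the three subcases are ruled out as incompatible with "$R$ crosses $U$"; for example, when $U = S\cap T$ the subcase $R\cap T = \emptyset$ contradicts $R\cap U \ne \emptyset$, and $T \subseteq R$ contradicts $U\setminus R \ne \emptyset$ (as $U \subseteq T \subseteq R$). In the surviving subcase one verifies the three crossing conditions for the pair $(R,S)$ directly, each being a one-line set inclusion: the needed nonemptiness either comes straight from a crossing condition for $(R,U)$ (e.g.\ $R\setminus S \supseteq R\cap(T\setminus S) \ne \emptyset$ when $U = T\setminus S$ and $R \subsetneq T$, or $R\cap S = R\setminus(S\setminus T)\ne\emptyset$ when $R\cap T = \emptyset$), or from the hypothesis that $S$ and $T$ cross, which supplies $S\setminus T \ne \emptyset$ and $T\setminus S \ne \emptyset$ exactly when they are required (e.g.\ $S\setminus R \supseteq S\setminus T \ne\emptyset$ when $R \subseteq T$).

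Combining the two halves gives the displayed containment for every $U$, and hence $\cross(U,\cL) < \cross(S,\cL)$ for all four of $S\cap T$, $S\cup T$, $S\setminus T$, $T\setminus S$. The only real obstacle is the bookkeeping in the $3\times 4$ case analysis; no individual subcase is difficult, but one must consistently invoke both the laminarity of $\cL$ (to eliminate the possibility that $R$ crosses $T$) and the hypothesis that $S$ and $T$ cross (to obtain nonemptiness of $S\setminus T$ and $T\setminus S$ where needed).
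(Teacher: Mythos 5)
Your argument is correct, but note that the paper does not actually prove this proposition: its entire ``proof'' is a pointer to Claim~9.1.6 of the Lau--Ravi--Singh book, so you are supplying the standard argument that the citation hides. Your structure --- show that $T$ crosses $S$ but none of the four derived sets, and that any $R\in\cL$ crossing one of the four derived sets must cross $S$, using laminarity of $\cL$ to reduce to the three subcases $R\cap T=\emptyset$, $R\subseteq T$, $T\subseteq R$ and the hypothesis $S\setminus T, T\setminus S, S\cap T\neq\emptyset$ to fill in the remaining nonemptiness conditions --- is exactly the standard proof, and every one of the twelve subcases does check out (I verified them all). One small slip in your illustrative parentheses: when $R\cap T=\emptyset$ one has $R\setminus(S\setminus T)=R\setminus S$, not $R\cap S$; the identity you want there is $R\cap S\supseteq R\cap(S\setminus T)$ (or, for the other condition, $R\setminus S=R\setminus(S\setminus T)$). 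This is a typo in an example, not a gap in the argument. So your proof is a correct, self-contained replacement for the paper's external citation.
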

\begin{proof}
See Claim 9.1.6 in \cite{LauRS11}.
\end{proof}

Now we can prove \cref{face_structure}.
Towards a contradiction suppose that $\Span(\cL) \subsetneq \Span(\cS(F))$. Then there exists $S \in \cS(F)$ with $\one_{\delta(S)} \not \in \Span(\cL)$. Pick such a set with minimum $\cross(S, \cL)$. Clearly $\cross(S, \cL) \ge 1$, for otherwise $\cL \cup \{S\}$ would be laminar, contradicting maximality of $\cL$. Let $T \in \cL$ be a set crossing $S$. Assume that $|S \cap T|$ is odd; the other case is analogous.
Then by \cref{uncrossing}, $S \cap T, S \cup T \in \cS(F)$ and 
\begin{equation} \label{equation_on_deltas}
\one_{\delta(S)} + \one_{\delta(T)} = \one_{\delta(S \cap T)} + \one_{\delta(S \cup T)}.
\end{equation} By \cref{intersects_are_smaller} and our choice of $S$ we have $\one_{\delta(S \cap T)}, \one_{\delta(S \cup T)} \in \Span(\cL)$, and of course also $\one_{\delta(T)} \in \Span(\cL)$. This and \eqref{equation_on_deltas} implies that $\one_{\delta(S)} \in \Span(\cL)$, a contradiction.
\qedmanual

\fi

\ifieee
  \bibliographystyle{IEEEtran}
  \bibliography{IEEEabrv,references-no-url}
\else
  \bibliographystyle{alpha}
  \bibliography{references}
\fi

\end{document}